\newtheorem{theorem}{Theorem}[section]
\newtheorem{corollary}[theorem]{Corollary}
\newtheorem{fact}[theorem]{Fact}
\newtheorem{lemma}[theorem]{Lemma}
\newtheorem{definition}[theorem]{Definition}
\newtheorem{remark}[theorem]{Remark}
\newtheorem{assumption}[theorem]{Assumption}
\newcommand{\norm}[1]{\left\| #1 \right\|}
\newcommand{\Dminushalf}{U^{-\frac{1}{2}}}
\NewDocumentCommand\muG{gg}{\ensuremath{\mu\IfNoValueTF{#1}{}{_{#1}}\IfNoValueTF{#2}{}{(#2)}}}
\NewDocumentCommand\muGG{g}{\ensuremath{\mu\IfNoValueTF{#1}{}{(#1)}}}
\newcommand{\ter}{\mathcal{T}} 
\newcommand{\muBound}{K}
\newcommand{\Uminushalf}{U^{-\frac{1}{2}}}
\newcommand{\Uhalf}{U^{\frac{1}{2}}}
\newcommand{\normalized}[1]{\Uminushalf #1 \Uminushalf}
\newcommand{\EE}{\mathop{\mathbb E}} 
\newcommand{\RR}{\mathop{\mathbb R}}
\newcommand{\NN}{\mathop{\mathbb N}}
\newcommand{\CC}{\mathop{\mathbb C}}
\newcommand{\1}{\mathbbm{1}}
\newcommand{\ie}{{\it i.e.}}
\newcommand{\etal}{{\it et al.}}
\newcommand*\comp[1]{\overline{#1}}
\DeclareMathOperator{\vol}{\mathbf{vol}}
\DeclareMathOperator{\poly}{\mathbf{poly}}
\DeclareMathOperator{\tr}{Tr}
\DeclareMathOperator{\diag}{\mathbf{diag}}
\newcommand{\defeq}{\mathrel{\mathop:}=}
\newcommand\numeq[1]{\stackrel{\scriptscriptstyle(\mkern-1.5mu#1\mkern-1.5mu)}{=}}
\newcommand\numge[1]{\stackrel{\scriptscriptstyle(\mkern-1.5mu#1\mkern-1.5mu)}{\ge}}
\NewDocumentCommand\F{gg}{\ensuremath{F\IfNoValueTF{#1}{}{_{#1}}\IfNoValueTF{#2}{}{(#2)}}}
\NewDocumentCommand\M{gg}{\ensuremath{M\IfNoValueTF{#1}{}{_{#1}}\IfNoValueTF{#2}{}{(#2)}}}
\NewDocumentCommand\N{gg}{\ensuremath{N\IfNoValueTF{#1}{}{_{#1}}\IfNoValueTF{#2}{}{(#2)}}}
\NewDocumentCommand\W{gg}{\ensuremath{W\IfNoValueTF{#1}{}{_{#1}}\IfNoValueTF{#2}{}{(#2)}}}
\NewDocumentCommand\D{gg}{\ensuremath{D\IfNoValueTF{#1}{}{_{#1}}\IfNoValueTF{#2}{}{(#2)}}}
\NewDocumentCommand\A{gg}{\ensuremath{A\IfNoValueTF{#1}{}{_{#1}}\IfNoValueTF{#2}{}{(#2)}}}
\NewDocumentCommand\B{gg}{\ensuremath{B\IfNoValueTF{#1}{}{_{#1}}\IfNoValueTF{#2}{}{(#2)}}}
\NewDocumentCommand\X{gg}{\ensuremath{X\IfNoValueTF{#1}{}{_{#1}}\IfNoValueTF{#2}{}{(#2)}}}
\NewDocumentCommand\Y{gg}{\ensuremath{Y\IfNoValueTF{#1}{}{_{#1}}\IfNoValueTF{#2}{}{(#2)}}}
\NewDocumentCommand\Z{gg}{\ensuremath{Z\IfNoValueTF{#1}{}{_{#1}}\IfNoValueTF{#2}{}{(#2)}}}
\NewDocumentCommand\R{gg}{\ensuremath{R\IfNoValueTF{#1}{}{_{#1}}\IfNoValueTF{#2}{}{(#2)}}}
\NewDocumentCommand\Smat{gg}{\ensuremath{S\IfNoValueTF{#1}{}{_{#1}}\IfNoValueTF{#2}{}{(#2)}}}
\NewDocumentCommand\U{gg}{\ensuremath{U\IfNoValueTF{#1}{}{_{#1}}\IfNoValueTF{#2}{}{(#2)}}}
\NewDocumentCommand\Hmat{gg}{\ensuremath{H\IfNoValueTF{#1}{}{_{#1}}\IfNoValueTF{#2}{}{(#2)}}}
\NewDocumentCommand\Pmat{gg}{\ensuremath{P\IfNoValueTF{#1}{}{_{#1}}\IfNoValueTF{#2}{}{(#2)}}}
\NewDocumentCommand\I{gg}{\ensuremath{I\IfNoValueTF{#1}{}{_{#1}}\IfNoValueTF{#2}{}{(#2)}}}
\NewDocumentCommand\dG{gg}{\ensuremath{\mu\IfNoValueTF{#1}{}{_{#1}}\IfNoValueTF{#2}{}{(#2)}}}
\NewDocumentCommand\Matrix{mm}{\ensuremath{\left(#1\right)(#2)}}
\NewDocumentCommand\MatrixSimple{mm}{\ensuremath{#1(#2)}}
\NewDocumentCommand\dGG{g}{\ensuremath{\mu\IfNoValueTF{#1}{}{(#1)}}}
\NewDocumentCommand\vv{g}{\ensuremath{v\IfNoValueTF{#1}{}{(#1)}}}
\begin{document}
	\title{Expander Decomposition for Non-Uniform Vertex Measures\thanks{This work was supported in part by Israel Science Foundation grant no.\ 1595-19, 1156-23, and the Blavatnik Family Foundation.}
	}
	
	\author{Daniel Agassy\thanks{Tel Aviv University,
danielagassy@mail.tau.ac.il} \and Dani Dorfman\thanks{Max Planck Institute for Informatics, Saarbr\"ucken , Germany,
ddorfman@mpi-inf.mpg.de} 	\and Haim Kaplan\thanks{Tel Aviv University, haimk@tau.ac.il}}
	
	\maketitle
	
	\noindent

\begin{abstract}
A $(\phi,\epsilon)$-expander-decomposition of a graph $G$ (with $n$ vertices and $m$ edges) is a partition of $V$ into clusters $V_1,\ldots,V_k$ with conductance $\Phi(G[V_i]) \ge \phi$, such that there are at most $\epsilon m$ inter-cluster edges. Such a decomposition plays a crucial role in many graph algorithms. \cite{agassy2022expander}  gave a randomized $\tilde{O}(m)$ time algorithm for computing a $(\phi, \phi\log^2 {n})$-expander decomposition.

\looseness = -1

In this paper we generalize this result for a broader notion of expansion. Let $\muG \in \RR_{\ge 0 }^n$ be a vertex measure. A standard generalization of conductance of a cut $(S,\comp{S})$ is its $\muG$-expansion $\Phi^{\muG}_G(S,\comp{S}) = |E(S,\comp{S})|/\min \{\muG{}{S},\muG{}{\comp{S}}\}$, where $\muG(S) = \sum_{v\in S} \muG{}{v}$. \\
We present a randomized $\tilde{O}(m)$ time algorithm for computing a $(\phi, \phi \log^2 {n}\cdot\frac{\muG{}{V}}{m})$-expander decomposition with respect to $\muG$-expansion.
A substantial portion of the exposition is adapted from~\cite{agassy2022expander}, and this work serves as a convenient reference for the generalized expander decomposition.

\end{abstract}

\section{Introduction}
\looseness=-1
The \emph{conductance} of a cut  $(S, \comp{S})$ is $\Phi_{G}(S,\comp{S}) = \frac{|E(S,\comp{S})|}{\min(\vol(S), \vol(\comp{S}))}$, where $\vol(S)$ is the sum of the degrees of the vertices of $S$. The  conductance of a graph $G$ is the smallest  conductance of a cut in $G$. 

A \emph{$(\phi,\epsilon)$-expander decomposition} of a graph $G$ is a partition of the vertices of $G$ into clusters $V_1,\ldots,V_k$, each with conductance $\Phi(G[V_i]) \ge \phi$ such that there are at most $O(\epsilon m)$ inter-cluster edges, 
where $\phi,\epsilon\ge 0$. Consider the problem of computing 
a  $(\phi,\epsilon)$-expander decomposition for a given graph $G$ and $\phi>0$, while minimizing $\epsilon$ as a function of $\phi$.  
It is known that a $(\phi,\epsilon)$-expander decomposition, with $\epsilon = O(\phi\log n)$, always exists and that $\epsilon = \Theta(\phi\log n)$ is best possible~\cite{saranurak2019expander,alev2017graph}. In a recent result, \cite{agassy2022expander} achieved a $(\phi, \phi\log^2 {n})$-expander decomposition in $\tilde{O}(m)$ time. 

The following is an overview on the result of \cite{agassy2022expander} which  builds on~\cite{saranurak2019expander}. The naive way to compute an expander decomposition is to use an approximation algorithm for the sparsest cut problem. Given an $f(n)$-approximation algorithm for the problem of finding a minimum conductance cut, one can get a $(\phi, O(f(n)\cdot\phi\log n))$-expander decomposition algorithm by recursively computing approximate cuts until all components are certified as expanders. The drawback of this approach is a possible linear recursion depth (because of unbalanced cuts - one side has very small volume) which affects the running time. To solve this problem, Saranurak and Wang~\cite{saranurak2019expander} developed a sparse cut algorithm that either returns a \emph{balanced} sparse cut or an unbalanced sparse cut in which the larger side is an expander (and therefore they only need to recur on the smaller side of the cut). Their algorithm is based on the Cut-Matching Game~\cite{khandekar2009graph}. 

{\bf Cut-matching:} 
The cut-matching game, played by 
a \emph{cut player} and a \emph{matching player} to construct an expander, is a well known technique to reduce the approximation task for sparsest cut to a polylogarithmic number of maximum flow problems~\cite{khandekar2009graph}.
 Khandekar \etal~\cite{khandekar2009graph} devised a strategy for the cut player that translates to an $O(\log^2 {n})$ approximation algorithm for the sparsest cut problem. 
 This cut player of~\cite{khandekar2009graph} is robust, in the sense that it is simple and can be flexibly adapted  for other applications~\cite{racke2014computing,saranurak2019expander}.
 A more sophisticated spectral cut player was devised by Orecchia \etal~\cite{orecchia2008partitioning} which yields $O(\log{n})$ approximation. This cut player is less flexible and harder to modify.

By modifying the cut-matching game and the cut player of~\cite{khandekar2009graph}, \cite{saranurak2019expander} achieved a $(\phi, \phi \log^3{n})$-expander decomposition in $\tilde{O}(\frac{m}{\phi})$ time. In a recent result~\cite{agassy2022expander} 
were able by a careful spectral analysis, to combine the techniques of~\cite{saranurak2019expander} and the spectral cut player of~\cite{orecchia2008partitioning} to get a $(\phi, \phi \log^2{n})$-expander decomposition in $\tilde{O}(\frac{m}{\phi})$ time. Using the recent fair-cuts result~\cite{li2022fair}, both expander decomposition algorithms~\cite{saranurak2019expander,agassy2022expander} can be implemented in $\tilde{O}(m)$ time.

A useful generalization of conductance is $\muG$-expansion. Let $\muG \in V\rightarrow \RR_{\ge 0}^n$ be a vertex measure.  The $\muG$-expansion of a cut  $(S, \comp{S})$ is $\Phi^{\muG}_{G}(S,\comp{S}) = \frac{|E(S,\comp{S})|}{\min(\muG{}{S}, \muG{}{\comp{S}})}$, where $\muG{}{S} = \sum_{v\in S}\muG{}{v}$.
If $\min(\muG{}{S}, \muG{}{\comp{S}}) = 0$ then $\Phi^{\muG}_{G}(S,\comp{S})= \infty$.  Note that by setting $\muG{}{v} = \deg_G(v)$, we get conductance. The $\muG$-expansion of $G$ is defined as $\Phi^{\muG}(G) = \min_{S \subseteq V} \Phi^{\muG}_{G}(S,\comp{S})$. We say that $G$ is a $(\phi, \muG)$-expander if $\Phi^{\muG}(G) \ge \phi$.

A \emph{$(\phi,\epsilon,\muG)$-expander decomposition} of a graph $G$ is a partition of the vertices of $G$ into clusters $V_1,\ldots,V_k$, each with $\muG$-expansion $\Phi^{\muG}(G[V_i]) \ge \phi$ such that there are at most $O(\epsilon m)$ inter-cluster edges, where $\phi,\epsilon\ge 0$ and $\muG \in V\rightarrow \RR_{\ge 0}^n$.

{\bf Our Contribution:} 
In this paper we generalize the result of~\cite{agassy2022expander} to the notion of $\muG$-expansion and present an $\tilde{O}(m)$ algorithm for computing a $(\phi, \phi \log^2 {n} \frac{\muG{}{V}}{m})$-expander decomposition with respect to $\muG$-expansion. The generalization is quite straightforward; we derive it by 
modifying a simplified version of~\cite{agassy2022expander} (this simplification is possible due to a recent result on fair-cut~\cite{LNPSsoda13}). As an immediate consequence of our balanced sparse cut procedure, used in the expander decomposition algorithm (see Lemma~\ref{theorem:main-result}), we get an $O(\log{n})$ approximation algorithm to the balanced sparse cut problem with respect to a vertex measure $\muG \in \RR_{\ge 0}^n$.

 The core ideas of this paper are the same as in our previous work~\cite{agassy2022expander}. However, the generalization we pursue here alters the technical development and many internal details substantially. For this reason, rather than layering modifications onto~\cite{agassy2022expander}, we provide a fresh and self-contained account of the result. Several passages are therefore intentionally adapted from~\cite{agassy2022expander} to maintain continuity and clarity of exposition.

{\bf Related work:}     
In a recent paper, Ameranis \etal~\cite{pmlr-v162-orecchia22a} use a generalized notion of $\muG$-expansion, also mentioned in \cite{orecchia2011fast}, where they allowed to consider overlapping partitions of $V$. 
They define a corresponding generalized version of the cut-matching game, and show how to use a cut strategy for this game to get an approximation algorithm for two generalized cut problems.
They claim that one can construct a cut strategy for this measure using ideas from \cite{orecchia2011fast}.\footnote{The details of such a cut player do not appear in \cite{pmlr-v162-orecchia22a} or \cite{orecchia2011fast}.}

\smallskip
\looseness = -1
The structure of this paper is as follows.
Section~\ref{section:preliminaries} contains basic definitions.
In section~\ref{section:main-result} we present a version of the cut-matching game, together with a cut player and a matching player which are suitable for our expander decomposition algorithm with respect to a vertex measure $\muG$. Finally, Section~\ref{section:proof-main-theorem} contains the balanced cut algorithm, from which we obtain an expander decomposition in a standard  way which is described in Section~\ref{section:expander_decomposition}. Appendix~\ref{appendix-proofs} contains proofs that were deferred from the main text. Appendices~\ref{appendix:matrix_inequalities} and~\ref{appendix:projection} contain algebraic tools and probabilistic lemmas, respectively.

	\section{Preliminaries}
	\label{section:preliminaries}
	
    Throughout the paper we work with the undirected graph $G = (V,E)$. Let $n\defeq |V|$ and $m \defeq |E|$.
	We denote the transpose of a vector or a matrix $x$ by $x'$. That is if $v$ is a column vector then $v'$ is the corresponding row vector. 
	For a vector $v\in \RR_{\ge 0}^k$, define
	$\sqrt{v}$ to be vector whose coordinates are the square roots of those of $v$. Given $A\in \RR^{k\times l}$, we denote by $\A{}{i,j}$ the element at the $i$'th row and $j$'th column of $\A$. We denote by $\A{}{i,}, \A{}{,i}$ the $i$'th row and column of $A$, respectively. We think of both $\A{}{i,}$ and $\A{}{,i}$ as column vectors. 
	We use the abbreviation $\A{}{i}\defeq \A{}{i,}$ only with respect to the rows of $A$. Given a vector $v\in \RR^k$, we denote its $i$'th element by $\vv{i}$. For disjoint $A,B \subseteq V$, we denote by $E_G(A,B)$ the set of edges connecting $A$ and $B$. We denote by $|E_G(A,B)|$ the number of edges in $E_G(A,B)$, or the sum of their weights if the graph is weighted. We sometimes omit the subscript when the graph is clear from the context. For a weighted graph $H= (V,E,w)$, we define $\deg_H(v) = w\left(E_H(\{v\},V\setminus\{v\})\right)$.
    If $A = V \setminus B $, then we call $(A,B)$ a \emph{cut}. Throughout the paper we work with an arbitrary vertex measure $\muG: V \to \RR_{\ge 0}$.\footnote{We identify the set $V$ with $[n]$, and therefore we sometimes write $\muG\in \RR_{\ge 0}^n$.}
    For $S\subseteq V$ we let $\muG(S) = \sum_{v\in S}\muG(v)$.
    We define the set of \emph{terminals} to be $\ter = \{v \in V \mid \muG{}{v} > 0\}$. We assume the following on $\muG$. This assumption is mainly used in the proof of Lemma \ref{lemma:potential_step_2}.
 \begin{assumption}\label{assumption:mu}
     For every $v\in \ter$ we have $\frac{1}{\poly(n)} \le \muG(v) \le \poly(n)$.
 \end{assumption}

\begin{definition}[$\mu$-Matching]
    \label{def:d-matching}
    Given a vertex measure $\muG \in \RR_{\ge 0}^n$ and a collection of weighted pairs $M = \left\{ (u_i,v_i,w_i) \right\}_{i=1}^{k}$. We say that $M$ is a $\muG$-\emph{matching} if the graph defined by $M$ satisfies $\deg_{M}(v) = \dG{}{v}$, for every $v$. 
    \end{definition}
\begin{definition} [$\muGG$-stochastic]
    \label{def:d_G_stochastic}
        Let $\muGG \in \RR_{\ge 0}^n$ be a vertex measure. A matrix $\F\in \RR^{n\times n}$ is called \emph{$\muGG$-stochastic} if the following two conditions hold: (1) $F\cdot\1_n = \dGG $ and (2) $   \1_n'\cdot F = \dGG' $.
    \end{definition}
\begin{definition}[Laplacian, Normalized Laplacian]
	\label{def:laplacian_normalized_laplacian}
	    Let $A\in \RR^{n \times n}$ be a symmetric matrix and let $d = A\cdot \1_n,\; D= \diag(d)$. The Laplacian of $A$ is defined as $\mathcal{L}(A) = D - A$. The normalized-Laplacian of $A$ is defined as $\mathcal{N}(A) = D^{-\frac{1}{2}} \mathcal{L}(A) D^{-\frac{1}{2}}= I - D^{-\frac{1}{2}} A D^{-\frac{1}{2}}$. The (normalized) Laplacian of an undirected graph is defined analogously using its adjacency matrix. 
	\end{definition}
\begin{definition}[$\muG$-expansion]
	Let $G= (V,E)$ and $S \subset V$, $S\neq\emptyset$. Let $\muG \in \RR_{\ge 0}^n$. The \emph{$\muG$-expansion of the cut $(S, \comp{S})$} satisfying $\min(\muG{}{S}, \muG{}{\comp{S}}) >0$,  denoted by  $\Phi^{\muG}_{G}(S,\comp{S})$, is 
	\[
	\Phi^{\muG}_{G}(S,\comp{S}) = \frac{|E(S,\comp{S})|}{\min(\muG{}{S}, \muG{}{\comp{S}})}.
	\]
	The \emph{$\muG$-expansion of $G$} is defined to be 
    \[ \Phi^{\muG}(G) = \min_{\substack{S\subseteq V \\ \min(\muG{}{S}, \muG{}{\comp{S}})>0}}\Phi_{G}(S,\comp{S}) \].
	\end{definition}

\begin{definition}[$\muG$-Expander, $\muG$-Near-Expander]
	\label{def:expander-near-expander}
	Let $G = (V,E)$. We say that $G$ 
	is a $(\phi,\muG)$-expander if $\Phi^{\muG}(G) \ge \phi$. Let $A\subseteq V$. We say that $A$ is a \emph{near} $(\phi, \muG)$-\emph{expander}
	in $G$ if 
	\[
	\min_{\substack{S \subseteq A \\ \min(\muG{}{S}, \muG{}{A\setminus S})>0}}\frac{|E(S,V \setminus  S)|}{\min(\muG{}{S}, \muG{}{A\setminus S})}\ge \phi.
	\]
	\end{definition}

 \looseness=-1
    That is, a near expander is allowed to use cut edges that go outside of $A$. Note that a subset of a near expander is also a near expander. This notion is useful in several contexts. Specifically, using ``trimming" of~\cite{saranurak2019expander}, under certain conditions on $A$, one can extract an expander $B\subseteq A$ by ``trimming" a part of $A$ (see Theorem~\ref{theorem:trimming}).

\begin{definition} [Embedding]
    \label{def:embedding}
        Let $G=(V,E)$ be an undirected  graph. Let $F\in \RR^{V\times V}_{\ge 0}$ be a matrix (not necessarily symmetric). We say that $F$ is \textit{embeddable} in $G$ \textit{with congestion} $c$, if there exists a multi-commodity flow $f$ in $G$, with $|V|$ commodities, one for each vertex (the vertex $v$ is the source of its commodity), such that, simultaneously for each $(u,v)\in V\times V$, $f$ routes $F(u,v)$ units of $u$'s commodity from $u$ to $v$, and the total flow on each edge is at most $c$ times the capacity of the edge. 
        
        If $F$ is the weighted adjacency matrix of a graph $H$ on the same vertex set $V$, we say that $H$ is \textit{embeddable} in $G$ \textit{with congestion} $c$ if $F$ is embeddable in $G$ with congestion $c$.\footnote{This definition requires to route $F(u,v)=F(v,u)$ units both from $u$ to $v$ and from $v$ to $u$ if $F$ is symmetric.} 
    \end{definition}
\begin{lemma}
    \label{lemma:near_expansion_and_embedding}
    \looseness = -1
        Let $G,H$ be two graphs on the same vertex set $V$. Let $A\subseteq V$. Assume that $H$ is embeddable in $G$ with congestion $c$, and that $A$ is a near $(\phi,\mu)$-expander in $H$. Then, $A$ is a near $(\frac{2\phi}{c}, \mu)$-expander in $G$.
    \end{lemma}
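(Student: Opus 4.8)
The plan is to verify the near-expander condition one cut at a time. Fix an arbitrary $S\subseteq A$ with $\mu(S),\mu(A\setminus S)>0$ (if the minimum is $0$ the relevant ratio is $\infty$ and there is nothing to check); it suffices to show $|E_G(S,V\setminus S)|\ge \frac{2\phi}{c}\min(\mu(S),\mu(A\setminus S))$. Since $A$ is a near $(\phi,\mu)$-expander in $H$, we already have $|E_H(S,V\setminus S)|\ge \phi\,\min(\mu(S),\mu(A\setminus S))$, so the whole task reduces to the purely combinatorial inequality $c\,|E_G(S,V\setminus S)|\ge 2\,|E_H(S,V\setminus S)|$ relating the size of the same cut in $H$ and in $G$ through the embedding.

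To prove that inequality I would use the multicommodity flow $f$ witnessing that $H$ embeds in $G$ with congestion $c$, together with the convention from Definition~\ref{def:embedding} that for the symmetric adjacency matrix $F$ of $H$ one routes $F(u,v)$ units of $u$'s commodity from $u$ to $v$ for every ordered pair $(u,v)$. Consider the cut $(S,V\setminus S)$ in $G$. For each $u\in S$, the commodity of $u$ must deliver $\sum_{v\notin S}F(u,v)$ units from $u\in S$ to sinks in $V\setminus S$, so by flow conservation the flow of commodity $u$ that crosses the cut in the $S\to\overline S$ direction is at least $\sum_{v\notin S}F(u,v)\ge 0$; summing over $u\in S$, the total flow crossing the cut in the $S\to\overline S$ direction is at least $\sum_{u\in S,\,v\notin S}F(u,v)=|E_H(S,V\setminus S)|$. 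Applying the same argument to the commodities of the vertices of $V\setminus S$, and using symmetry of $F$ (so $\sum_{u\notin S,\,v\in S}F(u,v)=|E_H(S,V\setminus S)|$), the total flow crossing the cut in the $\overline S\to S$ direction is also at least $|E_H(S,V\setminus S)|$. Hence the edges of $E_G(S,V\setminus S)$ carry at least $2|E_H(S,V\setminus S)|$ units of flow in total, while the congestion bound says they carry at most $c\,|E_G(S,V\setminus S)|$. Combining these gives $c\,|E_G(S,V\setminus S)|\ge 2|E_H(S,V\setminus S)|$, and chaining with the near-expansion of $H$ yields $|E_G(S,V\setminus S)|\ge \frac{2\phi}{c}\min(\mu(S),\mu(A\setminus S))$. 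Since $S\subseteq A$ was arbitrary, $A$ is a near $(\frac{2\phi}{c},\mu)$-expander in $G$.

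The argument is short, and the only place that really needs care is the factor $2$: it is not slack, but comes precisely from the two-directional routing convention for symmetric $F$, so each $H$-edge across the cut forces $2F(u,v)$ units over the corresponding $G$-cut rather than $F(u,v)$. The one other point to state cleanly is the ``flow across a cut is at least the demand separated by the cut'' step, which here only uses flow conservation for a single commodity plus the fact that we lower-bound one crossing direction at a time (so the per-commodity lower bounds add up without cancellation); everything else is bookkeeping.
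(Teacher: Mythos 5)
Your proof is correct and follows essentially the same route as the paper's: reduce to the inequality $c\,|E_G(S,\bar S)|\ge 2|E_H(S,\bar S)|$ and then chain with the near-expansion of $H$. The only difference is that you spell out the flow-conservation argument behind the factor $2$, which the paper simply asserts with a footnote about the two-directional routing convention.
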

    \begin{proof}
        Let $S\subseteq A, \comp{S} = V\setminus S$ be a cut, and assume $0<\muG{}{S}\le\muG{}{A\setminus S}$. 
        In the embedding of $H$ in $G$, each edge $(u,v) \in E(H)$ corresponds to $w_H(u,v)$ units of flow, routed in $G$ from $u$ to $v$. Since each edge of $G$ carries at most $c$ units of flow (times its capacity, for weighted graphs)
        we get that $|E_G(S,\comp{S})|\ge \frac{2}{c}|E_H(S,\comp{S})|$.\footnote{Recall that by definition for every $\{u,v\}\in E(H)$ we send flow in $G$ in both directions.}
        Since $A$ is a near $(\phi, \muG)$-expander in $H$, we get that
        
        \begin{align*}
            \Phi^{\muG}_G(S,\comp{S}) = \frac{|E_G(S,\comp{S})|}{\muG{}{S}}\ge\frac{2}{c}\frac{|E_H(S,\comp{S})|}{\muG{}{S}} \ge \frac{2\phi}{c} .
        \end{align*}
    \end{proof}

\begin{corollary}
    \label{cor:expansion_and_embedding}
        Let $G,H$ be two graphs on the same vertex set $V$. Assume that $H$ is embeddable in $G$ with congestion $c$, and that $H$ is a $(\phi,\mu)$-expander. Then, $G$ is a $(\frac{2\phi}{c},\mu)$-expander.
    \end{corollary}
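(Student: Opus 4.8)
The plan is to derive Corollary~\ref{cor:expansion_and_embedding} as an immediate special case of Lemma~\ref{lemma:near_expansion_and_embedding}, since a $(\phi,\mu)$-expander is simply the strongest form of a near expander. Concretely, observe that saying $H$ is a $(\phi,\mu)$-expander means $\Phi^{\muG}(H) \ge \phi$, i.e. every cut $(S,\bar S)$ of $V$ has $|E_H(S,\bar S)| \ge \phi\min(\muG{}{S},\muG{}{\bar S})$. In particular, taking $A = V$ in Definition~\ref{def:expander-near-expander}, the set $A=V$ is a near $(\phi,\mu)$-expander in $H$: the defining inequality $\min_{S\subseteq V}\frac{|E(S,V\setminus S)|}{\min(\muG{}{S},\muG{}{V\setminus S})} \ge \phi$ is exactly the statement $\Phi^{\muG}(H)\ge\phi$.

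Next I would invoke Lemma~\ref{lemma:near_expansion_and_embedding} with this choice $A = V$. The hypotheses of the lemma are met: $H$ is embeddable in $G$ with congestion $c$ by assumption, and we have just argued that $V$ is a near $(\phi,\mu)$-expander in $H$. The lemma then yields that $V$ is a near $(\tfrac{2\phi}{c},\mu)$-expander in $G$, that is, $\min_{S\subseteq V}\frac{|E_G(S,V\setminus S)|}{\min(\muG{}{S},\muG{}{V\setminus S})} \ge \tfrac{2\phi}{c}$.

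Finally I would translate this back into the language of $\mu$-expansion: the left-hand side above is precisely $\Phi^{\muG}(G) = \min_{S\subseteq V}\Phi^{\muG}_G(S,V\setminus S)$, so we conclude $\Phi^{\muG}(G)\ge \tfrac{2\phi}{c}$, i.e.\ $G$ is a $(\tfrac{2\phi}{c},\mu)$-expander, as claimed.

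There is essentially no obstacle here; the only thing to be slightly careful about is the bookkeeping of which set plays the role of $A$ and that ``near $(\phi,\mu)$-expander in $H$ with $A=V$'' coincides verbatim with ``$(\phi,\mu)$-expander'', so that the hypothesis of Lemma~\ref{lemma:near_expansion_and_embedding} is genuinely satisfied. One could equivalently reprove it from scratch by the one-line congestion argument ($|E_G(S,\bar S)| \ge \tfrac{2}{c}|E_H(S,\bar S)|$ for every cut, since each $H$-edge routes in both directions in $G$ and each $G$-edge carries at most $c$ units of flow), but routing through the already-proved lemma is cleaner and is the natural way to present it.
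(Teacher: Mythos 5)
Your proposal is correct and is exactly the paper's argument: the paper also derives the corollary by applying Lemma~\ref{lemma:near_expansion_and_embedding} with $A=V$ and observing that near $(\phi,\mu)$-expansion of $V$ coincides with $(\phi,\mu)$-expansion. You have merely spelled out the identification more explicitly than the paper's one-line proof.
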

    \begin{proof}
        This follows from Lemma~\ref{lemma:near_expansion_and_embedding} by choosing $A = V$.
    \end{proof}

\section{Balanced Sparse-Cut via spectral Cut-Matching}
    \label{section:main-result}
In this section we state our main theorem and present our algorithm, whose analysis appears in the next section. The result of~\cite{agassy2022expander} for finding a balance sparse cut for the case of conductance is a special case of our theorem in which the vertex measure corresponds to degrees (\ie{} $\muG(v) = \deg_{G}(v)$ for every $v\in V$).

\begin{theorem}
    \label{theorem:main-result}
    \label{theorem:cut_matching}
    Given a graph $G=(V,E)$ with $n$ vertices and $m$ edges, a vertex measure $\muG\in \RR_{\ge 0}^n$ that satisfies Assumption~\ref{assumption:mu} and a parameter $\phi>0$,
        there exists a randomized algorithm which takes $\tilde{O}\left(m\right)$
        time\footnote{The $\tilde{O}$ hides $\log n$ factors .}
        and must end with one of the following three cases:
        \begin{enumerate}
            \item We certify that $G$ has $\muG$-expansion $\Phi^{\muG}(G)=\Omega(\phi)$.
            \item We find a cut $(R,A)$ in $G$ of $\muG$-expansion $\Phi^{\muG}_G(R,A)=O(\phi\log n)$, and $\muG{}{R}, \muG{}{A}$ are both $\Omega\!\left(\frac{\muG{}{V}}{\log n}\right)$, i.e, we find a relatively balanced low conductance cut.
            \item We find a cut $(R,A)$ with $\Phi^{\muG}_G(R,A)\le c_0 \phi \log n$ for some constant $c_0 > 0$, $0<\muG{}{R}\le \frac{\muG{}{V}}{10c_0\log n}$, and $A$ is a near $(\Omega(\phi),\muG)$-expander in $G$.
        \end{enumerate}
        The correctness of the algorithm holds with high probability.\footnote{By ``with high probability" we mean a probability of at least $1-\frac{1}{n^\ell}$, where $\ell$ is an arbitrary constant that affects our running time linearly.}.
    \end{theorem}

This theorem is a straightforward generalization of~\cite[Theorem 5.1]{agassy2022expander}. The main algorithmic difference is the usage of fractional matchings during the cut matching game and zeroing out in the random walk matrix rows/columns corresponding to vertices $v\notin \ter$. (Recall that
$\ter = \{v \in V \mid \muG{}{v} > 0\}$.)

To get Theorem~\ref{theorem:cut_matching}
we use the cut matching game, specified in the next subsection. Theorem~\ref{theorem:cut_matching} implies the following theorem.
\begin{theorem}
    \label{theorem:expander_decomposition}
        Given a graph $G = (V, E)$ with $n$ vertices and $m$ edges, a vertex measure $\muG\in \RR^n_{\ge 0}$ that satisfies Assumption \ref{assumption:mu}, and a parameter $\phi > 0$, there is a randomized algorithm that with high probability finds a partition of $V$ into clusters $V_1,..., V_k$ such that $\forall i: \Phi^{\muG}_{G[V_i]} = \Omega(\phi)$ and $\sum_{i}{|E(V_i, \comp{V_i})|} = O(\phi \muG{}{V} \log^2 n)$.
        The running time of the algorithm is $\tilde{O}(m)$.
    \end{theorem}

    The derivation of Theorem~\ref{theorem:expander_decomposition}
    using Theorem~\ref{theorem:cut_matching} uses the same methods as in \cite{saranurak2019expander, LNPSsoda13, agassy2022expander}. For completeness, we give this derivation in Section~\ref{section:expander_decomposition}. 

    The rest of this section is organized as follows. In Section~\ref{subsection:cut_matching} we introduce a generalized version of the cut matching game of~\cite{agassy2022expander}. In Sections~\ref{section:cut-player} and~\ref{section:matching-player} we present a cut player and a matching player, respectively. Finally, in Section~\ref{section:main-result-algorithm} we present the algorithm for Theorem~\ref{theorem:expander_decomposition} using the cut and matching players.

\subsection{Cut matching for \texorpdfstring{$\muG$}{μ}-expansion}\label{subsection:cut_matching}
    The following is a generalized version of the cut matching games in~\cite{saranurak2019expander,agassy2022expander}, suited for $\muG$.

    \begin{center}
    \fbox{\parbox{0.95\textwidth}{
        The Cut-Matching game for $\muG$-expansion, with parameters $T$ 
        and a vector $\dG$:
        \begin{itemize}
            \item The game is played on a series of graphs $G_i$. Initially, $G_0 = \emptyset, A_0=V$.
            \item At iteration $t$, the cut player produces two sets, $L_t,R_t\subseteq A_t$, and weights $m_t: L_t \to\RR_{\ge 0}, \bar{m}_t: R_t\to\RR_{\ge 0}$.\footnotemark 
            ~The weights satisfy $m_t(L_t) \le \frac{\muG{}{A_t}}{8}, \bar{m}_t(R_t)\ge \frac{\muG{}{A_t}}{2}$, and the total weight $m_t(v)+\bar{m}_t(v)$ of each $v\in L_t\cup R_t$ is at most $\dG{}{v}$.
            \item The matching player responds with a set $S_t\subseteq A_t$ and a $\mu$-matching $\M{t}$ that only matches vertices in $L_t\setminus S_t$ to vertices in $R_t \setminus S_t$.
            \item We set $G_{t+1} = G_t \cup \M{t}, A_{t+1} = A_t \setminus S_t$. The union $G_t \cup \M{t}$ refers to the graph whose weighted adjacency matrix is the sum of the weighted adjacency matrices of $G_t$ and $\M{t}$. 
            \item The game ends at iteration $T$, and the \emph{quality} of the game is $r$, the near $\muG$-expansion of $A_t$ in $G_t$.
        \end{itemize}
    }}
    \footnotetext{We denote $m_t(v) = 0$ for each $v\notin L_t$ and $\bar{m}_t(v) = 0$ for each $v\notin R_t$.}
    \end{center}  
    \medskip

    Our cut matching game iteratively shrinks the domain $\A{t}\subseteq V$. So we start by introducing the following matrices that would be useful to work with a shrinking domain.
 
    \begin{definition}[$\I{t},\dG{t},\U{t},\Pmat{t}$]
        We define the following variables
        \begin{enumerate}
            \item $\I{t} = \diag(\1_{A_t \cap \ter})\in\RR^{n\times n}$. Similarly, $\1_t = I_t \cdot \1$.
            \item $\dG{t} = \I{t} \cdot \dG \in \RR^n$, i.e the projection of $\dG$ onto $A_t$.
            \item $\U{t} = \I{t} \cdot U = \diag(\dG{t}) \in \RR^{n\times n}$, where $U = \diag(\muG)$.  
            \item $\Pmat{t} = \I{t} - \frac{1}{\muG{}{A_t}}\sqrt{\dG{t}}\sqrt{\dG{t}'} \in \RR^{n\times n}$.
            This matrix projects on the orthogonal complement of $\sqrt{\dG{t}}$.
        \end{enumerate}
    \end{definition}

    \begin{remark}
        Throughout the paper we will abuse the inverse notation and write $\U^{-1}$ (or $\Uminushalf$) to the denote the pseudo inverse of $U$ (i.e. inverse each entry on the diagonal if it is not zero).
    \end{remark}

\subsection{Cut player}
    \label{section:cut-player}
    Let $\delta = \Theta(\log{n})$ be a power of $2$ which we formally set later in the proof of Lemma~\ref{lemma:F_expander}.\footnote{We need $\delta$ to be a power of $2$ to apply Theorem \ref{theorem:symmetric_rearrangement} to a $\delta$ matrix power later on.
    } 
    The cut player implicitly  maintains a $\dGG$-stochastic matrix $\F{t}\in \RR^{n\times n}$, and the graph $G_t$ which is the union of the matchings that it obtained so far from the matching player ($t$ is the index of the round). The matrix $F_t$ and the graph $G_t$ have two crucial properties.
    First, we can embed (recall Definition~\ref{def:embedding}) $\F{t}$ in $G_t$ with  $O(\frac{1}{\delta})$ congestion (Lemma~\ref{lemma:F_t_embeddable_in_G_t}) and embed $G_t$ in $G$ with $O(\frac{t}{\phi \log{n}})$ congestion (Lemma~\ref{lemma:G_t_embeddable_in_G}). Second, after
    $T=\Theta(\log^2 n)$ rounds, $A_t$ will be a  near $(\Omega(1),\muG)$-expander in $\F{T}$ with high probability (Lemma~\ref{lemma:F_expander}). By Lemma~\ref{lemma:near_expansion_and_embedding}, this implies that, with high probability, $A_t$ will be a  near $(\Omega(\phi),\muG)$-expander in $G$.  

    We define the matrix  $\W{t} = (\Pmat{t}\normalized{\F{t}}\Pmat{t})^{\delta}$ (we recall again that $\delta = \Theta(\log{n})$ as specified above). This definition makes us ``focus'' only on the remaining vertices $A_t \cap \ter$, as any row/column of $\W{t}$ corresponding to a vertex $v\in V \setminus (A_t\cap \ter)$ is zero. 
    The matrix $W_t$ is used in this section to define the projections that the cut player needs to update $F_t$. It is also used in Section~\ref{section:A_expander_in_F} to define the potential that measures how far is the remaining part of the graph $A_T$ from a near $\muG$-expander. In particular, we show in Lemma~\ref{lemma:F_expander} and Corollary~\ref{cor:G_expander} that if $W_T^2$ has small eigenvalues (which will be the case when the potential is small) then $A_T$ is near $(\Omega(\delta), \muG)$-expander in $G_T$.
    
    The cut player updates $\F{t}$ as follows. Let $r\in \RR^n$ be a random unit vector. 
    Consider the projections $u_i = \frac{1}{\sqrt{\dG{}{i}}}\langle \W{t}{i}, r \rangle$, for $i\in A_t \cap \ter$ and $u_i = 0$ otherwise. Note that since $\Pmat{t} \sqrt{\dG{t}} = 0$, and $\W{t}$ is symmetric we have that:
    \begin{align*}
        \sum_{i\in A_t}{\mu(i) u_i} = \sum_{i\in A_t}{\sqrt{\mu(i)}\left\langle\W{t}{i}, r\right\rangle} = \left\langle \sum_{i\in A_t}{\sqrt{\mu(i)}\W{t}{i}}, r \right\rangle = \left\langle \left(\sqrt{\dG'_{t}}\W{t}\right)', r \right\rangle = \left\langle \W{t}\sqrt{\dG} , r \right\rangle = 0
    \end{align*}

    We use the following lemma to partition (some of) the vertices of $A_t$ into two weighted sets $A_t^l$ and $A_t^r$.\footnote{Note that this does not produce a bisection of $A_t$.} The lemma is a continuous modification of~\cite[Lemma 3.3]{racke2014computing}. For completeness, we include the proof in Appendix~\ref{appendix-proofs}.

    \begin{lemma} [\texorpdfstring{\cite[Lemma 3.3]{racke2014computing}}{[RST14, Lemma 3.3]}]
	\label{lemma:RST}
	    Given $u_i\in \RR$ for all $i\in A_t$, such that $\sum_{i\in A_t}{\dG{}{i} u_i} = 0$, 
	    we can find in time $O(|A_t|\cdot \log (|A_t|))$ a set of source nodes $A^l_t\subseteq A_t$, a set of target nodes $A^r_t\subseteq A_t$, weights $0\le m_t(i) \le \mu(i)$ for all $i\in A^l_t$, $0 \le \bar{m}_t(i) \le \mu(i)$ for all $i \in A^r_t$, and a separation value $\eta \in \RR$ such that: 
	    \begin{enumerate}
	        \item $\eta$ separates the sets $A^l_t, A^r_t$, \ie,\ either $\max_{i\in A^l_t}{u_i}\le \eta \le \min_{j\in A^r_t}{u_j}$, or $\min_{i\in A^l_t}{u_i}\ge \eta \ge \max_{j\in A^r_t}{u_j}$,
            \item $\forall i\in V: m_t(i) + \bar{m}_t(i) \le \muG{}{i}$,
            \item $\bar{m}_t(A^r_t) \ge \frac{\muG{}{A_t}}{2}$, $m_t(A^l_t) \le \frac{\muG{}{A_t}}{8}$,
	        \item $\forall i\in A^l_t: (u_i-\eta)^2\ge \frac{1}{9}u_i^2$, 
	        \item $\sum_{i\in A^l_t}{m_t(i) u_i^2}\ge \frac{1}{80}\sum_{i\in A_t}{\dG{}{i} u_i^2}$. 
	    \end{enumerate}
	\end{lemma}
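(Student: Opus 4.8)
The plan is to sort the vertices of $A_t$ by the numbers $u_i$ and to carve $A^l_t$ and $A^r_t$ out of the two ends of this order, deciding which end is the source side and where to place $\eta$ by a short case analysis. Throughout write $\mu_{tot}=\mu(A_t)$, $P=\sum_{i\in A_t}\mu(i)u_i^2$, let $P^+,P^-$ be the parts of $P$ coming from $\{u_i>0\}$ and $\{u_i<0\}$, and let $\mu^+,\mu^-$ be the $\mu$-mass of $\{u_i>0\}$ and $\{u_i<0\}$. Each of the five conclusions is invariant under simultaneously replacing $u$ by $-u$ and $\eta$ by $-\eta$ — this merely exchanges the two orientations allowed in property~1 and leaves properties~2--5 untouched — and the hypothesis $\sum_i\mu(i)u_i=0$ is preserved, so I would first assume without loss of generality that $P^+\ge P/2$. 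The degenerate case $P=0$ (all terminals have $u_i=0$) falls out of the construction below with $A^l_t=\emptyset$, $A^r_t=\{u_i\le 0\}$, $\eta=0$.

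The single reusable ingredient is a \emph{greedy one-sided selection}: given a sign-coherent set $X\subseteq A_t$ (all its $u_i$ of one sign), order $X$ by $|u_i|$ descending and take the longest prefix of $\mu$-mass at most $\mu_{tot}/8$, splitting the last vertex fractionally through its weight $m_i\le\mu(i)$ so that the selected $\mu$-mass equals $\min(\mu(X),\mu_{tot}/8)$ exactly. Since the selected vertices carry the largest values of $u_i^2$ inside $X$, the captured energy $\sum m_i u_i^2$ is at least a $\tfrac{\min(\mu(X),\mu_{tot}/8)}{\mu(X)}$ fraction of $\sum_{i\in X}\mu(i)u_i^2$. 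Thus, whenever I can point to a \emph{one-sided} set of $\mu$-mass $O(\mu_{tot})$ carrying a constant fraction of $P$, this gadget turns it into a legal source set (of $\mu$-mass $\le\mu_{tot}/8$) still carrying a constant fraction of $P$ — which is property~5.

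Now the cases. If $\mu^+\le\mu_{tot}/2$: take $\eta=0$, let $A^r_t=\{u_i\le 0\}$ with $\bar m_i=\mu(i)$ (so $\bar m(A^r_t)=\mu_{tot}-\mu^+\ge\mu_{tot}/2$), and let $A^l_t$ be the greedy selection from $\{u_i>0\}$; property~4 holds automatically because $\eta=0$, and the gadget gives captured energy $\ge\tfrac14P^+\ge\tfrac18P$. If $\mu^+>\mu_{tot}/2$, then the $\mu$-weighted median $\eta_{med}$ is strictly positive, and since a $\ge\mu_{tot}/2$ fraction of the mass sits at $u_i\ge\eta_{med}$ one gets $\eta_{med}^2\mu_{tot}\le 2P$. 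Sub-split on the energy $E=\sum_{u_i\ge\frac32\eta_{med}}\mu(i)u_i^2$ of the top tail $X=\{u_i\ge\frac32\eta_{med}\}$. If $E\ge P/10$: take $\eta=\eta_{med}$, $A^l_t$ the greedy selection from $X$ — each source then has $u_i\ge\tfrac32\eta>0$, which is exactly the inequality that forces $(u_i-\eta)^2\ge\tfrac19u_i^2$ — and $A^r_t=\{u_i\le\eta_{med}\}$ with $\bar m_i=\mu(i)$, of $\mu$-mass $\ge\mu_{tot}/2$ by the median property; the gadget gives captured energy $\ge\tfrac18E\ge\tfrac1{80}P$. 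If $E<P/10$: then almost all positive energy sits on vertices with $0<u_i<\tfrac32\eta_{med}$, so $\sum_{u_i>0}\mu(i)u_i\ge\frac{2}{3\eta_{med}}(P^+-E)$; combining this with the identity $\sum_{u_i<0}\mu(i)|u_i|=\sum_{u_i>0}\mu(i)u_i$, with Cauchy--Schwarz $\big(\sum_{u_i<0}\mu(i)|u_i|\big)^2\le\mu^-P^-$, and with $\mu^-<\mu_{tot}/2$ and $\eta_{med}^2\mu_{tot}\le 2P$ forces $P^-$ to be a constant fraction of $P$ (it comes out to more than $\tfrac{1}{15}P$); then take $\eta=0$, $A^l_t$ the greedy selection from $\{u_i<0\}$ (property~4 free), $A^r_t=\{u_i\ge 0\}$ with $\bar m_i=\mu(i)$, of $\mu$-mass $\mu_{tot}-\mu^->\mu_{tot}/2$, and the gadget gives captured energy $\ge\tfrac14P^->\tfrac1{80}P$. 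In all three cases the source and target sets lie on opposite sides of $\eta$ (property~1) and are disjoint, so $m_i+\bar m_i\le\mu(i)$ everywhere (property~2); property~3 just collects the $\mu$-mass bounds verified along the way. The running time is dominated by the sort, $O(|A_t|\log|A_t|)$: the median, the threshold $\tfrac32\eta_{med}$, the quantity $E$, and the greedy prefixes (with their captured energies) are all computed from prefix sums of $\mu(i)$ and $\mu(i)u_i^2$ in $O(|A_t|)$ time.

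I expect the bulk of the work to be keeping the constants consistent across the three cases, so that the captured energy always clears $\tfrac1{80}P$ while the source $\mu$-mass never exceeds $\tfrac18\mu_{tot}$; the one genuinely non-routine estimate is in the last case, where the hypothesis $\sum_i\mu(i)u_i=0$ must be turned — via Cauchy--Schwarz and $\eta_{med}^2\mu_{tot}\le 2P$ — into a quantitative lower bound on $P^-$, and this is the only place the hypothesis gets used with any force. A smaller but real point is the fractional split of the boundary vertex in the greedy selection, needed so that $m(A^l_t)$ meets the $\mu_{tot}/8$ budget exactly instead of undershooting and losing captured energy.
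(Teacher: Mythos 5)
Your proof is correct, and it follows the same overall strategy as the paper's: a case analysis, a greedy ``largest $|u_i|$ first'' prefix selection of total $\mu$-mass exactly $\min(\cdot,\mu(A_t)/8)$ with a fractionally split boundary vertex, and a Cauchy--Schwarz exploitation of $\sum_i\mu(i)u_i=0$ in the hard case. The bookkeeping differs in two places. First, the paper normalizes by mass (WLOG $\mu(L)\le\mu(R)$ for $L=\{u_i<0\}$, $R=\{u_i\ge 0\}$) and splits on whether the light side carries at least $\tfrac{1}{20}$ of the energy $P$, whereas you normalize by energy (WLOG $P^+\ge P/2$) and split first on the mass of the positive side and then on the tail energy; this costs you one extra case but makes property~3 for $A^r_t$ immediate in two of the three cases. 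Second, in the hard case the paper's separation value is $\eta=4\Delta/M$ with $\Delta=\sum_i\mu(i)|u_i|$ the total absolute deviation and sources drawn from $\{u_i\ge 6\Delta/M\}$, while yours is the $\mu$-weighted median $\eta_{med}$ with sources drawn from $\{u_i\ge\tfrac32\eta_{med}\}$; both secure property~4 through the same factor-$\tfrac32$ gap, and both hard-case Cauchy--Schwarz steps serve the same purpose (the paper lower-bounds $P_L$ by $\Delta^2/(4M)$ to cap the energy lost below the threshold; you lower-bound $P^-$ by $\tfrac{4(P^+-E)^2}{9\eta_{med}^2\mu^-}$ combined with $\eta_{med}^2\mu_{tot}\le 2P$ to fall back to a zero-threshold cut on the negative side). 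Your constants check out ($P^->\tfrac{16}{225}P$ in the last case, and every case clears $\tfrac{1}{80}P$), so this is a valid alternative derivation; the median-based threshold is perhaps slightly more natural to compute, but the two proofs are of essentially equal length and difficulty.
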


    Note that a vertex could appear both in $A_t^l$ and in $A_t^r$, if $u_{i_j} = \eta$. The cut player sends $A_l,A_r$ and $A_t$ to the matching player. 
    
    In turn, the matching player (see Subsection~\ref{section:matching-player}) returns a cut $(S_t,A_t\setminus S_t$) and a fractional matching $M_t$ of total weight $m_t(A^l_t \setminus S_t)$ between $A^l_t \setminus S_t$ and $A^r_t \setminus S_t$. 
    Define $\N{t} = \frac{\delta - 1}{\delta}U + \frac{1}{\delta}\M{t}$, the lazy random walk that takes a step according to $M_t$ with probability $\frac{1}{\delta}$, and otherwise takes a step according to $U$.
    The cut player then updates $\F{{t}}$ as follows: $\F{{t+1}} = \N{t} \cdot U^{-1} \F{t} U^{-1} \N{t}$. Intuitively, this applies a weighted average of the rows and columns of $\F{t}$ according to $\M{t}$, where each row/column ``keeps" $\frac{\delta-1}{\delta}$ fraction of its value.
    For the analysis, we update $G_{t+1}$ as $G_{t+1} = G_t\cup \M{t}$.\footnote{We allow for parallel edges. Note that $G_{t+1}$ might include self-loops because of vertices in $ A^l_t\cap A^r_t$.}

    \subsection{Matching player}\label{section:matching-player}

    The matching player receives $A^l_t$ and $A^r_t$ and the current $A_t$. 
    For a vertex $v\in V$, denote by $m_t(v)$ the weight of $v$ in $A_t^l$, and by $\bar{m}_t(v)$ the weight of $v$ in $A_t^r$ ($m_t(v) + \bar{m}_t(v) \le \muG{}{v}$).
    The matching player solves the flow problem on $G[A_t]$, specified by the following lemma, that closely resembles~\cite[Lemma 8.1]{LNPSsoda13} when setting $V = \A_t$.
    For completeness, we include the proof in Appendix~\ref{appendix:omitted-proofs-2}.
    
    \begin{lemma}[\texorpdfstring{\cite[Lemma 8.1]{LNPSsoda13}}{[LNPS23, Lemma 8.1]}]
	\label{lemma:fair-cuts}
	    Let $G=(V,E)$ be a graph with $n$ vertices, let $A^l, A^r \subseteq V$, and let $\phi>0$ be a parameter. For a vertex $v\in V$, denote by $m(v)$ the weight of $v$ in $A^l$, and by $\bar{m}(v)$ the weight of $v$ in $A^r$. Assume that $m(v) + \bar{m}(v)\le \dG{}{v}$ and that $\bar{m}(A^r) \ge \frac{1}{2}\muG{}{V}, m(A^l) \le \frac{1}{8}\muG{}{V}$. We define the flow problem $\Pi(G)$, as the problem in which a source $s$ is connected to each vertex $v\in A^l$ with an edge of capacity $m(v)$ and each vertex $v\in A^r$ is connected to a sink $t$ with an edge of capacity $\bar{m}(v)$. Every edge of $G$ has the same capacity $c=\Theta\left(\frac{1}{\phi\log n}\right)$. 
        A feasible flow for $\Pi(G)$ is a maximum flow that saturates all the edges outgoing from $s$.
	    Then, there exists a randomized algorithm, which in time $\tilde{O}(|E|)$, finds either
	    \begin{enumerate}
	        \item A feasible flow $f$ for $\Pi(G)$; or
	        \item A cut $S$ where $\Phi_{G}^{\mu}(S, V\setminus S)\le\frac{7}{c}=O(\phi\log n)$, $\muG{}{V\setminus S} \ge \frac{1}{3}\muG{}{V}$ and a feasible flow for the problem $\Pi(G-S)$, where we only consider the sub-graph $G[V\setminus S \cup \{s, t\}]$ (that is, vertices $v\in A^l\setminus S$ are sources of $m(v)$ units, and vertices $v\in A^r\setminus S$ are sinks of $\bar{m}(v)$ units). 
	    \end{enumerate}
        The correctness of the algorithm holds with high probability.
	\end{lemma}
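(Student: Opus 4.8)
The plan is to re-prove Lemma~8.1 of~\cite{LNPSsoda13} in the $\mu$-weighted language. View $\Pi(G)$ as an ordinary $s$--$t$ flow network: the arcs leaving $s$ have total capacity $m(A^l)\le\tfrac18\mu(V)$, the arcs entering $t$ have total capacity $\bar m(A^r)\ge\tfrac12\mu(V)$, and each edge of $G$ becomes two opposite arcs of capacity $c=\Theta(1/(\phi\log n))$; ``feasible'' means the maximum flow value equals $m(A^l)$, i.e.\ it saturates every arc out of $s$. The task is thus to either route all of the source mass, or to produce a vertex set $S$ after whose deletion the residual problem --- which is exactly $\Pi(G-S)$ --- is routable, together with the two structural guarantees on $S$. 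Running the bounded-distance push--relabel flow routine of~\cite{saranurak2019expander,LNPSsoda13} on this network with height parameter $h=\Theta(1/(\phi c))=\Theta(\log n)$ --- which costs $O(mh)=\tilde O(m)$, or equivalently a logarithmic number of calls to a near-linear-time (approximate) max-flow subroutine --- yields in $\tilde O(m)$ time either a flow saturating all arcs out of $s$ (Case~1), or a deficient flow from which a cut is read off.

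\textbf{Extracting the cut, and verifying its two properties.}
In the second case I would, following~\cite{LNPSsoda13}, repeatedly take the source side $Z$ of a minimum cut of the current network, delete it, and re-solve on the induced subgraph on the remaining vertices; the process halts after at most $|A^l|$ rounds (each round deletes at least one source vertex carrying positive mass), and when it halts the current problem is feasible, so the final flow is the feasible flow required in Case~2 for $S=Z_1\cup\cdots\cup Z_k$, the union of the deleted sets. The two properties of $S$ are then pure bookkeeping. In round $j$, with remaining vertex set $V_j$ and $G_j=G[V_j]$: since the problem was infeasible, the value of the minimum cut $\bigl(\{s\}\cup Z_j,\ (V_j\setminus Z_j)\cup\{t\}\bigr)$ is strictly below the current source demand $m(A^l\cap V_j)$; expanding that value into its three arc types and cancelling the common term $m(A^l\cap(V_j\setminus Z_j))$ gives \emph{both} $\bar m(A^r\cap Z_j)\le m(A^l\cap Z_j)$ and $|E_{G_j}(Z_j,V_j\setminus Z_j)|<\tfrac1c\,m(A^l\cap Z_j)$. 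Summing the first inequality over the pairwise disjoint $Z_j$ shows that the total sink capacity removed is at most the total source capacity removed, which is at most $m(A^l)\le\tfrac18\mu(V)$; hence the $t$-side retains at least $\tfrac12\mu(V)-\tfrac18\mu(V)=\tfrac38\mu(V)$ of sink capacity, and since $\bar m_v\le\mu(v)$ this gives $\mu(V\setminus S)\ge\tfrac38\mu(V)\ge\tfrac13\mu(V)$. Summing the second inequality, and using $V\setminus S\subseteq V_j\setminus Z_j$ together with $m_v\le\mu(v)$, gives $|E_G(S,V\setminus S)|\le\sum_j|E_{G_j}(Z_j,V_j\setminus Z_j)|<\tfrac1c\sum_j m(A^l\cap Z_j)\le\tfrac1c\,\mu(S)$. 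Combining these two bounds (and using $\mu(V\setminus S)\ge\tfrac38\mu(V)\ge\tfrac38\mu(S)$ to handle whichever side realizes the minimum) yields $\Phi^{\mu}_G(S,V\setminus S)<\tfrac{8}{3c}\le\tfrac7c=O(\phi\log n)$, as required.

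\textbf{Main obstacle.}
The correctness argument above is short once the flow formulation and the constant slack between $m(A^l)\le\tfrac18\mu(V)$ and $\bar m(A^r)\ge\tfrac12\mu(V)$ are in place; the substantive part --- and the reason we cite~\cite{LNPSsoda13} --- is the efficient implementation: carrying out the cut extraction without an exact minimum-cut oracle and without paying for a fresh maximum-flow computation per deleted piece, so that the whole routine runs in $\tilde O(m)$. That is exactly the bounded-distance-flow / fair-cut machinery of~\cite{saranurak2019expander,LNPSsoda13}: the cut is encoded in the level sets of the height-capped push--relabel flow, the flow mass that cannot be routed localizes to the top level sets and is what certifies the cut, and the restriction of the computed flow to the complement of the chosen level cut is feasible for $\Pi(G-S)$. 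I expect the only genuinely new work to be re-deriving those level-cut and residual-flow invariants with the non-uniform measure $\mu$ in place of the degree vector, and tracking the resulting $\log\mu^\star$ factor (hidden by $\tilde O$, via Assumption~\ref{assumption:mu}); everywhere else the generalization simply replaces $\vol$ by $\mu$ and integral matchings by fractional ones.
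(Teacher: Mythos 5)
Your combinatorial accounting is correct: with exact minimum cuts, cancelling the common term in the cut-value inequality does give both $\bar m(A^r\cap Z_j)<m(A^l\cap Z_j)$ and $c\,|E_{G_j}(Z_j,V_j\setminus Z_j)|<m(A^l\cap Z_j)$, and your summations then yield $\mu(V\setminus S)\ge\frac{3}{8}\mu(V)$ and $\Phi^{\mu}_G(S,V\setminus S)<\frac{8}{3c}\le\frac{7}{c}$, matching (in fact slightly beating) the paper's constants. But your route is genuinely different from the paper's, and the difference is exactly where your remaining gap sits. The paper does \emph{not} iterate min-cut deletions and does not use push--relabel level cuts: it builds a single auxiliary graph $H$ in which the sink capacities and the $G$-edge capacities are pre-scaled by $\frac{1}{1+\alpha}$ (with $\alpha=0.1$), makes one call to the fair-cut theorem (Theorem 1.1 of \cite{LNPSsoda13}) to get a $(1+\alpha)$-fair cut $(S,T)$ together with a feasible flow, and then uses the fairness property --- every cut edge carries at least a $\frac{1}{1+\alpha}$ fraction of its capacity --- both to certify the sparsity of $S'=S\setminus\{s\}$ and, after scaling the flow up by $(1+\alpha)$, to hand back an exactly feasible flow for $\Pi(G-S')$ under the original capacities. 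Fairness is what substitutes for exactness; no cut is ever an exact min-cut and no deletion is ever repeated.

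The gap in your version is the $\tilde O(m)$ bound, which is part of the lemma's statement. As written, your scheme needs up to $|A^l|=O(n)$ rounds, each with its own min-cut computation, so it is $\tilde O(mn)$, and your halting argument (each round removes one positive-mass source) gives no better bound. Your proposed fix --- reading the cuts off the level sets of a height-capped push--relabel run --- does not mesh with your own bookkeeping: level cuts are not minimum cuts, so the strict inequality ``cut value $<$ remaining source demand'' that you cancel against is not available for them, and the restriction of the capped flow to the far side is not automatically feasible for $\Pi(G-S)$ with the \emph{unscaled} capacities (this is precisely the difficulty the fair-cut notion was introduced to remove). So either you must redo the level-cut analysis of \cite{saranurak2019expander} from scratch in the $\mu$-weighted setting (nontrivial, and not sketched in your proposal), or you should switch to the paper's one-shot fair-cut argument, where the only price is the $(1+\alpha)$ capacity rescaling.
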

   
	\begin{remark}
	    It is possible that $A^l\subseteq S$, in which case the feasible flow for $\Pi(G-S)$ is trivial (the total source mass is $0$). 
	\end{remark}
    Let $S_t$ be the cut returned by the lemma. If the lemma terminates with the first case, we set $S_t = \emptyset$. 
    Using dynamic trees~\cite{ST83},
    we can decompose the returned flow into a set of (at most $m$) weighted paths, each carrying some flow from a vertex $u\in A_t^l\setminus S_t$ to a vertex $v\in A^r_t\setminus S_t$. 
    If $u\in A^l_t\cap A^r_t$ then it is possible that a path starts and ends at $u$. Each $u\in A^l_t\setminus S_t$ is the starting point of exactly $m_t(u)\le\dG{}{u}$ units of flow, and each $v\in A^r_t\setminus S_t$ is the target of \textit{at most} $\bar{m}_t(v)\le\dG{}{v}$ units of flow. Define the weighted matching
    $\tilde{\M{t}}$ as $\tilde{\M{t}} = ((u_i, v_i, w_i))_{i=1}^{|A^l_t\setminus S_t|}$, where $u_i$ and $v_i$ are the endpoints of path $i$ which carries $w_i$ flow. We can view $\tilde{\M{t}}$ as a symmetric $n\times n$ matrix, such that $\MatrixSimple{\tilde{\M{t}}}{u, v} = \MatrixSimple{\tilde{\M{t}}}{v, u} =w_{uv}$, where $w_{uv}$ is the total flow sent on paths between $u$ and $v$. We turn $\tilde{\M{t}}$ into a $\dG$-stochastic matrix by increasing its diagonal entries by $\dG - \tilde{\M{t}}\1_n$. Formally, we set $\M{t} \defeq \tilde{\M{t}} + \diag(\dG - \tilde{\M{t}}\1_n)$. Notice that $\dG - \tilde{\M{t}}\1_n$ has only non-negative entries, so $\M{t}$ also has non-negative entries.
   The matching player sends back $\M{t}, S_t$ as the response to the subsets $A^l_t$ and $A^r_t$ given by the cut player. 

\subsection{The Algorithm}\label{section:main-result-algorithm}
    Let $T=\Theta(\log^2 n)$ and $c=\Theta(\frac{1}{\phi\log n})$. 
    The algorithm for Theorem~\ref{theorem:main-result} is presented in Algorithm~\ref{algo:cut_matching}.  
    The algorithm runs for at most $T$ rounds and stops when $\muG{}{R_t}>\frac{\muG{}{V}\cdot c\cdot \phi}{70}=\Omega(\frac{\muG{}{V}}{\log n})$. 
    In each round $t$, we implicitly update $\F{t}$ (see Section~\ref{section:cut-player}). In order to keep the running time near linear, we compute the flow using the fair cut result of~\cite{LNPSsoda13}. This routine may also return a cut $S_t \subseteq A_t$ which satisfies $\Phi_{G[A]}(S_t, A_t\setminus S_t) \le \frac{7}{c}$ (with high probability), 
    in which case we ``move'' $S_t$ from $A_{t+1}$ to $R_{t+1}$. After $T$ rounds, $\F{T}$ certifies that the remaining part of $A_T$ is a near $(\Omega(\phi),\mu)$-expander with high probability.

    \begin{algorithm}[hbt!]
        \caption{Cut-Matching}
        \label{algo:cut_matching}
        \begin{algorithmic}[1]
            \Function{Cut-Matching}{$G, \phi$}
                \State $T\gets\Theta(\log^2 n)$, $c\gets\Theta(\frac{1}{\phi\log n})$. \Comment{$c$ is an integer.}
                \State $t\gets 0$.
                \State $A_0\gets V$,\; $R_0 \gets \emptyset$. \;
                \State Set $\F{0}\gets U \defeq \diag(\dG)$.
                \While{$\muG{}{R_t}\le \frac{\muG{}{V}\cdot c\cdot \phi}{70}$ and $t < T$} 
                    \State Update $\F{t}$ to $\F{t+1}$. \Comment{Sections~\ref{section:cut-player} and~\ref{section:matching-player} describe this update.}
                    \State This update returns $S_t\subseteq A_t$ where $\Phi^{\muG}_{G[A_t]}(S_t, A_t\setminus S_t)\le \frac{7}{c}$, or $S_t=\emptyset$.
                    \State $A_{t+1} \gets A_t - S_t$, $R_{t+1} \gets R_t \cup S_t$.
                    \State $t\gets t+1$.
                \EndWhile
                \If {$t = T$}
                    \If {$R = \emptyset$}
                        \State Certify that $\Phi^{\muG}(G)=\Omega(\phi)$. \Comment{Case (1) of Theorem~\ref{theorem:main-result}.}
                    \ElsIf {$\mu_G\{R_t\} > \frac{\muG{}{V}\cdot c\cdot \phi}{70}$}
                        ~\Return {$(A_T, R_T)$}. \Comment{Case (2) of Theorem~\ref{theorem:main-result}.}
                    \Else 
                        ~\Return {$(A_T, R_T)$}. \Comment{Case (3) of Theorem~\ref{theorem:main-result}.}
                    \EndIf
                \Else 
                    ~\Return {$(A_T, R_T)$}. \Comment{Case (2) of Theorem~\ref{theorem:main-result}.}
                \EndIf
            \EndFunction
        \end{algorithmic}
    \end{algorithm}

\section{Proof of Theorem \texorpdfstring{\ref{theorem:cut_matching}}{3.1}}\label{section:proof-main-theorem}

    In this section we prove that Algorithm~\ref{algo:cut_matching} satisfies the conditions of Theorem~\ref{theorem:cut_matching}.
    We start by giving a high-level overview of the analysis of our algorithm.
    The potential we use here to measure the progress of the cut player is $\psi(t) = \tr[\W{t}^2] = \sum_{i\in A_t}{\norm{\W{t}{i}}_2^2}$, where $\W{t}$ was defined as $\W{t} = (\Pmat{t} \Dminushalf \F{t} \Dminushalf \Pmat{t})^\delta$. Recall $\delta=O(\log n )$ is a power of $2$, which will be defined in the proof of Lemma~\ref{lemma:F_expander}.

    Our key technical Lemmas~\ref{lemma:potential_step_1} and~\ref{lemma:potential_step_2} carefully establish the decrease in potential achieved by the cut player, taking into account the matchings and the deleted cuts. We conclude in Corollary~\ref{cor:total_potential} that after $O(\log^2 n)$ iterations the potential falls below $1/n$ with high probability. We then prove, using our variation on Cheeger's inequality (see Lemma~\ref{lemma:F_expander}), that having a small potential means that $A_t$ is a near $(\Omega(1),\muG)$-expander in $F_t$. 

    Using the fact that $\F{t}$ is embeddable in $G_t$ (Lemma~\ref{lemma:F_t_embeddable_in_G_t}) and that $G_t$ is embeddable in $G$ (Lemma~\ref{lemma:G_t_embeddable_in_G}) we can conclude that if we reach round $T$, then with high probability, $A_T$ is a near $(\Omega(\phi), \mu)$-expander in $G$ (Corollary~\ref{cor:G_expander}). 

    This section is organized as follows. Subsection~\ref{section:main-result_F_embeddable_G} shows that $\F{t}$ is embeddable in $G_t$ with congestion $\frac{2}{\delta}$ and that $G_t$ is embeddable in $G$ with congestion $c\cdot t$. Subsection~\ref{section:A_expander_in_F} shows that if we reach round $T$, then with high probability, $A_T$ is a near $(\Omega(\phi),\mu)$-expander in $G$. Finally, in Subsection~\ref{section:main-result_theorem_proof} we prove Theorem~\ref{theorem:main-result}.
 
    \subsection{\texorpdfstring{$\F{t}$}{Ft} is embeddable in \texorpdfstring{$G$}{G}}
    \label{section:main-result_F_embeddable_G}
    
    To begin the analysis of the algorithm, we first define a blocked matrix. This notion will be useful when our matrices ``operate'' only on vertices of $A_t$.
    \begin{definition}
    \label{def:d_t_block_stochastic}
        Let $A \subseteq V$. A matrix $B\in \RR^{n\times n}$ is $A$-\emph{blocked} if $\MatrixSimple{B}{i, j} = 0$ for all $i\neq j$ such that $(i,j)\notin A \times A$. Diagonal entries may be non-zero.
    \end{definition}
    \begin{lemma}  
    \label{lemma:basic_properties}
        The following holds for all $t$:
        \begin{enumerate}
            \item $\M{t}, \N{t}, \F{t}$ and $\W{t}$ are symmetric.
            \item $\M{t}$ and $\N{t}$ are $\A{t} \cap \ter$-blocked. $\F{t}$ is $\ter$-blocked.
            \item $\M{t}, \N{t}$ and $\F{t}$ are $\dG$-stochastic. 
        \end{enumerate}
    \end{lemma}
    \begin{proof}
        \begin{enumerate}
            \item This is clear from the definitions.
            \item Since $\tilde{\M{t}}$ is a matching of vertices in $A_t^l\setminus S_t \subseteq A_t$ to vertices in $A_t^r\setminus S_t\subseteq A_t$, then $\MatrixSimple{\tilde{\M{t}}}{i, j} = 0$ for all $i\neq j$ such that $(i,j) \notin \A{t} \times \A{t}$. After the addition of $\diag(\muG - \tilde{\M{t}}\1_n)$, we still have $\M{t}{i, j} = 0$ for all $i\neq j$ such that $(i,j) \notin \A{t} \times \A{t}$, so $\M{t}$ is $\A{t}$-blocked. Now, for $\N{t}$, note that for all $i\neq j$ such that $(i,j) \notin \A{t} \times \A{t}$, $\N{t}{i, j} = \frac{\delta - 1}{\delta}\MatrixSimple{U}{i, j} + \frac{1}{\delta}\M{t}{i, j} = \frac{\delta - 1}{\delta}\cdot 0 + \frac{1}{\delta}\cdot 0 = 0$.

            Finally, $\F{t}$ is $\ter$-blocked because it a product of $\ter$-blocked matrices. 
            \item 
            For $\M{t}$ this is true because we obtain it by explicitly making $\tilde{\M{t}}$  $\dG$-stochastic. For $\N{t}$, note that
            \begin{align*}
                \N{t} \1_n = \left(\frac{\delta - 1}{\delta}\U + \frac{1}{\delta}\M{t}\right) \1_n = \frac{\delta - 1}{\delta}\dG + \frac{1}{\delta}\dG = \dG
            \end{align*}
            and $\1_n' \N{t} = \dG'$ follows from symmetry. For $\F{t}$, we use induction on $t$. $\F{0} = \U = \diag(\dG)$ is clearly $\dG$-stochastic. After step $t$, 
            \begin{align*}
                \F{{t+1}}\1_n = \N{t} U^{-1} \F{t} U^{-1} \N{t} \cdot\1_n = \N{t} U^{-1} \F{t} U^{-1} \cdot \dG  = \N{t} U^{-1} \F{t} \1_{\ter} \numeq{4}  \N{t} U^{-1} \cdot \dG = \N{t}\1_{\ter} \numeq{6} \dG,
            \end{align*}
            where Equalities $(4)$ and $(6)$ follow from Lemma~\ref{lemma:basic_properties} $(2)$. We get $\1_n' \F{{t+1}} = \dG'$ from symmetry.  
        \end{enumerate}
    \end{proof}

    The following lemmas show that $F_t$ is embeddable in $G_t$ and that $G_t$ is embeddable in $G$. The proof of Lemma~\ref{lemma:F_t_embeddable_in_G_t_step1} is deferred to Appendix~\ref{appendix-proofs}.

    \begin{lemma}
    \label{lemma:F_t_embeddable_in_G_t_step1}
    Let $F\in \RR^{n\times n}$ be a $\dG$-stochastic matrix, and let $H$ be a graph. Fix any $t$. Assume that $F$ is embeddable in $H$ such that the congestion on each edge $e\in H$ is $c(e)$, then
    \begin{enumerate}
        \item $\N{t}\cdot U^{-1} F$ is embeddable in $H\cup\M{t}$ such that the congestion on edges of $\M{t}$ is $\frac{2}{\delta}$ and the congestion on each edge $e\in H$ is still $c(e)$.
        \item $F U^{-1} \cdot \N{t}$ is embeddable in $H\cup\M{t}$ such that the congestion on edges of $\M{t}$ is $\frac{2}{\delta}$ and the congestion on each edge $e\in H$ is still $c(e)$.
        \item $\N{t} \cdot U^{-1} \F{t} U^{-1} \N{t}$ is embeddable in $H \cup \M{t}$ such that the congestion on edges of $\M{t}$ is $\frac{4}{\delta}$ and the congestion on each edge $e\in H$ is still $c(e)$.
    \end{enumerate}
    \end{lemma}

    \begin{lemma}
    \label{lemma:F_t_embeddable_in_G_t}
        For all rounds $t$, $\F{t}$ is embeddable in $G_t$ with congestion $\frac{4}{\delta}$.
    \end{lemma}
    \begin{proof}
        This is a direct consequence of Lemma~\ref{lemma:F_t_embeddable_in_G_t_step1}.
    \end{proof}
    \begin{lemma}
    \label{lemma:G_t_embeddable_in_G}
        For all rounds $t$, $G_t$ is embeddable in $G$ with congestion $ct$.
    \end{lemma}
    \begin{proof}
        For every $t$, by the definition of the flow problem at round $t$, $\M{t}$ is embeddable in $G$ with congestion $c$. Summing these routings gives a routing of $G_t = \bigcup_{i=1}^t{\M{i}}$ in $G$ with congestion $c\cdot t$.
    \end{proof}

    \subsection{\texorpdfstring{$A_T$}{AT} is a near expander in \texorpdfstring{$\F{T}$}{FT}}
    \label{section:A_expander_in_F}
    In this section we prove that after $T=\Theta(\log^2 n)$ rounds, with high probability, $A_T$ is a near $(\Omega(1), \muG)$-expander in $\F{T}$, which will imply that it is a near $(\Omega(\phi), \muG)$-expander in $G$.
    
    The section is organized as follows. Lemma~\ref{lemma:submatrix_properties} proves matrix identities and Lemma~\ref{lemma:X_as_normalized_laplacian} specifies a spectral property that our proof requires.
    We then define a potential function and lower bound the decrease in potential in Lemmas~\ref{lemma:potential_step_1}-\ref{cor:total_potential}. Finally, in Lemma~\ref{lemma:F_expander} and Corollary~\ref{cor:G_expander} we use the upper bound on the potential at round $T$, to show that with high probability $A_T$ is a near $(\Omega(1), \muG)$-expander in $\F{T}$ and a near $(\Omega(\phi), \muG)$-expander in $G$. 
    \begin{lemma}
    \label{lemma:submatrix_properties}
    The following relations hold for all $t$:
        \begin{enumerate}
            \item For any $\A{t}\cap \ter$-blocked $\dG{t}$-stochastic matrix $B\in \RR^{n\times n}$ we have $\I{t}\normalized{B} = \normalized{B}\I{t}$ and $\Pmat{t} \cdot  \Dminushalf B \Dminushalf =  \Dminushalf B \Dminushalf \cdot \Pmat{t}$.
            \item $\I{t}\Pmat{t} = \Pmat{t}$, $\I{t}^2 = \I{t}$ and $\Pmat{t}^2 = \Pmat{t}$.
            \item $\Pmat{t} \Pmat{{t+1}} = \Pmat{{t+1}} \Pmat{t} = \Pmat{{t+1}}$.
            \item $\Pmat{t} = \Dminushalf \mathcal{L}(\frac{1}{\muG{}{A_t}} \dG{t} \dG{t}') \Dminushalf.$
            \item for any $v\in \RR^n$, it holds that 
            $v' \mathcal{L}\left(\frac{1}{\muG{}{A_t}} \dG{t} \dG{t}'\right) v = \norm{\U{t}^{\frac{1}{2}} v}_2^2 - \frac{1}{\muG{}{A_t}}  \left\langle v, \dG{t}\right\rangle^2$. 
            \item For any $B\in \RR^{n \times n}$, $\tr(\I{t} BB') = \sum_{i\in A_t \cap \ter}\norm{B(i)}_2^2$.
        \end{enumerate}
    \end{lemma}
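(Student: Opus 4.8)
The six items are all routine linear-algebra identities that follow directly from the definitions of $\I{t}$, $\dG{t}$, $\U{t}$, and $\Pmat{t}$, together with the structural facts from Lemma~\ref{lemma:our_basic_properties}. The plan is to prove them in the order (2), (1), (3), (4), (5), (6), since the later parts reuse the earlier ones. Throughout I would use the elementary observation that $\I{t}$ is a diagonal $0/1$ projection onto the coordinates $A_t\cap\ter$, so multiplying by $\I{t}$ on the left (resp.\ right) zeros out the rows (resp.\ columns) outside $A_t\cap\ter$, and $\Uminushalf$, $U^{-1}$ are diagonal and hence commute with any other diagonal matrix.

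For item (2): $\I{t}^2 = \I{t}$ is immediate since $\I{t}$ is a diagonal $0/1$ matrix. For $\I{t}\Pmat{t} = \Pmat{t}$, expand $\Pmat{t} = \I{t} - \frac{1}{\muG{}{A_t}}\sqrt{\dG{t}}\sqrt{\dG{t}}'$ and use $\I{t}^2=\I{t}$ together with $\I{t}\sqrt{\dG{t}} = \sqrt{\dG{t}}$ (because $\sqrt{\dG{t}}$ is supported on $A_t\cap\ter$). For $\Pmat{t}^2=\Pmat{t}$, multiply out the two-term expression, using $\I{t}\sqrt{\dG{t}} = \sqrt{\dG{t}}$ and $\sqrt{\dG{t}}'\sqrt{\dG{t}} = \muG{}{A_t}$, so that the cross terms combine to cancel one copy of the rank-one part; this is the standard verification that $\Pmat{t}$ is an orthogonal projection onto the orthogonal complement of $\sqrt{\dG{t}}$ inside the coordinate subspace of $A_t\cap\ter$.

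For item (1): write $B$ as $\Uhalf$ times a matrix that is $\dG{t}$-stochastic in the obvious normalized sense; more directly, since $B$ is $\A{t}\cap\ter$-blocked, $\I{t} B = B = B\I{t}$ except possibly on diagonal entries outside $A_t\cap\ter$, but $B$ being $\dG{t}$-stochastic forces $B\1_n = \dG{t}$, which is supported on $A_t\cap\ter$, pinning those diagonal entries to zero; hence $\I{t} B = B\I{t} = B$, and conjugating by the diagonal matrix $\Uminushalf$ preserves this, giving $\I{t}\normalized{B} = \normalized{B} = \normalized{B}\I{t}$. Then $\Pmat{t}\cdot\normalized{B}$: using $\Pmat{t} = \I{t} - \frac{1}{\muG{}{A_t}}\sqrt{\dG{t}}\sqrt{\dG{t}}'$, the $\I{t}$ part commutes as just shown, and for the rank-one part we need $\sqrt{\dG{t}}'\normalized{B} = \normalized{B}\sqrt{\dG{t}}$ to line up with $\normalized{B}\sqrt{\dG{t}}$; indeed $\Uminushalf B\Uminushalf\sqrt{\dG{t}} = \Uminushalf B\1_t$ (since $\Uminushalf\sqrt{\dG{t}} = \1_t$ on the support) $= \Uminushalf\dG{t} = \sqrt{\dG{t}}$, and symmetrically on the other side, so $\normalized{B}$ fixes $\sqrt{\dG{t}}$; this gives that the rank-one term also commutes with $\normalized{B}$, hence $\Pmat{t}\normalized{B} = \normalized{B}\Pmat{t}$.

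For item (3): since $A_{t+1} = A_t\setminus S_t \subseteq A_t$, we have $\I{{t+1}}\I{t} = \I{t}\I{{t+1}} = \I{{t+1}}$, so $\sqrt{\dG{{t+1}}}$ lies in the range of $\Pmat{t}$'s ambient subspace and is orthogonal to nothing new; then expanding both $\Pmat{t}$ and $\Pmat{{t+1}}$ and using $\I{t}\sqrt{\dG{{t+1}}} = \sqrt{\dG{{t+1}}}$ and $\sqrt{\dG{t}}'\sqrt{\dG{{t+1}}} = \muG{}{A_{t+1}}$ (the overlap of the two measures is exactly the measure of the smaller set) shows $\Pmat{t}\Pmat{{t+1}} = \Pmat{{t+1}}$, and symmetrically $\Pmat{{t+1}}\Pmat{t} = \Pmat{{t+1}}$. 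For item (4): by Definition~\ref{def:laplacian_normalized_laplacian}, the Laplacian of the rank-one matrix $\frac{1}{\muG{}{A_t}}\dG{t}\dG{t}'$ has degree vector $\frac{1}{\muG{}{A_t}}\dG{t}\cdot(\dG{t}'\1_n) = \frac{1}{\muG{}{A_t}}\dG{t}\cdot\muG{}{A_t} = \dG{t}$, so $\mathcal{L} = \diag(\dG{t}) - \frac{1}{\muG{}{A_t}}\dG{t}\dG{t}' = \U{t} - \frac{1}{\muG{}{A_t}}\dG{t}\dG{t}'$; conjugating by $\Uminushalf$ gives $\Uminushalf\U{t}\Uminushalf = \I{t}$ (the diagonal $0/1$ projection, since $\Uminushalf$ zeros out coordinates not in $\ter$ and $\U{t}$ is supported on $A_t\cap\ter$) and $\Uminushalf\dG{t} = \sqrt{\dG{t}}$, yielding exactly $\Pmat{t}$. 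Item (5) is then immediate by substituting the formula $\mathcal{L} = \U{t} - \frac{1}{\muG{}{A_t}}\dG{t}\dG{t}'$ from (4) into the quadratic form and recognizing $v'\U{t} v = \norm{\U{t}^{1/2} v}_2^2$ and $v'\dG{t}\dG{t}'v = \langle v,\dG{t}\rangle^2$. Item (6) is a direct trace computation: $\tr(\I{t} BB') = \sum_i (\I{t})_{i,i}(BB')_{i,i} = \sum_{i\in A_t\cap\ter}(BB')_{i,i} = \sum_{i\in A_t\cap\ter}\norm{B(i)}_2^2$ since the $i$-th diagonal entry of $BB'$ is the squared norm of the $i$-th row of $B$.

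I do not expect any serious obstacle: every part reduces to bookkeeping with diagonal matrices and the single rank-one correction. The one place to be careful is the interplay between the $A_t\cap\ter$-blocked hypothesis and the diagonal entries of $B$ outside $A_t\cap\ter$ in item (1) — one must invoke $\dG{t}$-stochasticity (not just blockedness) to kill those entries — and the nesting $A_{t+1}\subseteq A_t$ in item (3), where the key numeric fact is $\sqrt{\dG{t}}'\sqrt{\dG{{t+1}}} = \muG{}{A_{t+1}}$.
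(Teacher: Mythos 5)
Your proposal is correct and follows essentially the same route as the paper: item (2) by direct expansion using $\sqrt{\dG{t}'}\sqrt{\dG{t}}=\muG{}{A_t}$, item (1) by showing $\normalized{B}$ fixes $\sqrt{\dG{t}}$ (via $\Uminushalf B\1_t = \sqrt{\dG{t}}$), item (3) via $\I{t}\I{t+1}=\I{t+1}$ and $\sqrt{\dG{t}'}\sqrt{\dG{t+1}}=\muG{}{A_{t+1}}$, and items (4)--(6) by the same degree-matrix and trace computations. Your observation that $\dG{t}$-stochasticity (not just blockedness) is needed to kill the diagonal entries outside $A_t\cap\ter$ in item (1) is exactly the right point of care, and your item (6) is marginally more direct than the paper's detour through $\norm{\I{t}B}_F^2$, but these are presentational differences only.
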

    \begin{proof}
        \begin{enumerate}
            \item Since $B$ is a $\A{t}\cap \ter$-blocked $\dG{t}$-stochastic matrix, $\I{t}\cdot \normalized{B} = \normalized{B}\cdot \I{t}$ is clear. Moreover 
            \begin{align*}
                \Dminushalf B \Dminushalf \sqrt{\dG{t}}\sqrt{\dG{t}'} = 
                \Dminushalf B \1_t \sqrt{\dG{t}'} = \Dminushalf \dG{t} \sqrt{\dG{t}'} = \sqrt{\dG{t}}\sqrt{\dG{t}'}, \\ 
                \sqrt{\dG{t}}\sqrt{\dG{t}'} \Dminushalf B \Dminushalf  = 
                \sqrt{\dG{t}} \1_t'  B  \Dminushalf = \sqrt{\dG{t}}  \dG{t}' \Dminushalf = \sqrt{\dG{t}}\sqrt{\dG{t}'} \ .
            \end{align*}
            
            Thus,
            \begin{align*}
                \Pmat{t} \cdot \Dminushalf B \Dminushalf &= (\I{t}-\frac{1}{\muG{}{A_t}}\sqrt{\dG{t}}\sqrt{\dG{t}'})\Dminushalf B \Dminushalf \\
                &= \Dminushalf B \Dminushalf \I{t}- \frac{1}{\muG{}{A_t}}\sqrt{\dG{t}}\sqrt{\dG{t}'} = \Dminushalf B \Dminushalf \cdot \Pmat{t} \ .
            \end{align*}
            
            \item The first and second equalities are clear from the definitions. For $\Pmat{t}$,
            \begin{align*}
                \Pmat{t}^2 &= \left(\I{t} - \frac{1}{\muG{}{A_t}}\sqrt{\dG{t}}\sqrt{\dG{t}'}\right)\cdot\left(\I{t} - \frac{1}{\muG{}{A_t}}\sqrt{\dG{t}}\sqrt{\dG{t}'}\right)
                \\
                &= \I{t} - \frac{2}{\muG{}{A_t}}\sqrt{\dG{t}}\sqrt{\dG{t}'} + \frac{1}{\muG{}{A_t}^2}\sqrt{\dG{t}}\sqrt{\dG{t}'}\sqrt{\dG{t}}\sqrt{\dG{t}'} = \I{t} - \frac{1}{\muG{}{A_t}}\sqrt{\dG{t}}\sqrt{\dG{t}'} = \Pmat{t}
            \end{align*}
            \item We show $\Pmat{t} \Pmat{{t+1}} = \Pmat{{t+1}}$. The other direction follows by symmetry.
            \begin{align*}
                \Pmat{t} \Pmat{{t+1}} &= \left(\I{t} - \frac{1}{\muG{}{A_t}}\sqrt{\dG{t}}\sqrt{\dG{t}'}\right)\cdot \left(\I{{t+1}} - \frac{1}{\muG{}{A_{t+1}}}\sqrt{\dG{t+1}}\sqrt{\dG{t+1}'}\right) \\&=
                \I{{t+1}} - \frac{1}{\muG{}{A_{t+1}}}\sqrt{\dG{t+1}}\sqrt{\dG{t+1}'} -
                \frac{1}{\muG{}{A_t}}\sqrt{\dG{t}}\sqrt{\dG{t}'}\I{{t+1}} +
                \frac{1}{\muG{}{A_t} \cdot \muG{}{A_{t+1}}}\sqrt{\dG{t}}\sqrt{\dG{t}'}\sqrt{\dG{t+1}}\sqrt{\dG{t+1}'} \\&=
                \I{{t+1}} - \frac{1}{\muG{}{A_{t+1}}}\sqrt{\dG{t+1}}\sqrt{\dG{t+1}'} -
                \frac{1}{\muG{}{A_t}}\sqrt{\dG{t}}\sqrt{\dG{t+1}'} +
                \frac{1}{\muG{}{A_t} }\sqrt{\dG{t}}\sqrt{\dG{t+1}'} \\&=
                \I{{t+1}} - \frac{1}{\muG{}{A_{t+1}}}\sqrt{\dG{t+1}}\sqrt{\dG{t+1}'} = \Pmat{{t+1}}.
            \end{align*}
            \item Since $\langle \dG{t},\1 \rangle = \muG{}{A_t}$, we get that the degree matrix of $\frac{1}{\muG{}{A_t}}\dG{t} \dG{t}'$ is $\U{t}$. Hence
            \[
            \mathcal{L} \left( \frac{1}{\muG{}{A_t}}\dG{t} \dG{t}'\right) = \U{t} - \frac{1}{\muG{}{A_t}}\dG{t} \dG{t}' = 
            \Uhalf (\I{t} - \frac{1}{\muG{}{A_t}}\sqrt{\dG{t}}\sqrt{\dG{t}'}) \Uhalf =
            \Uhalf \Pmat{t} \Uhalf.
            \]
    
            The result follows since $\Pmat{t}$ is $A_t\cap  \ter$-blocked (and has zeros on the diagonal outside of the  block).
    
            \item
            \begin{align*}
                v' \mathcal{L}\left(\frac{1}{\muG{}{A_t}} \dG{t} \dG{t}'\right) v &= 
                v'  (\U{t} - \frac{1}{\muG{}{A_t}}\dG{t} \dG{t}')  v =
                 v' \U{t} v  - \frac{1}{\muG{}{A_t}} v' \dG{t} \dG{t}' v \\&=
                 (\U{t}^{\frac{1}{2}} v)' (\U{t}^{\frac{1}{2}} v) - \frac{1}{\muG{}{A_t}} (v' \dG{t}) (\dG{t}' v) =
                 \norm{\U{t}^{\frac{1}{2}} v}_2^2 - \frac{1}{\muG{}{A_t}} \langle v,\dG{t} \rangle^2.
            \end{align*}
            \item Observe that $\sum_{i\in A_t \cap \ter}\norm{B(i)}_2^2 = \norm{\I{t} B}_F^2$. Indeed, $X= \I{t} B$ satisfies $\X{}{i,j} = B(i,j)$ if $i\in A_t \cap  \ter$ (and otherwise $\X{}{i,j} = 0$). Therefore
            \begin{align*}
                \sum_{i\in A_t \cap \ter}\norm{B(i)}_2^2 &= \norm{\I{t} B}_F^2 = \tr((\I{t} B)  (\I{t} B)') = 
                \tr (\I{t} B  B' \I{t}) \\&=
                \tr(\I{t}^2 B  B') = \tr(\I{t} B B').
            \end{align*}
            Where the fourth equality follows from Fact~\ref{fact:book_trace_identities} and the last equality follows from (2).
        \end{enumerate}
    \end{proof}
    
    We define the potential $\psi(t) = \tr[\W{t}^2] = \sum_{i\in A_t}{\norm{\W{t}{i}}_2^2}$, where $\W{t}$ was defined as $\W{t} = (\Pmat{t} \Dminushalf \F{t} \Dminushalf \Pmat{t})^\delta$. 
    Intuitively, by projecting using $\Pmat{t}$ , the potential only ``cares'' about the vertices of $A_t \cap \ter$. We prove in Lemma~\ref{lemma:F_expander} that having small potential will certify that $A_T$ is a near expander in $\F{t}$. 
    
    Before we bound the decrease in potential, we recall Definition~\ref{def:laplacian_normalized_laplacian} of a normalized Laplacian $\mathcal{N}(A) = \normalized{\mathcal{L}(A)} = I_{\ter} - \normalized{A}$, where $A$ is a symmetric $\ter$-blocked $\dG$-stochastic matrix. 
    The proof of the following technical lemma appears in Appendix~\ref{appendix-proofs}.

    \begin{lemma}
    \label{lemma:X_as_normalized_laplacian}
        For any matrix $A\in \RR^{n\times n}$, $\tr(A'(I_{\ter}-(\normalized{\N{t}})^{4\delta})A)\ge \frac{1}{3}\tr(A' \mathcal{N}(\M{t})A)$.
    \end{lemma}

    In the following lemma we bound the decrease in potential. The bound consists of the contribution of the matched vertices and it also takes into account the removal of $S_t$ from $A_t$.
    \begin{lemma}
    \label{lemma:potential_step_1}
        For each round $t$, \[\psi(t) - \psi(t+1) \ge \frac{1}{3}\sum_{\{i,k\}\in \M{t}} w_{ik}\norm{\left(\frac{\W{t}{i}}{\sqrt{\dG{}{i}}} - \frac{\W{t}{k}}{\sqrt{\dG{}{k}}}\right)}_2^2 + \sum_{j\in S_{t}\cap \ter} \dG{}{j} \norm{\frac{\W{t}{j}}{\sqrt{\dG{}{j}}}}_2^2\]
    \end{lemma}
    \begin{proof}
        To simplify the notation, we denote $\bar{N}_t \defeq \normalized{\N{t}}$ and $\bar{F}_t \defeq \normalized{\F{t}}$.
        We rewrite the potential in the next iteration as follows:
        \begin{align*}
            \psi(t+1) &= \tr(\W{{t+1}}^2) = \tr\left( \left( \Pmat{{t+1}} \Dminushalf \F{{t+1}} \Dminushalf \Pmat{{t+1}} \right)^{2\delta} \right) 
            \\
            &= \tr\left( \left( \Pmat{{t+1}} \Dminushalf (\N{{t}} U^{-1} \F{{t}} U^{-1} \N{{t}}) \Dminushalf \Pmat{{t+1}} \right)^{2\delta}\right) 
            \\
            &= \tr\left( \left( \Pmat{{t+1}} \Dminushalf (\N{{t}} \Dminushalf \Dminushalf \F{{t}} \Dminushalf \Dminushalf \N{{t}}) \Dminushalf \Pmat{{t+1}} \right)^{2\delta}\right) 
            \\
            &= \tr\left( \left( \Pmat{{t+1}} \bar{N}_{t} \bar{F}_t \bar{N}_{t}  \Pmat{{t+1}} \right)^{2\delta}\right) 
            \\
            &\numeq{6} \tr\left( \left(  \bar{N}_{t} \Pmat{{t+1}} \bar{F}_t  \Pmat{{t+1}} \bar{N}_{t}  \right)^{2\delta}\right) 
            \\
            &\numeq{7} \tr\left( \left(  \bar{N}_{t} \Pmat{{t+1}} \Pmat{t} \bar{F}_t \Pmat{t} \Pmat{{t+1}} \bar{N}_{t}   \right)^{2\delta}\right) 
            \\
            &= \tr\left( \left(  \bar{N}_{t} \Pmat{{t+1}} (\Pmat{t} \bar{F}_t \Pmat{t}) \Pmat{{t+1}} \bar{N}_{t}   \right)^{2\delta}\right) \ ,
        \end{align*}
        where equality $(6)$ follows from Lemma~\ref{lemma:submatrix_properties} (1) for $\N{{t}}$ (which is $\A{t+1}$-blocked $\dG{t+1}$-stochastic by Lemma~\ref{lemma:basic_properties}), and equality $(7)$ follows from Lemma~\ref{lemma:submatrix_properties} (3). 
        
        By Properties (1) and (2) of Lemma~\ref{lemma:submatrix_properties} it holds that $\bar{N}_{t+1} \Pmat{{t+1}} = \Pmat{{t+1}} \bar{N}_{t+1}  = \Pmat{{t+1}} \bar{N}_{t+1} \Pmat{{t+1}}$. 
        Therefore, the potential can be written in terms of symmetric matrices:
        \begin{align*}
            \psi(t+1) &= \tr\left( \left(  (\Pmat{{t+1}} \bar{N}_{t} \Pmat{{t+1}}) (\Pmat{t} \bar{F}_t \Pmat{t}) (\Pmat{{t+1}} \bar{N}_{t} \Pmat{{t+1}})   \right)^{2\delta}\right) 
            \\
            &\le \tr ((\Pmat{{t+1}} \bar{N}_{t} \Pmat{{t+1}})^{2\delta} (\Pmat{t} \bar{F}_t \Pmat{t})^{2\delta} (\Pmat{{t+1}} \bar{N}_{t} \Pmat{{t+1}})^{2\delta}) 
            \\
            &\numeq{2} \tr ( (\Pmat{{t+1}} \bar{N}_{t} \Pmat{{t+1}})^{4\delta} (\Pmat{t} \bar{F}_t \Pmat{t})^{2\delta}) = \tr((\bar{N}_{t} \Pmat{{t+1}})^{4\delta} \W{t}^2) 
            \\
            &\numeq{4} \tr(\bar{N}_{t}^{4\delta}\Pmat{{t+1}} \W{t}^2) \numeq{5} \tr(\bar{N}_{t}^{2\delta}\Pmat{{t+1}}\bar{N}_{t}^{2\delta} \W{t}^2) 
            \\
            &\numeq{6} \tr(\W{t} \bar{N}_{t}^{2\delta}\Pmat{{t+1}}\bar{N}_{t}^{2\delta} \W{t})
            \\
            &\numeq{7} \tr(\W{t} \bar{N}_{t}^{2\delta} \Dminushalf \mathcal{L}\left(\frac{1}{\muG{}{A_{t+1}}} \dG{t+1} \dG{t+1}'\right) \Dminushalf \bar{N}_{t}^{2\delta} \W{t}) 
            \\
            &= \tr\left(\left(\Dminushalf \cdot \bar{N}_{t}^{2\delta} \W{t} \right)' \cdot \mathcal{L}\left(\frac{1}{\muG{}{A_{t+1}}} \dG{t+1} \dG{t+1}'\right) \cdot \left(\Dminushalf \cdot \bar{N}_{t}^{2\delta} \W{t}\right)\right)\ ,
        \end{align*}
        where the inequality follows from Theorem~\ref{theorem:symmetric_rearrangement}, equality $(2)$ follows from Fact~\ref{fact:book_trace_identities}, equalities $(4)$  and $(5)$ follow from Properties (1) and (2) of Lemma~\ref{lemma:submatrix_properties} (and from the fact that $\N{{t}}$ is $\A{t+1}$-blocked $\dG{t+1}$-stochastic, by Lemma~\ref{lemma:basic_properties}), equality $(6)$ again uses Fact~\ref{fact:book_trace_identities} and equality $(7)$ follows from Lemma~\ref{lemma:submatrix_properties} (4).
        
        Let $\Z{t} = \Dminushalf \cdot \bar{N}_{t}^{2\delta} \W{t}$.  By applying Lemma~\ref{lemma:submatrix_properties} (5) we get
        \begin{align}
            \psi(t+1) &\le \tr\left(\Z{t}' \mathcal{L}\left(\frac{1}{\muG{}{A_{t+1}}} \dG{t+1} \dG{t+1}'\right) \Z{t}\right) = \sum_{i=1}^n (\Z{t}{,i})' \mathcal{L}\left(\frac{1}{\muG{}{A_{t+1}}} \dG{t+1} \dG{t+1}'\right) \Z{t} (,i) 
            \notag \\
            &\numeq{2} \sum_{i=1}^n \left(\norm{\U{{t+1}}^{\frac{1}{2}} \Z{t}{,i}}_2^2 - \frac{1}{\muG{}{A_{t+1}}} \left\langle \Z{t}{,i}, \dG{t+1}\right\rangle^2\right) \le \sum_{i=1}^n \norm{\U{{t+1}}^{\frac{1}{2}} \Z{t}{,i}}_2^2 
            \notag \\
            &= \sum_{i=1}^n \sum_{j\in A_{t+1}} \left(\sqrt{\dG{}{j}} \Z{t}{j,i}\right)^2 = \sum_{j\in A_{t+1}} \norm{\Matrix{\U{{t+1}}^{\frac{1}{2}}\Z{t}}{j}}_2^2 
            \numeq{5} \sum_{j\in A_{t+1}} \norm{\Matrix{\bar{N}_{t}^{2\delta}\W{t}}{j}}_2^2 
            \notag \\
            &= \sum_{j\in A_{t}} \norm{\Matrix{\bar{N}_{t}^{2\delta}\W{t}}{j}}_2^2 - \sum_{j\in S_{t}} \norm{\Matrix{\bar{N}_{t}^{2\delta}\W{t}}{j}}_2^2, \label{eq:psit+1ep}
        \end{align}
        where equality $(2)$ holds by Property (5) of Lemma~\ref{lemma:submatrix_properties} and equality $(5)$ holds since we only sum rows in $A_{t+1}$.
        Since $\bar{N}_{t}$ is diagonal outside $A_{t+1}$ (By the definition of $\M{t}$),
        we have that $\Matrix{\bar{N}_{t}^{2\delta} \W{t}}{j} = \W{t}{j}$, for every $j\in S_t$.
        Thus,
        \begin{equation}\label{eq:n(N)_W_correlation}
            \sum_{j\in S_{t}} \norm{\Matrix{\bar{N}_{t}^{2\delta}\W{t}}{j}}_2^2 =
        \sum_{j\in S_{t}} \norm{\W{t}{j}}_2^2. 
        \end{equation}
        
        By Lemma~\ref{lemma:submatrix_properties} (6), we get
        \begin{align}
            \sum_{j\in A_{t}} \norm{\Matrix{\bar{N}_{t}^{2\delta}\W{t}}{j}}_2^2 &= 
            \sum_{j\in A_{t}\cap \ter} \norm{\Matrix{\bar{N}_{t}^{2\delta}\W{t}}{j}}_2^2 \\&=
            \tr(\I{{t}} \cdot \bar{N}_{t}^{2\delta} \cdot \W{t}^2 \cdot \bar{N}_{t}^{2\delta}) \notag \\&=
            \tr(\bar{N}_{t}^{2\delta} \cdot \I{{t}} \cdot  \W{t}^2 \cdot \bar{N}_{t}^{2\delta}) \notag \\&=
            \tr(\bar{N}_{t}^{2\delta} \cdot  \W{t}^2 \cdot \bar{N}_{t}^{2\delta}) \notag \\&=
            \tr(   \bar{N}_{t}^{4\delta}\W{t}^2 ) \label{eq:At}
        \end{align}
        
        where the first equality holds because $\N{t}(i)=0$  for $i\in T$. The second equality holds by Lemma~\ref{lemma:submatrix_properties} (6). The third equality holds since $\N{{t}}$ is $\A{t+1}\cap \ter$-blocked $\dG{t+1}$-stochastic (by Lemma~\ref{lemma:basic_properties}), so in particular it is $\A{t}\cap \ter$-blocked $\dG{t}$-stochastic, and we can use Lemma~\ref{lemma:submatrix_properties} (1). The fourth equality holds because $\I{t} \W{t}  = \I{t} (\Pmat{t} \bar{F}_{t} \Pmat{t})^\delta$ 
        and $\I{t} \Pmat{t}  = \Pmat{t}$ (by Lemma~\ref{lemma:submatrix_properties} (2)), and the last equality follows from Fact~\ref{fact:book_trace_identities}. 
        Plugging 
        Equations (\ref{eq:n(N)_W_correlation}) and (\ref{eq:At}) 
        into (\ref{eq:psit+1ep}) we get the following bound on the decrease in potential: 
        \begin{align*}
            \psi(t) - \psi(t+1) &\ge \tr(   (I-\bar{N}_{t}^{4\delta})\W{t}^2 ) + 
            \sum_{j\in S_{t}} \norm{\W{t}{j}}_2^2 \\&=
            \tr( \W{t}  (I_{\ter}-\bar{N}_{t}^{4\delta})\W{t} ) + 
            \sum_{j\in S_{t}} \norm{\W{t}{j}}_2^2 \\&\ge
            \frac{1}{3}\tr( \W{t}  \mathcal{N}(\M{{t}}) \W{t} ) + 
            \sum_{j\in S_{t}} \norm{\W{t}{j}}_2^2  \\&=
            \frac{1}{3}\tr( (\Dminushalf \W{t})'  \mathcal{L}(\M{{t}}) (\Dminushalf \W{t}) ) + 
            \sum_{j\in S_{t}\cap  \ter}  \dG{}{j} \norm{\frac{\W{t}{j}}{\sqrt{\dG{}{j}}}}_2^2 \\&=
            \frac{1}{3}\sum_{\{i,k\}\in \M{t}}{w_{ik}\norm{\frac{\W{t}{i}}{\sqrt{\dG{}{i}}} - \frac{\W{t}{k}}{\sqrt{\dG{}{k}}}}_2^2} + \sum_{j\in S_{t} \cap \ter} \dG{}{j} \norm{\frac{\W{t}{j}}{\sqrt{\dG{}{j}}}}_2^2
        \end{align*}
        
        where the first equality follows from Fact~\ref{fact:book_trace_identities} and since as implied before $I_{\ter}\W{t}=\W{t}$, the second inequality follows Lemma~\ref{lemma:X_as_normalized_laplacian},  and the last equality follows from Lemma~\ref{lemma:matching_laplacian}.
    \end{proof}

    The following shows that the potential is expected to drop by a factor of $1-\Omega(1/\log n)$.
    \begin{lemma}
    \label{lemma:potential_step_2}
        For each round $t$, 
        \[\EE\left[\frac{1}{3}\sum_{\{i,k\}\in \M{t}} w_{ik} \norm{\frac{\W{t}{i}}{\sqrt{\dG{}{i}}} - \frac{\W{t}{k}}{\sqrt{\dG{}{k}}}}_2^2 + \sum_{j\in S_{t} \cap \ter} \dG{}{j} \norm{\frac{\W{t}{j}}{\sqrt{\dG{}{j}}}}_2^2\right]  
        \ge
        \frac{1}{3000 \alpha \log n}\psi(t) - \frac{1}{n^{\alpha/16}}\]
        for some constant $\alpha$, where the expectation is taken over the unit vector $r\in \RR^n$.
    \end{lemma}    
    \begin{proof}
        Recall that $u_i = \frac{1}{\sqrt{\dG{}{i}}} \langle \W{t}{i}, r \rangle$ for $i\in A_t \cap \ter$. Notice that $\Matrix{\frac{\W{t}{i}}{\sqrt{\dG{}{i}}} }{j} = \frac{\W{t}{i,j}}{\sqrt{\dG{}{i}}}$. 
        Use Lemma~\ref{lemma:projection_pairs} from Appendix~\ref{appendix:projection} for the set of vectors $\left\{\frac{1}{\sqrt{\dG{}{i}}}\W{t}{i} \mid i\in A_t \cap \ter \right\}$. 
        By Lemma~\ref{lemma:projection_pairs}(2), we have with high probability:
        \begin{align*}
            \forall i, k\in A_t \cap \ter &: \norm{\frac{\W{t}{i}}{\sqrt{\dG{}{i}}} - \frac{\W{t}{k}}{\sqrt{\dG{}{k}}}}_2^2 \ge \frac{n}{\alpha\log n}\cdot(u_i-u_k)^2
            \\
            \forall i\in A_t \cap  \ter&: \norm{\frac{\W{t}{i}}{\sqrt{\dG{}{i}}}}_2^2 \ge \frac{n}{\alpha\log n}\cdot u_i^2
        \end{align*}
        for some constant $\alpha > 0$. In order to replace the inequality with high probability by an inequality in expected values, we introduce a random variable $z$ that is non-zero only when this inequality fails to hold, such that  
        \begin{align*}
	        \forall i, k\in A_t \cap \ter &: \norm{\frac{\W{t}{i}}{\sqrt{\dG{}{i}}} - \frac{\W{t}{k}}{\sqrt{\dG{}{k}}}}_2^2 \ge \frac{n}{\alpha \log n}\cdot(u_i - u_k)^2 - z
	        \\
            \forall i\in A_t \cap \ter &: \norm{\frac{\W{t}{i}}{\sqrt{\dG{}{i}}}}_2^2 \ge \frac{n}{\alpha\log n}\cdot u_i^2 - z
        \end{align*}
        holds with probability $1$. \textit{I.e.}, we define
        \begin{align*}
            \mathcal{B} = \{0\} &\cup \left\{\frac{n}{\alpha\log n}(u_i - u_k)^2-\norm{\frac{\W{t}{i}}{\sqrt{\dG{}{i}}} - \frac{\W{t}{k}}{\sqrt{\dG{}{k}}}}_2^2 : (i, k)\in (A_t \cap \ter )\times (A_t \cap \ter)\right\}
            \\
            &\cup \left\{\frac{n}{\alpha\log n}u_i^2-\norm{\frac{\W{t}{i}}{\sqrt{\dG{}{i}}}}_2^2 : i\in A_t \cap \ter\right\}
        \end{align*}
        and $z = \max(\mathcal{B})$. Let $\muBound \coloneqq \langle \1, \muG \rangle = \poly(n)$.
        We get that
        \begin{align*}
             \frac{1}{3}\sum_{\{i,k\}\in \M{t}} w_{ik}\norm{\frac{\W{t}{i}}{\sqrt{\dG{}{i}}} - \frac{\W{t}{k}}{\sqrt{\dG{}{k}}}}_2^2 &\ge
             \frac{n}{3\alpha \log{n}}\sum_{\{i,k\}\in \M{t}} w_{ik}(u_i - u_k)^2 - \muBound\cdot z
             \\
             \sum_{j\in S_{t}} \dG{}{j} \norm{\frac{\W{t}{j}}{\sqrt{\dG{}{j}}}}_2^2 &\ge \frac{n}{\alpha \log{n}} \sum_{j\in S_{t} \cap \ter} \dG{}{j} u_j^2 - K\cdot z.
        \end{align*}
        
        This means that 
        \begin{align*}
            &\frac{1}{3}\sum_{\{i,k\}\in \M{t}} w_{ik}\norm{\left(\frac{\W{t}{i}}{\sqrt{\dG{}{i}}} - \frac{\W{t}{k}}{\sqrt{\dG{}{k}}}\right)}_2^2 + \sum_{j\in S_{t}\cap \ter} \dG{}{j} \norm{\frac{\W{t}{j}}{\sqrt{\dG{}{j}}}}_2^2
            \\
            &\ge \frac{n}{3\alpha \log{n}}\sum_{\{i,k\}\in \M{t}} w_{ik}(u_i - u_k)^2 + 
            \frac{n}{\alpha \log{n}} \sum_{j\in S_{t} \cap \ter} \dG{}{j} u_j^2  - 2\muBound\cdot z\\&\numge{2}
            \frac{n}{3\alpha \log{n}}\sum_{i\in A_t^{l}\setminus S_t} m_t(i) (u_i - \eta)^2 + 
            \frac{n}{\alpha \log{n}} \sum_{j\in S_{t}} \dG{}{j} u_j^2  - 2\muBound\cdot z\\&\ge
            \frac{n}{3\alpha \log{n}}\sum_{i\in A_t^{l}\setminus S_t} m_t(i) (u_i - \eta)^2 + 
            \frac{n}{\alpha \log{n}} \sum_{j\in A_t^{l} \cap S_{t}} \dG{}{j} u_j^2  - 2\muBound\cdot z\\&\numge{4}
            \frac{n}{27\alpha \log{n}}\sum_{i\in A_t^{l}\setminus S_t} m_t(i) u_i^2 + 
            \frac{n}{\alpha \log{n}} \sum_{j\in A_t^{l} \cap S_{t}} \dG{}{j} u_j^2  - 2\muBound\cdot z\\&\numge{5}
            \frac{n}{27\alpha \log{n}}\sum_{i\in A_t^{l}} m_t(i) u_i^2 - 2\muBound\cdot z \ge
            \frac{n}{3000\alpha \log{n}}\sum_{i\in A_t} \dG{}{i} u_i^2 - 2\muBound\cdot z \ .
        \end{align*} 
        Inequality $(2)$ is due to Lemma~\ref{lemma:RST}(1), the fact that $\muG(v)=0$ for every $v\notin \ter$ and because each $i\in A^l_t \setminus S_t$ is matched to $A^r_t\setminus S_t$ with a total weight of $m_i$, inequality $(4)$ follows from Lemma~\ref{lemma:RST}(4), inequality $(5)$ is true because $m_i\le \dG{}{i}$ for all $i\in A^l_t$, and the last inequality follows from Lemma~\ref{lemma:RST}(5).
        Finally, in expectation,
        \begin{align*}
            &\EE\left[\frac{1}{3}\sum_{\{i,k\}\in \M{t}} w_{ik}\norm{\left(\frac{\W{t}{i}}{\sqrt{\dG{}{i}}} - \frac{\W{t}{k}}{\sqrt{\dG{}{k}}}\right)}_2^2 + \sum_{j\in S_{t}\cap \ter} \dG{}{j} \norm{\frac{\W{t}{j}}{\sqrt{\dG{}{j}}}}_2^2\right]
            \\
            &\ge \frac{n}{3000\alpha \log{n}}\sum_{i\in A_t} \dG{}{i} \EE[u_i^2] - 2\muBound\cdot\EE[z]\\&=
            \frac{1}{3000\alpha \log{n}}\sum_{i\in A_t\cap \ter} \dG{}{i} \norm{\frac{\W{t}{i}}{\sqrt{\dG{}{i}}}}_2^2 - 2\muBound\cdot\EE[z] \\&=
            \frac{1}{3000\alpha \log{n}}\sum_{i\in A_t} \norm{\W{t}{i}}_2^2 - 2\muBound\cdot\EE[z] \\&=
            \frac{1}{3000\alpha \log{n}}\psi(t) - 2\muBound\cdot\EE[z] 
        \end{align*}
        where the first equality follows from Lemma~\ref{lemma:projection_pairs} (1).
        
        We note that $z = O(\poly(n))$. Indeed, for every $i \in \A{t} \cap \ter$
        \[
        u_i = \left\langle \frac{\W{t}{i}}{\sqrt{\dG{}{i}}}, r \right\rangle \le
        \norm{\frac{\W{t}{i}}{\sqrt{\dG{}{i}}}}_2 = 
        \frac{\norm{\W{t}{i}}_2}{\sqrt{\dG{}{i}}} \le \frac{1}{\sqrt{\dG{}{i}}} = O(\poly(n)),\footnote{The last equality explains why Assumption~\ref{assumption:mu} requires $\min_{v\in \ter}\mu(v) \ge \frac{1}{\poly(n)}$.}
        \]
        where the first inequality holds due to Cauchy-Schwartz since $r$ is a unit vector and the last inequality holds since all eigenvalues of $\W{t}$ are in $[0,1]$: By Lemma~\ref{lemma:normalized-laplacian-eigenvalues} we get that $\Dminushalf \F{t} \Dminushalf = I_{\ter} - \mathcal{N}(\F{t})$ has eigenvalues in $[-1,1]$. Since $\Pmat{t}$ is a projection matrix, it has eigenvalues in $[0,1]$. Therefore, $\Pmat{t} \Dminushalf \F{t} \Dminushalf \Pmat{t}$ has eigenvalues in $[-1,1]$. Finally, $\W{t}= (\Pmat{t} \Dminushalf \F{t} \Dminushalf \Pmat{t})^\delta$ is PSD since $\delta$ is a power of $2$, so its eigenvalues are in $[0,1]$.
        
        By  Lemma~\ref{lemma:projection_pairs}, $z$ is non-zero with probability at most $\frac{1}{n^{\alpha/8}}$, so for large enough $\alpha$,  we get
        $2\muBound \EE[z] \le \frac{2\muBound \poly(n)}{n^{\alpha/8}} \le \frac{1}{n^{\alpha/16}}$,\footnote{The second Inequality explains why Assumption~\ref{assumption:mu} requires $\max_{v\in \ter}\mu(v) \le \poly(n)$.} completing the proof. 
    \end{proof}

    The following two corollaries follow by Lemmas~\ref{lemma:potential_step_1} and~\ref{lemma:potential_step_2}.
    
\begin{corollary}
    \label{cor:potential_reduction}
        For each round $t$, $\EE[\psi(t+1)] \le
        \left(1-\frac{1}{3000 \alpha \log n}\right)\psi(t) + \frac{1}{n^{\alpha/16}}$, where the expectation is over the unit vector $r\in \RR^n$.
    \end{corollary}
    
    \begin{corollary} [Total Potential]
    \label{cor:total_potential}
        With high probability over the choices of $r$, $\psi(T) \le \frac{1}{n}$.
    \end{corollary}

    \begin{lemma}
    \label{lemma:F_expander}
    Let $H=(V,\bar{E})$ be a graph on
    $n$ vertices, such that $\F{T}$ is its  weighted adjacency matrix. Assume that $\psi(T) \le \frac{1}{n}$. Then, $A_T$ is a near $(\frac{1}{5},\muG)$-expander with respect to $\mu$ in $H$.
    \end{lemma}
    \begin{proof}
        Recall that $\F{T}$ is symmetric and $\dG{T}$-stochastic. Let $k = \muG{}{A_T}$.
        Let $S\subseteq A_T$ be a cut, and denote $\dG{S}\in\RR^n$ to be the vector where $\dG{S}{u} = \left\{\begin{array}{cl}
            \dG{}{u} & \mbox{if $u\in S$,} \\
            0 &\mbox{otherwise.}
            \end{array}\right.$ Additionally, denote $\ell = \muG{}{S} \le \frac{1}{2}k$. Note that $\norm{\sqrt{\dG{S}}}_2^2 = \ell$. 
         
        Denote by $\bar{\lambda} \ge 0$ the largest singular value of $\X{T} \defeq \Pmat{T} \Dminushalf \F{T} \Dminushalf \Pmat{T}$ (square root of the largest eigenvalue of $(\Pmat{T} \Dminushalf \F{T} \Dminushalf \Pmat{T})^2$).
        Because $\tr(\X{T}^{2 \delta}) = \psi(T) \le \frac{1}{n}$, we have in particular that the largest eigenvalue of $\X{T}^{2 \delta}$ is at most $\frac{1}{n}$, so we have $\bar{\lambda} \le \frac{1}{n^{\frac{1}{\delta}}}$. We choose $\delta  = \Theta(\log n)$\footnote{This gives the upper bound on $\delta$. The lower bound comes from Lemma~\ref{lemma:F_t_embeddable_in_G_t}.} such that $\frac{1}{n^{\frac{1}{\delta}}}\le \frac{1}{20}$, so  $\bar{\lambda}\le\frac{1}{20}$. 
            
        In order to prove near-expansion we need to lower bound $|E_{\F{T}}(S,\comp{S})|$. We do so by upper bounding $|E_{\F{T}}(S,S)| =  \1_S' \F{T} \1_S$. Because $\F{t}$ is $\ter$-blocked we get that $\1_S' \F{T} \1_S = \1_S' (\I{T} \F{T} \I{T}) \1_S$. Observe the following relation between $\X{T}$ and $\I{T} \F{T} \I{T}$: 
        \begin{align*}
            \Uhalf\X{T}\Uhalf &= \Uhalf(\Pmat{T} \Dminushalf \F{T} \Dminushalf \Pmat{T}) \Uhalf 
            =
            \Uhalf(\I{T} - \frac{1}{k}\sqrt{\dG{T}}\sqrt{\dG{T}'}) \Dminushalf \F{T} \Dminushalf 
            (\I{T} - \frac{1}{k}\sqrt{\dG{T}}\sqrt{\dG{T}'}) \Uhalf 
            \\
            &= (\I{T} - \frac{1}{k}\dG{T}\1_T') \F{T} (\I{T} - \frac{1}{k}\1_T\dG{T}') 
            =
            \I{T} \F{T} \I{T} - \frac{1}{k}\dG{T}\1_T' \F{T} \I{T} - \frac{1}{k} \I{T} \F{T} \1_T\dG{T}' + 
            \frac{1}{k^2} \dG{T}\1_T' \F{T} \1_T \dG{T}'.
        \end{align*}

        Rearranging the terms, we get
        \begin{align*}
            \I{T} \F{T} \I{T} = 
            \Uhalf \X{T} \Uhalf 
            +\frac{1}{k}\dG{T}\1_T' \F{T} \I{T}
            +\frac{1}{k} \I{T} \F{T} \1_T\dG{T}'
            -\frac{1}{k^2} \dG{T}\1_T' \F{T} \1_T \dG{T}'  \ .
        \end{align*}
        
        Therefore
         \begin{align*}
            |E_{\F{T}}(S,S)| = & \1_S' \F{T} \1_S =
            \1_S' \left(
            \Uhalf\X{T} \Uhalf +
            \frac{1}{k}\dG{T}\1_T' \F{T} \I{T} + 
            \frac{1}{k} \I{T} \F{T} \1_T\dG{T}' - 
            \frac{1}{k^2} \dG{T}\1_T' \F{T} \1_T \dG{T}'
            \right) \1_S .
        \end{align*}
        
        We analyze the summands separately. The first summand can be bounded using $\bar{\lambda}$, the largest singular value of $\X{T}$:
         \begin{align*}
            \1_S' \Uhalf\X{T}\Uhalf \1_S = \sqrt{\dG{S}'} X \sqrt{\dG{S}} =
            \left\langle \sqrt{\dG{S}}, X \sqrt{\dG{S}} \right\rangle \le
            \norm{\sqrt{\dG{S}}}_2 \norm{\X{T}\sqrt{\dG{S}}}_2 \le \norm{\sqrt{\dG{S}}}_2^2 \bar{\lambda} \le \frac{\ell}{20},
        \end{align*}
          where the first inequality is the Cauchy-Schwartz inequality. Observe that the second and third summands are equal:
        \begin{align*}
            \frac{1}{k} \1_S' \dG{T}\1_T' \F{T} \I{T} \1_S =
            \frac{\ell}{k} \1_T' \F{T} \1_S = 
            \frac{\ell}{k} \1_S' \F{T} \1_T = 
            \frac{1}{k} \1_S' \I{T} \F{T} \1_T \dG{T}' \1_S,
        \end{align*}
        
        where the second equality follows by transposing and since $\F{T}$ is symmetric. We now bound the sum of the second, third and fourth summands:
        \begin{align*}
            &\1_S' \left(
            \frac{1}{k}\dG{T}\1_T' \F{T} \I{T} + 
            \frac{1}{k} \I{T} \F{T} \1_T\dG{T}' - 
            \frac{1}{k^2} \dG{T}\1_T' \F{T} \1_T \dG{T}'\right) \1_S =
            \frac{2\ell}{k} \1_T' \F{T} \1_S - \frac{\ell^2}{k^2}\1_T' \F{T} \1_T \\&\le
            \left( \frac{2\ell}{k} - \frac{\ell^2}{k^2} \right)\1_T' \F{T} \1_S \le
            \left( \frac{2\ell}{k} - \frac{\ell^2}{k^2} \right)\1' \F{T} \1_S = 
            \left( \frac{2\ell}{k} - \frac{\ell^2}{k^2} \right)\dG' \1_S = 
            \frac{\ell}{k}\left( 2 - \frac{\ell}{k} \right) \ell,
        \end{align*}
        
        where the first inequality follows since $S \subseteq \A{t}$. Note that $\frac{\ell}{k}\in [0,\frac{1}{2}]$. The last inequality is true because for $\frac{\ell}{k}$ in this range, $\left(\frac{2\ell}{k}-\frac{\ell^2}{k^2}\right)\ge 0$. Moreover, because $\frac{\ell}{k}\in \left[0,\frac{1}{2}\right]$, we have $\frac{\ell}{k}\left(2-\frac{\ell}{k}\right) \le \frac{3}{4}$. Therefore, $|E_{\F{T}}(S,S)| \le \frac{1}{20}\ell + \frac{3}{4}\ell = \frac{4}{5}\ell$, and
        \begin{align*}
            |E(S,\comp{S})| &= \sum_{u\in S}{\sum_{v\in \comp{S}}{\F{T}{u,v}}} = \sum_{u\in S}{\sum_{v\in V}{\F{T}{u,v}}} - \sum_{u\in S}{\sum_{v\in S}{\F{T}{u,v}}}
            \\
            &= \sum_{u\in S}{\dG{}{u}} - \sum_{u\in S}{\sum_{v\in S}{\F{T}{u,v}}} \ge \ell - \frac{4}{5}\ell = \frac{\ell}{5} \ .
        \end{align*}
        So $\Phi_G(S, \comp{S}) = \frac{|E(S,\comp{S})|}{\muG{}{S}} \ge \frac{1}{5}$, and this is true for all cuts $S\subseteq A$ with $\frac{\muG{}{S}}{\muG{}{A_t}}\le\frac{1}{2}$.
            
    \end{proof}

    \begin{corollary}
    \label{cor:G_expander}
        If we reach round $T$, then with high probability, $A_T$ is a near $(\Omega(\phi), \muG)$-expander in $G$.
    \end{corollary}
    \begin{proof}
        Assume we reach round $T$. By Corollary~\ref{cor:total_potential} and Lemma~\ref{lemma:F_expander}, with high probability, $A_T$ is a near $(\Omega(1),\muG)$-expander in $\F{T}$. By Lemma~\ref{lemma:F_t_embeddable_in_G_t}, $\F{T}$ is embeddable in $G_T$ with congestion $O(\frac{1}{\delta})$. 
        Therefore, by Lemma~\ref{lemma:near_expansion_and_embedding}, $A_T$ is a near $(\Omega(\delta),\muG)$-expander in $G_T$.
        
        Furthermore, by Lemma~\ref{lemma:G_t_embeddable_in_G}, $G_T$ is embeddable in $G$ with congestion $cT$. So by Lemma~\ref{lemma:near_expansion_and_embedding} again, it follows that $A$ is a near  $(\Omega(\frac{\delta}{cT}),\muG)$-expander in $G$. Recall that $c=O\left(\frac{1}{\phi\log n}\right)$, $\delta = \Theta(\log n)$, and $T=O(\log^2 n)$. Therefore, $A$ is an near $(\Omega(\phi),\muG)$-expander in $G$.
    \end{proof}

    \subsection{Proof of Theorem \texorpdfstring{\ref{theorem:main-result}}{3.1}}
    \label{section:main-result_theorem_proof}
    
    We are now ready to prove Theorem~\ref{theorem:main-result}. 
    \begin{proof} [Proof of Theorem~\ref{theorem:main-result}]
        
        Recall that $S_t$ denotes the cut returned by Lemma~\ref{lemma:fair-cuts} (applied on $G[A_t]$) at iteration $t$, so that $A_{t+1} = A_{t}\setminus S_{t}$. We write $c_1 \defeq c\phi\log n = O(1)$, and let $c_0 \defeq \frac{7}{c_1}$.
	    
	    Observe first that in any round $t$, we have $\Phi^{\muG}_{G}(A_t, R_t)\le\frac{7}{c}=O(\phi\log n)$. 
	    Indeed, since $R_t=\bigcup_{0\le t' < t}{S_{t'}}$, Lemma~\ref{lemma:fair-cuts} implies that, with high probability, for each $t'$, $\Phi^{\muG}_{G[A_{t'}]}(S_{t'}, \bar{S_{t'}})\le\frac{7}{c}=O(\phi\log n)$. 
	    
	    Assume Algorithm~\ref{algo:cut_matching} terminates because $\muG{}{R_t}>\frac{\muG{}{V}\cdot c\cdot \phi}{70}=\Omega(\frac{\muG{}{V}}{\log n})$.
        Hence, $\muG{}{R_{t-1}}\le \frac{\muG{}{V}\cdot c\cdot \phi}{70} = \frac{\muG{}{V}\cdot c_1}{70\log n}$.
        By Lemma~\ref{lemma:fair-cuts}, we also have $\muG{}{S_t} \le \frac{2}{3} \muG{}{A_t} \le \frac{2}{3}\muG{}{V}$ with high probability. 
        Combining these inequalities gives $\muG{}{R_t}=\muG{}{R_{t-1}}+\muG{}{S_t} \le \frac{3}{4}\muG{}{V}$\footnote{For sufficiently large $n$, we have $\log n \ge c_1$.}. 
        In particular, $\muG{}{A_t}= \muG{}{V} -\muG{}{R_t} = \Omega(\muG{}{V})= \Omega(\frac{\muG{}{V}}{\log n})$. Therefore $(A_t,R_t)$ is a balanced cut where $\Phi_{G}(A_t, R_t)=O(\phi\log n)$. We end in Case (2) of Theorem~\ref{theorem:main-result}.  
	    
	    Otherwise, Algorithm~\ref{algo:cut_matching} terminates at round $T$ and we apply Corollary~\ref{cor:G_expander}. If $R=\emptyset$, then we obtain the first case of Theorem~\ref{theorem:main-result} because the whole vertex set $V$ is, with high probability, a near $(\Omega(\phi),\muG)$-expander, which means that $G$ is an $(\Omega(\phi), \muG)$-expander. Otherwise, we have $\Phi_{G}(A_T, R_T)\le\frac{7}{c}=\frac{7}{c_1} \phi \log n = c_0 \phi\log n$.
	    Additionally, $\muG{}{R_{t}}\le\frac{\muG{}{V}\cdot c\cdot \phi}{70}=\frac{\muG{}{V}\cdot c_1}{70\log n}=\frac{\muG{}{V}}{10c_0\log n}$, and, with high probability, $A_T$ is a near $(\Omega(\phi),\mu)$-expander in $G$, which means we obtain the third case of Theorem~\ref{theorem:main-result}.

	    Finally, the running time of Algorithm ~\ref{algo:cut_matching} is $\tilde{O}(m)$:
        The time of iteration $t$ is the sum of the running times of the following steps:
        \begin{enumerate}
            \item Sample a random unit vector $r\in \RR^n$.
            \item Compute the projections vector $u = \Dminushalf \W{t} \cdot r$. This takes $O(t\cdot\delta\cdot m)=O(m\cdot t\cdot\log n)$ time since $\W{t}$ is a multiplication of $O(t \cdot \delta)$ matrices, where each matrix either has $O(m)$ non-zero entries or is a projection matrix $\Pmat{t}$.
            \item Computing $A^l_t$ and $A^r_t$ in time $O(n\log n)$ (Lemma~\ref{lemma:RST}).
            \item Computing the cut $S_t$ and the flow $f$ on $G - S_t$ in time $\tilde{O}(m)$ (Lemma~\ref{lemma:fair-cuts}).
            \item Moving $S_t$ from $A_t$ to $R_{t+1}$ in time $O(m)$.
            \item Constructing $M_{t}$ in time $O(m\log n)$ (using dynamic trees \cite{ST83}).
        \end{enumerate}
        This gives a total running time of $\tilde{O}(m)$ for iteration $t$. As $t$ ranges from $1$ to $T=\Theta(\log^2 n)$, this completes the proof of Theorem~\ref{theorem:main-result}.
    \end{proof}

\section{Expander Decomposition}\label{section:expander_decomposition}

    In this section, we present the standard derivation of an expander decomposition (Theorem~\ref{theorem:expander_decomposition}) from Theorem~\ref{theorem:cut_matching}. 
    
    The following lemma, called ``the trimming step'', is the last key component for the expander decomposition algorithm. This is a slight generalization of~\cite[Theorem 8.2]{LNPSsoda13}. For completeness, we include its proof in Appendix~\ref{appendix-proofs}.

    \begin{theorem}
    \label{theorem:trimming}
        Given a graph $G=(V, E)$, a set $A\subseteq V$, a parameter $\phi>0$, and a vertex measure $\muG : V \to \RR_{\ge 0}$, such that $A$ is a near $(\phi,\muG)$-expander in $G$ and $|E(A, \comp{A})|\le\frac{\phi\cdot\mu(A)}{9}$,
        there exists a randomized algorithm (the trimming step) which finds in time $\tilde{O}(m)$ a set $A'\subseteq A$ such that with high probability:
        \begin{itemize}
            \item $\Phi^\mu(G[A'])\ge\phi/6$,
            \item $\mu(A')\ge\mu(A) - \frac{4|E(A, \comp{A})|}{\phi}$, and
            \item $|E(A', \comp{A'})|\le 2|E(A, \comp{A})|$.
        \end{itemize}
    \end{theorem}

    Using Theorem~\ref{theorem:trimming} and Theorem~\ref{theorem:cut_matching}, we prove the following improvement on Theorem~\ref{theorem:cut_matching}.
    
    \begin{theorem}
    \label{theorem:main-result3}
        Given a graph $G=(V,E)$ with $n$ vertices and $m$ edges, a parameter $\phi > 0$, and a vertex measure $\mu : V \to \RR_{\ge 0}$ satisfying Assumption \ref{assumption:mu}, there exists a randomized algorithm which takes $\tilde{O}(m)$ time and must end in one of the following three cases:
        \begin{enumerate}
            \item We certify that $G$ has $\mu$-expansion $\Phi^{\mu}(G)=\Omega(\phi)$. 
            \item We find a cut $(A,\comp{A})$ in $G$ of $\mu$-expansion $\Phi^{\mu}_G(A,\comp{A})=O(\phi\log n)$, and $\mu(A), \mu(\comp{A})$ are both $\Omega\!\left(\frac{\mu(V)}{\log n}\right)$, i.e, we find a relatively balanced low $\mu$-expansion cut.
            \item We find a cut $(A,\comp{A})$ with $0<\mu(\bar A)\le \frac{\mu(V)}{2}$, $\Phi^{\mu}_G(A,\comp{A})=O(\phi\log n)$, and $G[A]$ is $\Omega(\phi)$-expander with respect to $\mu$. 
        \end{enumerate}
        The correctness of the algorithm holds with high probability.
    \end{theorem}
    
    \begin{proof}[Proof of Theorem~\ref{theorem:main-result3}]
        We apply Theorem~\ref{theorem:cut_matching}. Cases (1) and (2) translate directly to Cases (1) and~(2) of Theorem~\ref{theorem:main-result3}. If Theorem~\ref{theorem:cut_matching} terminates with Case (3), we use Theorem~\ref{theorem:trimming}. We have $\mu(\comp{A})\le\frac{\mu(V)}{10c_0\log n}$, and $\Phi_G^\mu(A, \comp{A})\le c_0\phi\log n$. This means that for large enough $n$,  $|E(A, \comp{A})|\le\frac{\mu(V)\cdot\phi}{10} \le \frac{\mu(A)\cdot\phi}{9}$.
        The trimming step takes $\tilde{O}\left(m\right)$ time, and we return the cut $(A', \comp{A'})$ as Case~(3). Indeed, $G[A']$ is a $\Omega(\phi)$ expander with respect to $\mu$. Additionally, for large enough n, $\mu(\comp{A'}) = \mu(\comp{A}) + \mu(A) - \mu(A')\le\mu(\comp{A}) + \frac{4|E(A, \comp{A})|}{\phi} \le \frac{\mu(V)}{10c_0\log n}+\frac{2\mu(V)}{5}\le\frac{\mu(V)}{2}$. 
        This means that $\comp{A'}$ is still the smaller side of the cut, with respect to $\mu$. By Theorem~\ref{theorem:trimming}, we have that $|E(A', \comp{A'})|\le 2|E(A, \comp{A})|$, so $\Phi_G^\mu(A', \comp{A'}) = \frac{|E(A', \comp{A'})|}{\mu(\comp{A'})} \le 2\frac{|E(A, \comp{A})|}{\mu(\comp{A})} = 2\Phi_G^\mu(A, \comp{A}) = O(\phi\log n)$. 
    \end{proof}

    Note that case $(3)$ may return a balanced cut. We get the following stronger version of Theorem~\ref{theorem:main-result3} by classifying case $(3)$ as case $(2)$ whenever $\muG{}{\comp{A}}>\frac{\muG{}{V}}{\log n}$. 

    \begin{corollary}
    \label{theorem:main-result4}
        Given a graph $G=(V,E)$ with $n$ vertices and $m$ edges, a parameter $\phi > 0$, and a vertex measure $\mu : V \to \RR_{\ge 0}$ satisfying Assumption \ref{assumption:mu}, there exists a randomized algorithm which takes $\tilde{O}(m)$ time and must end in one of the following three cases:
        \begin{enumerate}
            \item We certify that $G$ has $\mu$-expansion $\Phi^{\mu}(G)=\Omega(\phi)$. 
            \item We find a cut $(A,\comp{A})$ in $G$ of $\mu$-expansion $\Phi^{\mu}_G(A,\comp{A})=O(\phi\log n)$, and $\mu(A), \mu(\comp{A})$ are both $\Omega\!\left(\frac{\mu(V)}{\log n}\right)$, i.e, we find a relatively balanced low $\mu$-expansion cut.
            \item We find a cut $(A,\comp{A})$ with $0<\mu(\bar A)\le \frac{\mu(V)}{\log n}$, $\Phi^{\mu}_G(A,\comp{A})=O(\phi\log n)$, and $G[A]$ is $\Omega(\phi)$-expander with respect to $\mu$. 
        \end{enumerate}
        The correctness of the algorithm holds with high probability.
    \end{corollary}

    \begin{proof}
        In case Theorem~\ref{theorem:main-result3} terminated with Case (3) and the cut was relatively balanced (specifically, $\frac{\mu(V)}{\log n}<\mu(\comp{A})\le\frac{\mu(V)}{2}$), we instead terminate with Case (2). All conditions of Case (2) are satisfied.
    \end{proof}

    We now show how to get an expander decomposition from Theorem~\ref{theorem:main-result3}. The following is based on the expander decomposition procedure of~\cite{saranurak2019expander, agassy2022expander}.

    \begin{algorithm}[hbt!]
        \caption{Expander Decomposition~\cite{saranurak2019expander, agassy2022expander}}
        \label{algo:expander_decomposition}
        \begin{algorithmic}[1]
            \Function{Decomp}{$G$, $\phi, \muG$}
                \State Call Cut-Matching($G$, $\phi$) \Comment{See Corollary~\ref{theorem:main-result4}.}
                \If {Case (1): we certify that $\Phi^{\muG}_G\ge \phi$}  
                    \Return $\{V\}$.
                \ElsIf {Case (2): we find a relatively balanced cut $(A,\comp{A})$}
                    \State \Return Decomp($G[A]$, $\phi$, $\muG$) $\cup$ Decomp($G[\comp{A}]$, $\phi$, $\muG$).
                \Else ~{Case (3): we find a very unbalanced cut $(A, \comp{A})$} \Comment{$G[A]$ is an expander.}
                    \State \Return Decomp($G[\comp{A}]$, $\phi$, $\muG$) $\cup \{A\}$.
                \EndIf
            \EndFunction
        \end{algorithmic}
    \end{algorithm}
	
    \begin{proof}[Proof of Theorem~\ref{theorem:expander_decomposition}]
        First, note that by definition, the leaves of the recursion tree give an expander decomposition with $\muG$-expansion of $\Omega(\phi)$.
	    
	    To bound the running time, note that if we get Case (2) of Corollary~\ref{theorem:main-result4} then both sides of the cut have total $\muG{}{\cdot}$ measure of at most $\left(1-\Omega\!\left(\frac{1}{\log n}\right)\!\right)\!\cdot\! \muG{}{V}$. If we get Case (3), then $\muG{}{A} = \Omega(\muG{}{V})$, so the total $\muG{}{\cdot}$ is reduced by a constant factor (\ie, $\muG{}{\comp{A}}\le \left(1-\Omega(1)\right)\cdot \muG{}{V}$). In any case, the total $\muG{}{\cdot}$ measure of the largest component drops by a factor of at least $1-\Omega\!\left(\frac{1}{\log n}\right)$ across each level of the recursion, so (as $\muG{}{V} = O(\poly(n))$, by Assumption~\ref{assumption:mu}) 
        the recursion depth is $O(\log^2 n)$.

	    Since the components on one level of the recursion are all disjoint, the total running time on all components of one level of the recursion is $\tilde{O}(m)$. Since the depth of the recursion is $O(\log^2 n)$ we get that the running time is $\tilde{O}(m)$.

	    To bound the number of edges between expander clusters, observe that in both Case (2) and Case (3), we always cut a component along a cut of $\muG$-expansion $O(\phi \log n)$. Thus, we can charge the edges on the cut to the $\muG{}{\cdot}$ of the vertices on the smaller side of the cut (with respect to $\muG$), so each vertex $v$ is charged $O(\phi \muG{}{v} \log n)$. A vertex can be on the smaller side of a cut at most $O(\log n)$ times, 
        so we can charge each vertex $v$ at most
        $O(\phi \muG{}{v}\log^2 n)$ to pay for all the edges between the final clusters. This bounds the total number of edges between the expanders to be at most $O(\phi \muG{}{V}\log^2 n)$.
	\end{proof}

\newpage

{\small
	
	\bibliographystyle{alpha}
	\bibliography{refs}
}	
	\newpage
\appendix

\section{Omitted proofs}\label{appendix-proofs}
This appendix contains the proofs that were deferred from the main text. Section~\ref{appendix:omitted-proofs-1} gives algebraic and embedding lemmas that support the analysis of the cut-matching game.
In Section~\ref{appendix:omitted-proofs-2} we introduce the fair-cut framework of~\cite{LNPSsoda13} and use it to analyze flow-based routines: first the flow subroutine in the cut-matching game (Lemma~\ref{lemma:fair-cuts}), and then the trimming step (Theorem~\ref{theorem:trimming}).

\subsection{Algebraic and Embedding Proofs}\label{appendix:omitted-proofs-1}

    \begin{proof}[Proof of Lemma~\ref{lemma:RST}]
        Let $L = \{ i\in A_t \mid u_i < 0 \}$ and $R = \{ i\in A_t \mid u_i \ge 0 \}$ and assume w.l.o.g. that $\muG{}{L} \le \muG{}{R}$. In particular, $\muG{}{L} \le \muG{}{A_t}/2 \le \muG{}{R}$. For a set $B \subseteq A_t$, denote $P_B = \sum_{i\in B} \dG{}{i}u_i^2$. We divide into cases.

        \textbf{Case $P_L \ge \frac{1}{20}P_{A_t}$:} We set $\eta = 0$, $A_t^r=R$, and $\bar{m}_t(i) = \muG{}{i}$ for $i\in A_t^r$. The construction of $A_t^l$ is as follows: If $\muG{}{L}\le \muG{}{A_t}/8$, then we set $A_t^l = L$, and  $m_t(i) = \muG{}{i}$ for $i\in A_t^l$.
        
        Otherwise, we sort $L$ according to $u_i$ and scan the vertices (from smallest to largest) until the accumulative $\muG{}{\cdot}$ weight of the scanned vertices is at least $\muG{}{A_t}/8$. Let $i_1,\ldots, i_k$ be the scanned vertices. We set $A_t^l = \{i_1,\ldots, i_k\}$,  $m_t(i_j) = \muG{}{i_j}$ for $1\le i_j < k$, and $m_t(i_k) = \muG{}{A_t}/8 - \muG{}{A_t^l \setminus \{i_k\}}$.   
        It is easy to verify that properties $(1)$-$(3)$ of the lemma hold. Property $(4)$ holds because of the choice $\eta=0$. Finally, property $(5)$ holds since $P_L \ge \frac{1}{20}P_{A_t}$ and since $A_t^l$ contains the vertices with the smallest $u_i$ values (of total $\muG{}{\cdot}$ weight equaling $\muG{}{A_t}/8$, while $\muG{}{L}\le \muG{}{A_t}/2$) in $L$.

        \textbf{Case $P_L < \frac{1}{20}P_{A_t}$:} Therefore $P_R > \frac{19}{20}P_{A_t}$. By the statement of the lemma, we have that $\sum_{i\in A_t}{\dG{}{i} u_i} = 0$. Denote by $\Delta$ the total distance from zero, that is $\Delta = \sum_{i \in A_t} \dG{}{i}|u_i|$. Observe that $\sum_{i \in L} \dG{}{i}|u_i| = \sum_{i \in R} \dG{}{i}|u_i| = \Delta/2$. Let $M = \muG{}{A_t}$. We set:
        \begin{itemize}
            \item $\eta = 4\Delta/M$.
            \item $A_t^r = \{i \in A_t \mid  u_i \le \eta \}$, and $\bar{m}_t(i) = \muG{}{i}$ for $i\in A_t^r$. 
            \item $R' = \{i \in A_t \mid  u_i \ge 6\Delta/M \}$
            \item $A_t^{\ell}$ is the subset of $R'$ consisting of the vertices with the largest $u_i$ values, chosen so that their total $\muG{}{\cdot}$ weight satisfies $\mu(A_t^{l}) = \muG{}{A_t}/8$.\footnote{This is done similarly to the previous case. Again, $m_t(i)=\muG{}{i}$ for every $i\in A_t^l$, except for possibly one vertex.}
        \end{itemize}
        
        It is easy to see that properties $(1),(2)$ of the lemma hold. Note that $\muG{}{R\setminus A_t^r} \le \muG{}{R}/2$, as otherwise
        \begin{align*}
            \Delta = \sum_{i \in A_t} \dG{}{i}|u_i| 
            \ge
            \sum_{i \in R \setminus A_t^r} \dG{}{i}|u_i|
            \ge
            \sum_{i \in R \setminus A_t^r} \dG{}{i}\frac{4\Delta}{M}
            =
            \muG{}{R \setminus A_t^r}\frac{4\Delta}{M}
            >
            \frac{2\muG{}{R}\Delta}{M}
            \ge
            \Delta,
        \end{align*}
        a contradiction. 
        Therefore, $\muG{}{A_t^r} = \muG{}{L} + \muG{}{R\cap A_t^r} \ge (\muG{}{L} + \muG{}{R})/2 = \muG{}{A_t}/2$, proving property $(3)$ of the lemma. To see property $(4)$, let $i\in A_t^l$. Observe that $u_i \ge  6\Delta/M$ and $\eta =  4\Delta/M$, so $u_i - \eta \ge \frac{1}{3}u_i$ and therefore $(u_i-\eta)^2 \ge \frac{1}{9}u_i^2$. We now prove property $(5)$. We show that $P_{R'}\ge \frac{1}{20}P_{A_t}$ and then the proof is identical to the previous case. By applying the Cauchy–Schwarz inequality on $a_i = \sqrt{\muG{}{i}}u_i$ and $b_i = \sqrt{\muG{}{i}}$ for $i\in L$, we get
        \begin{align*}
            P_{L} =\sum_{i\in L} \muG{}{i}u_i^2 
            =
            \langle\Vec{a},\Vec{a}\rangle \ge \frac{\langle \Vec{a}, \Vec{b} \rangle^2}{\langle\Vec{b},\Vec{b}\rangle}
            =
            \frac{\Delta^2}{4\muG{}{L}}\ge \frac{\Delta^2}{4M},
        \end{align*}

        and therefore
        
        \begin{align*}
            P_{R\setminus R'} =\sum_{i\in R\setminus R'} \muG{}{i}u_i^2 < \frac{6\Delta}{M}\sum_{i \in R\setminus R'} \muG{}{i} u_i
            \le 
            \frac{6\Delta}{M}\sum_{i \in R} \muG{}{i} u_i 
            =
            \frac{3\Delta^2}{M}
            \le 12 P_L.
        \end{align*}
        Thus, $P_{R'} = P_R - P_{R\setminus R'} > \frac{19}{20}P_{A_t}-12P_L \ge \frac{7}{20}P_{A_t}\ge \frac{1}{20}P_{A_t}$.
    \end{proof}

    \begin{proof}[Proof of Lemma~\ref{lemma:F_t_embeddable_in_G_t_step1}]
    $ $\newline
        \begin{enumerate}
            \item Recall that $\N{t} = \frac{\delta - 1}{\delta}U + \frac{1}{\delta}\M{t}$. Let $P:E\times V \to \RR_{\ge 0}$ be a routing of $\F$ in $H$ (where $P((u,v), w)$ indicates how much of $w$'s commodity goes through the edge $(u,v)\in E$ from $u$ to $v$). For brevity, we denote $ w_{uv} \coloneqq \MatrixSimple{M_t}{u,v}$, for every $u,v\in V$.
            Note that $\N{t} U^{-1} F$ is the flow matrix obtained by performing a weighted average on the rows of $F$ described by $\M{t}$. Explicitly, for every $v\in V, a \in V$, we have 
            \[(\N{t}U^{-1} F)(v,a) = \Matrix{\frac{\delta-1}{\delta} F + \frac{1}{\delta} \M{t}U^{-1} F}{v,a} = \frac{\delta-1}{\delta}F(v,a) + \frac{1}{\delta}\sum_{u : \{u,v\}\in \M{t}}{\frac{w_{vu}}{\muG{}{u}}F(u,a)}.\footnote{Recall the notation $\M{t}{u,v} = \M{t}{v,u} = w_{uv} = w_{vu}$, for every $u\neq v$.}\]
            The precise construction of the new embedding 
           of $\N{t}\cdot U^{-1} F =\frac{\delta - 1}{\delta}F + \frac{1}{\delta}\M{t} U^{-1}F$ is as follows. 
            \begin{algorithm}[H]
            \begin{algorithmic}[]
                \State $P' \leftarrow \frac{\delta-1}{\delta}\cdot P$.
                \For{$\{a,b\}\in \M{t}$}
                    \State $P'((a,b),a) \leftarrow P'((a, b), a) + \frac{w_{ab}}{\delta}$.
                    \State $P'((b,a),b) \leftarrow P'((b, a), b) + \frac{w_{ab}}{\delta}$.
                \EndFor
                \For{$\{a,b\}\in \M{t}$}
                    \For{$(u,v)\in H$}
                        \State $P'((u,v),a) \leftarrow P'((u,v),a) + \frac{w_{ab}}{\delta\dG{}{b}}P((u,v),b)$.
                        \State $P'((u,v),b) \leftarrow P'((u,v),b) + \frac{w_{ab}}{\delta\dG{}{a}}P((u,v),a)$.
                    \EndFor
                \EndFor
            \end{algorithmic}
            \end{algorithm}
    
            The argument that we indeed obtain an embedding of $N_t U^{-1} F=\frac{\delta-1}{\delta}F+\frac{1}{\delta} M_t U^{-1} F$ is as follows.
            We think of $P'$ in  stages. In the first stage, we scale $P$ by $\frac{\delta-1}{\delta}$. This routes $\frac{\delta-1}{\delta}\MatrixSimple{F}{v,a}$ units of flow (of $v$'s commodity) from $v$ to $a$ for every $v$ and $a$. After this stage, each vertex $v\in V$ currently sends $\frac{(\delta-1)}{\delta}\dG{}{v}$ units of its commodity. In the next stage we wish to route an additional $\frac{1}{\delta}\sum_{u : \{u,v\}\in \M{t}}{\frac{w_{vu}}{\muG{}{u}}\MatrixSimple{F}{u,a}}$ units from $v$ to $a$. To this end, we first move $\frac{w_{vu}}{\delta}$ units from $v$'s commodity to each $u$ with $\{u,v\}\in \M{t}$ (this flow is sent through the edge $\{v,u\}\in H\cup \M{t}$ from $v$ to $u$). Now, each vertex $u\in V$ ``mixes'' the commodities it got from its neighbors and routes the $\frac{\muG{}{u}}{\delta}$ ``new'' units according to $P$, as if they were of its own commodity. 
            Thus, $\frac{\MatrixSimple{F}{u,a}}{\delta}$ is the total flow sent from $u$ to $a$ when it routes the $\frac{\muG{}{u}}{\delta}$ ``new'' units and the ``share'' of $v$'s commodity from this flow is $\frac{w_{vu}}{\muG{}{u}}$.
            It follows that out of the $\frac{w_{vu}}{\delta}$ units $v$ sent to $u$, exactly $\frac{w_{vu}}{\muG{}{u}}\frac{\MatrixSimple{F}{u,a}}{\delta}$ units go to $a$.

            As for the congestion, note that on each edge of the matching $\M{t}$, we route $\frac{w_{ab}}{\delta}$ units of flow in each direction, so the congestion on each such edge is $\frac{2}{\delta}$.
            For each (directed) arc $(u, v)\in H$ and $a\in V$, we have $P'((u,v),a) = \frac{\delta - 1}{\delta}\cdot P((u,v),a) + \sum_{b : \{a,b\}\in \M{t}}{\frac{w_{ab}}{\delta\dG{}{b}}P((u,v),b)}$.
            Therefore, on each arc $(u,v)\in H$, the congestion is 
            \begin{align*}
                \sum_{a\in V}{P'((u,v),a)} &= \frac{\delta - 1}{\delta}\cdot \sum_{a\in V}{P((u,v),a)} + \sum_{a\in V}{\sum_{b : \{a,b\}\in \M{t}}{\frac{w_{ab}}{\delta\dG{}{b}}P((u,v),b)}}
                \\
                &= \frac{\delta - 1}{\delta}\cdot \sum_{a\in V}{P((u,v),a)} + \frac{1}{\delta}\sum_{b\in V}{P((u,v),b)} 
                \\
                &= \frac{\delta - 1}{\delta}\cdot c(u,v) + \frac{1}{\delta}c(u,v) = c(u,v) \ .
            \end{align*}
            \item  Let $P$ be the routing of $F$ in $H$. Note that $F U^{-1}\N{t}=\frac{\delta - 1}{\delta}F + \frac{1}{\delta}FU^{-1}\M{t}$ is the flow matrix obtained by performing a weighted average on the columns of $F$ matched by $U^{-1} \M{t}$, i.e. we average the flow received by matched vertices.
            
            We define a routing $P'$ of $F U^{-1}\N{t}$ in $H\cup \M{t}$ as follows:
            \begin{algorithm}[H]
            \begin{algorithmic}[]
                \State $P' \leftarrow P$.
                \For{$\{a,b\}\in \M{t}$}
                    \For{$u\in V$}
                        \State $P'((a,b),u) \leftarrow \frac{w_{ab}}{\delta \muG{}{a}}\cdot \MatrixSimple{F}{u,a}$.
                        \State $P'((b,a),u) \leftarrow \frac{w_{ba}}{\delta \muG{}{b}}\cdot \MatrixSimple{F}{u,b}$.
                    \EndFor
                \EndFor
            \end{algorithmic}
            \end{algorithm}
            That is, $P'$ routes the same as $P$ over edges in $H$.
            For every $\{a,b\}\in \M{t}, u \in V$, we set $P'((a,b),u) = \frac{w_{ab}}{\delta \muG{}{a}}\cdot \MatrixSimple{F}{u,a}$, and symmetrically, $P'((b,a),u) = \frac{w_{ba}}{\delta \muG{}{b}}\cdot \MatrixSimple{F}{u,b}$.
            That is, for every $\{a,b\}\in \M{t}$, $a$ (respectively, $b$) sends $\frac{w_{ab}}{\delta}$ units of its received flow, which is a mix of commodities (commodity $u$ has a $\frac{F(u,a)}{\muG{}{a}}$ share in this mix), to $b$ (respectively, $a$). 
            Note that $\Matrix{F U^{-1} \N{t}}{u, a} = \frac{\delta - 1}{\delta}\MatrixSimple{F}{u, a} + \frac{1}{\delta}\sum_{b : \{a,b\}\in \M{t}}{\frac{w_{ab}}{\muG{}{b}}\MatrixSimple{F}{u, b}}$. Thus, $P'$ routes $F U^{-1}\N{t}$ in $H$.
            
            Note that the congestion of $P'$ on each edge $e\in H$ is still $c(e)$, and on each arc $(a,b)\in \M{t}$, we have $\sum_{u\in V}{P'((a,b),u)} = \sum_{u\in V}{\frac{w_{ab}}{\delta \muG{}{u}}\cdot \MatrixSimple{F}{u,a}} = \frac{w_{ab}}{\delta}$ (the last equality follows as $F$ is $\muG$-stochastic). Therefore at most $\frac{w_{ab}}{\delta}$ flow was routed in each direction, so the congestion on the edge $\{u, v\}\in \M{t}$ is at most $\frac{2}{\delta}$.
            
            \item Note that if $F$ and $\N{t}$ are $\dG$-stochastic (see Lemma~\ref{lemma:basic_properties} (2)) then $N_{t} \cdot U^{-1} F$ is also $\dG$-stochastic, since:  
            \begin{align*}
                N_{t} U^{-1} F \cdot \1_n &= N_{t} U^{-1} \dG = \dG
                \\
                \1_n' \cdot N_{t} U^{-1} F &= \dG' \cdot U^{-1} F =  \dG' \ .
            \end{align*}
            Therefore we can apply Parts (1) and (2) to get the result.
        \end{enumerate}
    \end{proof}

    \begin{proof}[Proof of Lemma~\ref{lemma:X_as_normalized_laplacian}]
         Let $\X{t} = \Uminushalf N_t \Uminushalf$. Therefore 
            \[
            \X{t} = \normalized{\left(\frac{\delta-1}{\delta}\U + \frac{1}{\delta}\M{t}\right)} = 
            \frac{\delta-1}{\delta}I_{\ter} + \frac{1}{\delta}\normalized{\M{t}} = 
            I_{\ter} - \frac{1}{\delta}(I_{\ter} - \normalized{\M{t}}) = 
            I_{\ter} - \frac{1}{\delta}\mathcal{N}(\M{t}).
            \]
            Observe that $\mathcal{N}(\M{t})$ and $I_{\ter}-\X{t}^{4\delta}$ have the same eigenvectors. Indeed, since both matrices are $\ter$-blocked, their null space contains all vectors whose non-zero coordinates are outside $\ter$. Consider an eigenvector $v\in \RR^n$, whose non-zero coordinates are in $\ter$, of $\mathcal{N}(\M{t})$, with an eigenvalue of $\lambda$. Then,
            \begin{align*}
                \X{t} v = \left(I_{\ter}-\frac{1}{\delta}\mathcal{N}(\M{t})\right)v = v-\frac{\lambda}{\delta}v = \left(1-\frac{\lambda}{\delta}\right)v \ .
            \end{align*}
            Therefore, $(I_{\ter}-\X{t}^{4\delta}) v = \left(1-\left(1-\frac{\lambda}{\delta}\right)^{4\delta}\right)v$.
            We can see that an eigenvector $v\in \RR^n$ of $\mathcal{N}(\M{t})$, with eigenvalue $\lambda$, is an eigenvector of $I_{\ter}-\X{t}^{4\delta}$ with eigenvalue $1-(1-\frac{\lambda}{\delta})^{4\delta}$. A known property of normalized Laplacians (see Lemma~\ref{lemma:normalized-laplacian-eigenvalues}) 
            is that all of their eigenvalues are in the interval $[0,2]$, so $\lambda \in [0,2]$. For any such $\lambda$ it holds that $1-(1-\frac{\lambda}{\delta})^{4\delta}\ge1-\frac{1}{e^{4\lambda}}$. 
            Indeed, for $\lambda=0$ an equality is achieved, and for $\lambda\in (0,2]$, note that $\delta = \Theta(\log n)$, 
            so $\frac{\delta}{\lambda} \ge 1$ and we get
            \[
            1-\left(1-\frac{\lambda}{\delta}\right)^{4\delta} = 1-\left(\left(1-\frac{\lambda}{\delta}\right)^\frac{\delta}{\lambda}\right)^{4\lambda} \ge
            1- \frac{1}{e^{4\lambda}}
            \]
            Simple calculus shows that $1-\frac{1}{e^{4x}} \ge \frac{1}{3}x$ for any $x\in [0,2]$. 
            Therefore, we get that any eigenvector $v\in \RR^n$ of $\mathcal{N}(\M{t})$, with eigenvalue $\lambda$, is also an eigenvector of $I-\X{t}^{4\delta}$ with eigenvalue at least $\frac{1}{3}\lambda$. The result follows since both matrices are symmetric and therefore have a spanning basis of eigenvectors.
    \end{proof}

\subsection{Fair Cuts}\label{appendix:omitted-proofs-2}

    In this section, we prove Lemma~\ref{lemma:fair-cuts} and Theorem~\ref{theorem:trimming}, concerning the flow routines in the cut-matching step and the trimming step. 
    The proofs are based on fair cuts, a stronger notion of approximate min-cut, introduced by~\cite{LNPSsoda13}.

    \begin{definition}[Fair Cut, \cite{LNPSsoda13,li2025simple}]
        Let $G=(V, E)$ be an undirected graph with edge capacities $c\in \RR^E_{>0}$. Let $s, t$ be two vertices in $V$. For any parameter $\alpha \ge 1$, we say that a cut $(S, T)$ is a \emph{$\alpha$-fair $(s, t)$-cut} if there exists a feasible $(s, t)$-flow $f$ such that $f(u, v)\ge \frac{1}{\alpha}\cdot c(u, v)$ for every $(u, v)\in E(S, T)$, where $u\in S$ and $v\in T$.
    \end{definition}

    \begin{theorem}[Fair Cut, \texorpdfstring{\cite[Theorem 1.1]{LNPSsoda13}}{[LNPS23, Theorem 1.1]}]
    \label{theorem:fair-cuts}
        Given a graph $G=(V, E)$, two vertices $s, t\in V$ and $\epsilon\in (0, 1]$, we can compute with high probability a $(1+\epsilon)$-fair $(s, t)$-cut, together with a corresponding feasible flow $f$, in $\tilde{O}(\frac{m}{\epsilon^3})$ time.
    \end{theorem}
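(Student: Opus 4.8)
The plan is to reproduce the construction of~\cite{LNPSsoda13}: reduce the computation of a $(1+\epsilon)$-fair $(s,t)$-cut to $\tilde O(1/\epsilon)$ calls of a near-linear-time $(1+\Theta(\epsilon))$-approximate maximum-flow routine, interleaved with $O(m)$-time reachability computations. The starting point is the easy observation that a fair cut can be \emph{read off} from any feasible $(s,t)$-flow that admits no ``$\epsilon$-uncongested'' augmenting path. For a feasible flow $f$, let $G_f$ be the residual graph and let $G_f^\epsilon$ be the subgraph of $G_f$ keeping only residual arcs whose residual capacity is at least $\tfrac{\epsilon}{1+\epsilon}c(e)$, where $c(e)$ is the capacity of the underlying (undirected) edge. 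If $t$ is unreachable from $s$ in $G_f^\epsilon$, set $S$ to be the vertices reachable from $s$ in $G_f^\epsilon$; then $s\in S$, $t\notin S$, and for every edge $(u,v)$ with $u\in S$, $v\notin S$ the forward residual arc has residual capacity below $\tfrac{\epsilon}{1+\epsilon}c(u,v)$, so $f$ sends more than $\big(1-\tfrac{\epsilon}{1+\epsilon}\big)c(u,v)=\tfrac{1}{1+\epsilon}c(u,v)$ across it from $u$ to $v$. Hence $(S,V\setminus S)$ is $(1+\epsilon)$-fair with $f$ as witness, and the whole problem becomes: compute in $\tilde O(m/\epsilon^3)$ time a feasible $(s,t)$-flow with no $\epsilon$-uncongested augmenting path.

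I would build such a flow iteratively. Maintain a feasible $(s,t)$-flow $f$, initially empty. Each round: build $G_f^\epsilon$; if a BFS shows $s$ cannot reach $t$ in it, halt and extract the cut as above; otherwise invoke the approximate max-flow routine on $G_f^\epsilon$ (source $s$, sink $t$, residual capacities) to obtain an augmenting flow $g$ whose value is within a $(1-\epsilon)$ factor of the maximum flow of $G_f^\epsilon$, and update $f\gets f+g$. Since $g$ obeys the residual capacities of $G_f$, the new $f$ stays feasible in $G$ and its $s$--$t$ value only grows, so at halting time $f$ is a genuine certifying flow; the routine is chosen (or post-processed in $O(m)$ time) so that $g$ is exactly feasible rather than mildly capacity-violating.

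The crux — which I expect to be the main obstacle — is proving that $\tilde O(1/\epsilon)$ rounds suffice; this is the heart of~\cite{LNPSsoda13} and rests on a capacity-scaling plus blocking-flow potential argument. First round capacities to $\mathrm{poly}(m)$-bounded integers: a $(1+\epsilon)$-fair cut of the rounded instance is an $(1+O(\epsilon))$-fair cut of the original, and only $O(\log m)$ capacity scales matter. Within a scale one pushes (approximately) a blocking flow, so the $s$--$t$ distance in the $\epsilon$-thresholded residual graph strictly increases — equivalently, the residual max-flow value drops by a constant factor up to an additive rounding term — giving $\tilde O(1/\epsilon)$ rounds in total. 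The delicate points are: (i) $G_f^\epsilon$ is a proper subgraph of $G_f$, so a maximum flow of $G_f^\epsilon$ need not be maximum in $G_f$, and augmenting it can create new reverse arcs that re-enter $G_f^\epsilon$ — one must show via the distance potential that this cannot stall progress; and (ii) keeping the oracle's output genuinely feasible and keeping the accumulated rounding error across the scales inside an $O(\epsilon)$ budget.

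For the running time, each round is one approximate max-flow call ($\tilde O(m/\epsilon^2)$ with a near-linear $(1+\epsilon)$-approximate max-flow algorithm) plus $O(m)$ for the BFS and the flow update; $\tilde O(1/\epsilon)$ rounds give $\tilde O(m/\epsilon^3)$ in total. The ``with high probability'' clause is inherited from the randomized approximate max-flow routine that is used as a black box. Finally, rescaling $\epsilon$ by a constant at the outset turns the $\tfrac{\epsilon}{1+\epsilon}$ threshold and the $O(\epsilon)$ rounding slack back into the clean $(1+\epsilon)$ guarantee stated in the theorem.
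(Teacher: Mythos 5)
First, a remark on scope: in this paper the statement is not proven at all --- it is imported verbatim as Theorem 1.1 of \cite{LNPSsoda13} and used as a black box (the paper's own contributions, Lemma~\ref{lemma:fair-cuts} and Theorem~\ref{theorem:trimming}, only \emph{consume} this result). So there is no in-paper proof to match your attempt against; your proposal has to stand on its own as a reconstruction of the cited work.

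On its merits, your opening reduction is sound: if $f$ is feasible and the residual graph restricted to arcs of residual capacity at least $\frac{\epsilon}{1+\epsilon}c(e)$ has no $s$--$t$ path, then the reachable set $S$ yields a cut every crossing edge of which carries at least $\frac{1}{1+\epsilon}c(u,v)$ flow from $S$ to $\bar S$, i.e.\ a $(1+\epsilon)$-fair cut witnessed by $f$. The genuine gap is everything after that: the claim that $\tilde O(1/\epsilon)$ rounds of ``approximate max-flow on the thresholded residual graph'' suffice is asserted, not proven, and the mechanism you invoke to justify it does not work as stated. An approximate maximum flow is not a blocking flow, so the classical ``$s$--$t$ residual distance strictly increases'' potential does not apply; worse, the obstacle you yourself flag as point (i) --- that augmenting along $G_f^\epsilon$ creates reverse arcs of large residual capacity that re-enter $G_f^\epsilon$, and that a maximum flow of the \emph{thresholded} graph certifies nothing about the unthresholded residual graph --- is exactly what breaks the distance/value potential, and you do not resolve it. Nor do you control the interaction between the $(1-\epsilon)$ approximation error of the oracle, the rounding across $O(\log m)$ capacity scales, and the $\frac{\epsilon}{1+\epsilon}$ threshold, beyond saying the error must stay ``inside an $O(\epsilon)$ budget.'' Since the round bound is the entire technical content of the theorem (the per-round cost analysis is routine), deferring it to ``the heart of \cite{LNPSsoda13}'' leaves the proof incomplete: what you have is a correct termination criterion and a plausible outer loop, but no argument that the loop terminates in $\tilde O(1/\epsilon)$ iterations, hence no proof of the $\tilde O(m/\epsilon^3)$ bound.
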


    \begin{proof}[Proof of Lemma~\ref{lemma:fair-cuts}]
        Let $\alpha=0.1$. Consider the following $(s, t)$-flow problem on a new graph $H=(V_H = V\cup \{ s,t\}, E_H,w_H)$ constructed as follows. We begin with $G$ and set its edge capacities to $\frac{c}{1+\alpha}$. We then add the source $s$ and connect it to every $v\in A^{l}$ with capacity $m(v)$. Finally, we add the sink $t$ and connect it to every $v\in A^r$ with capacity $\frac{1}{1+\alpha}\bar{m}(v)$. 

        Using Theorem \ref{theorem:fair-cuts}, we compute (with high probability) a $(1+\alpha)$-fair cut $(S',T')$ with a corresponding feasible flow $f$ in $\tilde{O}(|E|)$ time. We split into the following two cases:

        \textbf{Case 1: $S'=\{s\}.$}
            By the definition of fair cut, for every $v\in A^l$, $f$ sends from $s$ to $v$ at least $\frac{1}{1+\alpha}m(v)$ units of flow. By decomposing $f$ into flow paths and reducing the flow sent on each path, we obtain a flow $f'$ for $H$ that sends from $s$ to $v$ exactly $\frac{1}{1+\alpha}m(v)$ units of flow, for every $v\in A^l$. We claim that $f'' = (1+\alpha)f'$ is a feasible flow for $\Pi(G)$. Indeed, in $f''$, every $v\in A^l$ receives $m(v)$ units of flow from $s$, every $v\in A^r$ sends at most $(1+\alpha)\cdot \frac{1}{1+\alpha}\bar{m}(v) = \bar{m}(v)$ units of flow to $t$, and every $e\in E$ carries at most $(1+\alpha)\cdot \frac{1}{1+\alpha}c=c$ units of flow. 
        
        \smallskip
        \textbf{Case 2: $S'\neq \{s\}.$}
            We split the cut edges $E_H(S',T')$ into three sets: (i) the edges of $G$ crossing the cut: $E_m = E_G(S'\setminus \{s\}, T'\setminus \{t\})$, (ii) the cut edges incident to $s$: $E_s = \{(s,v)\in E_H\mid v\in T'\}$, and (iii) the cut edges incident to $t$: $E_t = \{(v,t)\in E_H\mid v\in S'\}$. Let $\bar{E}_s$ be the incident edges to $s$ that are not in $E_s$. 

            We decompose $f$ into paths. By the definition of a fair cut, every edge $(s,v)\in E_s$ sends at least $\frac{1}{1+\alpha}m(v)$ units of flow. We scale down the flow in the flow paths as follows. We delete all flow paths that start with an edge of $\bar{E}_s$. 
            The rest of the flow paths are scaled down until each edge $(s,v)\in E_s$ sends exactly  $\frac{1}{1+\alpha}m(v)$ units of flow. This results in a flow $f'$. The algorithm then returns $f'' = (1+\alpha)f'$ and the cut $S = S'\setminus \{s\}$.

            Similarly to the previous case, $f''$ is a feasible flow for the flow problem $\Pi(G\setminus S)$. It remains to prove that $\Phi_{G}^{\mu}(S, V\setminus S) \defeq \frac{|E_m|}{\min \{\muG{}{S}, \muG{}{V\setminus S} \}}\le\frac{7}{c}$. We begin by upper bounding the numerator. For the remainder of the analysis, we consider again the flow $f$ and its path decomposition. By the definition of a fair cut, every flow path that traverses through $E_m$ must start with an edge in $\bar{E}_s$, and every edge in $E_m$ is at least $\frac{1}{1+\alpha}$-saturated. Therefore, 
            \begin{align}
            \label{eq:matching-1}
                \frac{c}{1+\alpha}\cdot |E_G(S, V\setminus S)| = w_H(E_m) \le (1+\alpha) w_H(\bar{E}_s) = (1+\alpha) m_t(S) \le (1+\alpha) \muG{}{S}.
            \end{align} 
            To bound the denominator we prove that $\muG{}{V\setminus S}= \Omega(\muG{}{S})$. Recall the assumption of the lemma $\bar{m}(A^r)\ge \muG{}{V}/2$. We prove that $\bar{m}(A^r \cap S)$ is small. Indeed, for every $v \in A^r \cap S$, $(v,t)\in E_t$ is crossing the cut $(S',T')$ and therefore is $\frac{1}{1+\alpha}$-saturated. Since the total flow crossing $(S',T')$ is bounded by the total source capacity (which is $m(A^l)$), we get that 
            \begin{align*}
                \frac{1}{1+\alpha}\bar{m}(A^r\cap S) = w_H(E_t) \le (1+\alpha)m(A^l) \le \frac{1+\alpha}{8} \muG{}{V},
            \end{align*}
            where the last inequality follows from the assumption of the lemma. Putting everything together,

            \begin{align*}
                \muG{}{V\setminus S} \ge \bar{m}(V\setminus S) =
                \bar{m}(A^r) - \bar{m}(A^r \cap S) \ge \frac{\muG{}{V}}{2} - \frac{(1+\alpha)^2}{8}\muG{}{V} \ge \frac{1}{3}\muG{}{V} \ge
                \frac{1}{3}\muG{}{S}.
            \end{align*}
            Together with~(\ref{eq:matching-1}), we conclude that

            \begin{align*}
                \Phi_{G}^{\mu}(S, V\setminus S) \defeq 
                \frac{|E_G(S, V\setminus S)|}{\min \{\muG{}{S}, \muG{}{V\setminus S} \}} \le
                \frac{\frac{(1+\alpha)^2}{c}\muG{}{S}}{\frac{1}{3}\muG{}{S}}
                \le \frac{7}{c}.
            \end{align*}
    \end{proof}

    As a direct corollary, we observe that if Lemma~\ref{lemma:fair-cuts} terminates with Case (2), we can ``break" the flow paths at the (fair-)cut $(S,V\setminus S)$ and obtain two flows satisfying the following guarantees.

    \begin{corollary}
    \label{cor:flow-from-cut-cut-matching}
        Let $(S,V\setminus S)$ be the cut returned by Lemma~\ref{lemma:fair-cuts}, and assume $S \neq \emptyset$. Then:
        \begin{itemize}
            \item There exists a flow of congestion $O(1)$ in $G[S]$ such that every $v\in S$ sends $ \left( \deg_G-\deg_{G[S]} \right)\!(v)$ units of flow and receives at most $O(\phi\log n\cdot\mu(v))$ units of flow. 
            \item There exists a flow of congestion $O(1)$ in $G[V\setminus S]$ such that every $v\in V\setminus S$ sends $ \left( \deg_G-\deg_{G[V\setminus S]} \right)\!(v)$ units of flow and receives at most $O(\phi\log n\cdot\mu(v))$ units of flow.
        \end{itemize}
    \end{corollary}

    \begin{proof}
        In the notation of the proof of Lemma~\ref{lemma:fair-cuts}, the assumption $S\neq \emptyset$ corresponds to Case (2) of the proof. Let $f$ be the flow returned by the fair-cut routine, and recall  $E_m = E_G(S'\setminus \{s\}, T'\setminus \{t\}) = E_G(S, V\setminus S),$ the set of edges of $G$ that cross the fair-cut. Recall that $f$ is routed in $H$, the auxiliary graph, with constant congestion. In particular, $f$ may be routed in $G$ with congestion $O(c)$.

        To construct the two flows required by the corollary, we proceed as follows. Decompose $f$ into flow paths and remove the paths that do not traverse any edge of $E_m$. Denote the remaining flow by $f_1$. By the definition of a fair-cut, every edge $(u,v)\in E_m$ is at least $\frac{1}{1+\alpha}$-saturated (recall $\alpha = 0.1$). Hence $f_1$ routes at least $\frac{1}{1+\alpha}c = \Theta \left(\frac{1}{(1+\alpha)\phi \log n} \right)$ units of flow through each $(u,v)\in E_m$ in the direction from $S$ to $V\setminus S$. We then scale down the flow paths of $f_1$ so exactly $\frac{1}{1+\alpha}c$ units of flow are routed on each edge of $E_m$. Denote this scaled flow by $f_2$. Finally, define $f_3 \defeq \frac{1+\alpha}{c}f_2$. Then $f_3$ routes exactly one unit of flow across each edge in $E_m$ and has congestion $O(1)$ in $G$. Moreover, for every $v\in V$ the source and sink capacities at $v$ in $f_3$ are bounded by $O\left(\frac{1+\alpha}{c} \muG{}{v} \right) = O(\phi \log n \muG{}{v})$. 

        Observe that $f_3$ consists of flow paths that cross each edge of the cut $(S, V \setminus S)$ once.
        We obtain the two desired flows as follows: 
        \begin{itemize}
            \item The first flow is obtained by taking the prefix of each path until the crossing edge, deleting $s$ and then reversing the path so that the flow is routed from the cut $E_m = E_G(S,V\setminus S)$ to $S$.
            \item The second flow is obtained by taking the suffix of each path that starts at the crossing edge, deleting $t$.
        \end{itemize}
    \end{proof}

    Recall that during the cut-matching game, the matching player produces cuts $S_1,\ldots,S_t$ (up to some $t < T$) by applying Lemma~\ref{lemma:fair-cuts}. Since these cuts are disjoint, we get the following corollary.

    \begin{corollary}
    \label{cor:flow-from-cut-cut-matching-global}
        Let $(A,\comp{A})$ be the cut returned by Case (2) or Case (3) of Theorem~\ref{theorem:main-result}. Then, there exists a flow of congestion $O(1)$ in $G[\comp{A}]$ such that every $v\in \comp{A}$ sends $ \left( \deg_G-\deg_{G[\comp{A}]} \right)\!(v)$ units of flow and receives at most $O(\phi\log n\cdot\mu(v))$ units of flow. 
    \end{corollary}

    \begin{proof}
        Follows from Corollary~\ref{cor:flow-from-cut-cut-matching}.
    \end{proof}

    \begin{proof}[Proof of Theorem~\ref{theorem:trimming}]
        Assume $|E(A,\comp{A})| > 0$, as otherwise $A$ is a $(\phi,\mu)$-expander and we set $A'=A$. 
        Consider the following $(s, t)$-flow problem on a new graph $H=(V_H = A\cup \{ s,t\}, E_H)$ constructed as follows. 
        Start from $G$, and contract $V\setminus A$ into a single vertex, labeled as the source $s$ (ignoring self loops at $s$). 
        Next, set the capacity of each edge ($G[A]$ and edges touching $s$)
        to be $\frac{3}{\phi}$. Finally, add a new sink vertex $t$ and connect it to each vertex $v\in A$ with an edge of capacity $\mu(v)$. Let $\alpha=0.1$, and compute a $(1+\alpha)$-fair cut $(S, T)$ of $H$ 
        in time $\tilde{O}(m)$. Let $A'=T\setminus\{t\}$.\footnote{We later show that $A'\neq \emptyset$.} We now show that $A'$ satisfies the requirements of the lemma.
        
        First, suppose for contradiction that $G[A']$
        is not a $\frac{\phi}{6}$-expander with respect to $\mu$. Then, there is a violating set $U\subseteq A'$ (see Figure~\ref{fig:A-Aprime-U}), satisfying $0 < \mu(U)\le \mu(A'\setminus U)$ and 
        \[
            |E(U, A'\setminus U)|<\frac{\phi}{6}\mu(U) ~.
        \]
        Since $A$ is a near $(\phi,\muG)$-expander with respect to $\mu$ then so is $A'$ 
        and therefore,

     \begin{figure}[t]
        \begin{center}
    \includegraphics[scale=0.65]{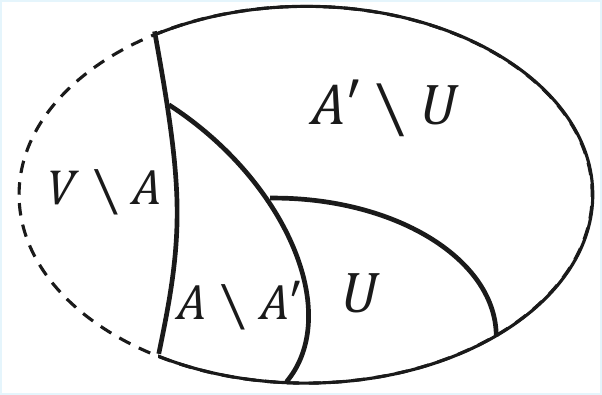}
        \end{center}
        \caption{Visualization of the sets $A,A'$ and $U$.}
        \label{fig:A-Aprime-U}
    \end{figure}
        
        \[
            |E(U, V\setminus U)|\ge\phi\mu(U) ~.
        \]
        Taking the difference of the two inequalities above, 
        \[
            |E(U, V\setminus A')| = |E(U, V\setminus U)| - |E(U, A'\setminus U)| \ge \frac{5\phi}{6}\mu(U) ~. 
        \]
        Since $(S, T)$ is a $(1+\alpha)$-fair cut, there is a feasible flow $f$ that saturates each edge of $E_H(S, T)$ up to a factor of $\frac{1}{1+\alpha}$. Each edge $(u, v)$ in $E(U, V\setminus A')$ corresponds to an edge in $E_H(S, T)$ of capacity $\frac{3}{\phi}$, and the flow $f$ must send at least $\frac{1}{1+\alpha}\cdot\frac{3}{\phi}\ge \frac{2}{\phi}$ flow along the edge (in the direction from $S$ to $T$). In total, the amount of flow entering $U$ in $H$ is at least 
        \[
            \frac{2}{\phi}|E(U, V\setminus A')|\ge\frac{2}{\phi}\cdot\frac{5\phi}{6}\mu(U)=\frac{5}{3}\mu(U) ~.
        \]
        On the other hand, at most $\mu(U)$ flow can leave $U$ along the edges incident to $t$, and at most 
        \[
            \frac{3}{\phi}|E(U, A'\setminus U)|\le\frac{3}{\phi}\cdot\frac{\phi}{6}\mu(U) = \frac{1}{2}\mu(U)
        \]
        flow can cross from $U$ to $A'\setminus U$. This totals at most $\frac{3}{2}\mu(U)$ flow that can exit $U$, which is strictly less than the $\ge\frac{5}{3}\mu(U)$ flow that enters $U$ (as $\mu(U)>0$), a contradiction. Thus $G[A']$ is a $\frac{\phi}{6}$-expander.
        
        Finally, we show the properties $\mu(A')\ge\mu(A)-\frac{4}{\phi}|E(A, \comp{A})|$ and $|E(A', \comp{A'})|\le 2|E(A, \comp{A})|$ promised by the lemma. Since $(S, T)$ is a $(1+\alpha)$-fair cut, it is in particular a $(1+\alpha)$-approximate $(s, t)$-mincut. 
        Since $(\{s\}, V_H\setminus \{s\})$ is an $(s, t)$-cut of capacity $\frac{3}{\phi}|E(A, \comp{A})|$, it follows that the cut $(S, T)$ has capacity at most $(1+\alpha)\frac{3}{\phi}\cdot|E(A, \comp{A})|$. To establish the first inequality of the theorem, observe that every vertex $v\in A\setminus A'$ lies on the $S$-side of the cut $(S, T)$, and therefore contributes $\mu(v)$ to the capacity of $(S, T)$ through the edge $(v, t)$. Summing over all $v\in A\setminus A'$, we obtain
        \[
            \mu(A\setminus A')\le|E(S, T)|\le(1+\alpha)\cdot\frac{3}{\phi}|E(A, \comp{A})|\le\frac{4}{\phi}|E(A, \comp{A})| ~,
        \]
        which proves the first property. Moreover, since $\mu(A) \ge\frac{9|E(A, \comp{A})|\cdot}{\phi}$, it follows that $A'\neq \emptyset$. 
        For the second inequality, note that each edges $(u, v)$ in $E(A', \comp{A'})$ corresponds to an edge in $E(S, T)$ with  capacity $\frac{3}{\phi}$, so summing over all such edges 
        \[
            \frac{3}{\phi}|E(A', \comp{A'})|\le|E(S, T)|\le(1+\alpha)\frac{3}{\phi}|E(A, \comp{A})| ~,
        \]
        meaning $|E(A', \comp{A'})|\le(1+\alpha)|E(A, \comp{A})|$ which proves the second property.
        
    \end{proof}

    From the auxiliary flow problem defined in the preceding proof, we deduce the following corollary.

    \begin{corollary}
        Let $(A', \comp{A'})$ be the cut returned from the trimming step (Theorem~\ref{theorem:trimming}) in the proof of Theorem \ref{theorem:main-result3}. Then, the following flows exist.
        \begin{enumerate}
            \item A flow of congestion $O(1)$ in $G[A']$, such that every $v\in A'$ sends $\left( \deg_G-\deg_{G[A']}\right)\!(v)$ units of flow and receives at most $O(\phi\cdot\muG{}{v})$ units of flow. 
            \item A flow of congestion $O(1)$ in $G[\comp{A'}]$ such that every $v\in \comp{A'}$ sends $ \left( \deg_G-\deg_{G[\comp{A'}]} \right)\!(v)$ units of flow and every $v\in \comp{A}$ receives at most $O(\phi\log n\cdot\mu(v))$ units of flow.\footnote{Note that vertices in $\comp{A'}\setminus \comp{A}$ don't receive any flow.} 
        \end{enumerate} 
    \end{corollary}

    \begin{proof}
         Let $H$ be the auxiliary graph in the proof of Theorem~\ref{theorem:trimming}, and let $f$ denote the flow constructed therein. We decompose $f$ into flow paths from $s$ to $t$ and scale down each flow path so that the flow crossing every edge of $E_H(S,T\setminus \{t\})$ equals $\frac{2}{\phi}$ (because $(S, T)$ is a fair cut). Let $f_1$ denote the resulting flow. Let $f_2 \defeq \frac{\phi}{2}f_1$.  Note that $f_2$ routes exactly one unit of flow across every edge of $E_H(S,T\setminus \{t\})$, at most $\frac{3}{2}$ units of flow across the remaining edges of~$H\setminus \{t\}$, and at most $\frac{\phi }{2}\muG{}{v}$ units across each edge $(v, t)$.
    
        For the first item, we delete from $f_2$ all flow paths that contain an edge $(v,t)$, where $t$ is the sink and $v\in S = \{s\}\cup \comp{A'}\setminus \comp{A}$. Denote the resulting flow by $f_3$.
        The desired flow is then obtained by restricting the flow paths in $f_3$ to their suffix, starting from the edge crossing the fair-cut $(S,T)$.

        We proceed to the second item, and prove it in two steps. 
        We first route flow from the new cut edges $E_G(A',\comp{A'})\setminus E_G(A, \comp{A})$ to $E_G(A, \comp{A})$, and then we route the flow from $E_G(A, \comp{A})$ to $\comp{A}$ so that every $v\in \comp{A}$ receives at most $O(\phi \log n\cdot\muG{}{v})$ units of flow. 
        
        For the first step,  
        observe that there is a flow $f^{(1)}$ of congestion $\frac{3}{2}=O(1)$ in $G[\comp{A'}]$ such that every $v\in \comp{A'}$ sends $\left(\deg_G-\deg_{G[\comp{A'}]}\right)\!(v)$ units of flow and each $u\in \comp{A}$ receives at most $\frac{3}{2}\left(\deg_G-\deg_{G[\comp{A}]}\right)\!(u)$ units of flow. This flow is obtained by restricting the flow paths of $f_3$ to their prefix from $s$ until the first edge in the fair-cut $(S,T)$, and then reversing them, so that they route flow from the cut $(S, T)$ towards $s$. In the last edge $(v_1, s)$ of each flow path, instead of sending the flow to $s$, we will send it to the vertex $v_0$ of $G[\comp{A}]$ corresponding to the edge $(v_0, v_1) \in E(A, \comp{A})$.
        
        In the second step, note that the cut $(A,\comp{A})$ was produced in the cut-matching step. Therefore, by Corollary~\ref{cor:flow-from-cut-cut-matching-global}, there exists a flow $f^{(2)}$ of congestion $O(1)$ in $G[\comp{A}]$ such that every $v\in \comp{A}$ sends $\left(\deg_G-\deg_{G[\comp{A}]}\right)\!(v)$ units of flow and receives at most $O(\phi \log n \cdot \muG{}{v})$ units of flow.

        The final flow is as follows. We first route according to $f^{(1)}$.  Each $v\in \comp{A}$ must then send at most $\frac{3}{2}\!\left(\deg_G-\deg_{G[\comp{A'}]}\right)\!(v)$ units of flow, which are the units it receives from $f^{(1)}$. We route this flow using a scaled version of $f^{(2)}$, specifically $\frac{3}{2}f^{(2)}$, which incurs a congestion of $O(1)$.
    \end{proof}

    \section{Algebraic Tools}
	\label{appendix:matrix_inequalities}
    \begin{fact}[\texorpdfstring{\cite[Exercise \uppercase\expandafter{\romannumeral 9\relax}.3.3]{bhatia2013matrix}}{[Bha13, Exercise IX.3.3]}]
    \label{fact:book_trace_identities}
    	Let $X,Y,A \in \RR^{n \times n}, m \in \NN$, then 
    	\begin{enumerate}
    	    \item $\tr(XY) = \tr(YX)$.
    	    \item $\tr(A^{2m}) \le \tr((A \cdot A')^m)$.
    	\end{enumerate}
	\end{fact}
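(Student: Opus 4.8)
For part (1) I would just unfold the definition of the trace: $\tr(XY)=\sum_i (XY)_{ii}=\sum_i\sum_j X_{ij}Y_{ji}$, and relabeling the two summation indices turns this into $\sum_j\sum_i Y_{ji}X_{ij}=\tr(YX)$. There is nothing more to do here.

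For part (2) the plan is to pass to the eigenvalues $\lambda_1,\dots,\lambda_n\in\CC$ of $A$ (with multiplicity) and its singular values $\sigma_1\ge\cdots\ge\sigma_n\ge 0$. First, the eigenvalues of $A^{2m}$ are the $\lambda_i^{2m}$, and since the non-real eigenvalues of the real matrix $A$ come in conjugate pairs, the sum $\sum_i\lambda_i^{2m}$ is real; therefore
\[
\tr\bigl(A^{2m}\bigr)=\sum_i\lambda_i^{2m}\le\left|\sum_i\lambda_i^{2m}\right|\le\sum_i|\lambda_i|^{2m}.
\]
Second, $AA'$ is symmetric positive semidefinite with eigenvalues $\sigma_i^2$, so $(AA')^m$ has eigenvalues $\sigma_i^{2m}$ and $\tr\bigl((AA')^m\bigr)=\sum_i\sigma_i^{2m}$. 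Thus everything reduces to proving $\sum_i|\lambda_i|^{2m}\le\sum_i\sigma_i^{2m}$.

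This last inequality --- Weyl's majorant inequality --- is where the real content lies, and I would derive it from the log-majorization $(|\lambda_i|)\prec_{\log}(\sigma_i)$. Concretely, ordering $|\lambda_1|\ge\cdots\ge|\lambda_n|$, one has $\prod_{i=1}^{k}|\lambda_i|\le\prod_{i=1}^{k}\sigma_i$ for every $k$; to keep this self-contained I would prove it via exterior powers, since $\Lambda^k A$ has eigenvalues $\{\prod_{i\in S}\lambda_i:|S|=k\}$ and singular values $\{\prod_{i\in S}\sigma_i:|S|=k\}$, and the spectral radius of any matrix is at most its operator norm. Because $t\mapsto t^{2m}$ has the property that $s\mapsto e^{2ms}$ is convex and increasing, this log-majorization upgrades to $\sum_i|\lambda_i|^{2m}\le\sum_i\sigma_i^{2m}$, completing the proof. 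As a sanity check, the case $m=1$ drops out of a one-line entrywise estimate: $A_{ij}A_{ji}\le\tfrac12\bigl(A_{ij}^2+A_{ji}^2\bigr)$, and summing over all $i,j$ gives $\tr(A^2)\le\tr(AA')$.

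The only real obstacle is this eigenvalue--singular-value comparison; the rest is bookkeeping. Since the statement is exactly Exercise IX.3.3 of~\cite{bhatia2013matrix}, the pragmatic alternative is simply to cite it rather than reprove Weyl's inequality.
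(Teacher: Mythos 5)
Your part (1) is the same index manipulation as the paper's, and your part (2) is correct, but it takes a genuinely different route. The paper avoids Weyl's majorant theorem entirely: it takes a Schur decomposition $A=VUV^*$ with $U$ upper triangular, notes $\tr(A^{2m})=\sum_i U_{ii}^{2m}$ and $\tr((AA')^m)=\tr(S^m)$ for $S=UU^*$, and then finishes with a two-line diagonal estimate --- writing $S=RDR^*$ with $R$ unitary, Jensen's inequality applied to the probability weights $|R_{ij}|^2$ gives $(S^m)_{ii}\ge (S_{ii})^m$, and $S_{ii}=\sum_j|U_{ij}|^2\ge|U_{ii}|^2$, so $\tr(S^m)\ge\sum_i|U_{ii}|^{2m}\ge\sum_i U_{ii}^{2m}$. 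Your route instead reduces to the spectral statement $\sum_i|\lambda_i|^{2m}\le\sum_i\sigma_i^{2m}$ and derives it from the log-majorization $\prod_{i\le k}|\lambda_i|\le\prod_{i\le k}\sigma_i$ via exterior powers plus the standard transfer lemma for functions $f$ with $f\circ\exp$ convex and increasing. Both are sound; what differs is the machinery. The paper's argument is shorter and entirely self-contained (no compound matrices, no majorization theory), which suits its role as an appendix lemma, whereas yours identifies the general principle (Weyl's inequality, valid for arbitrary symmetric gauge-type sums) at the cost of importing two nontrivial standard facts that you sketch rather than prove. Your reduction steps (realness of $\tr(A^{2m})$, conjugate pairing of eigenvalues, $\tr((AA')^m)=\sum_i\sigma_i^{2m}$) are all correct, and citing Exercise IX.3.3 of Bhatia, as you suggest, would also be legitimate; but if you want a self-contained proof, the Schur-decomposition computation is the cheaper path.
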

    \begin{proof} 
        \begin{enumerate}
	        \item Follows since $\tr(XY) = \sum_{i=1}^n {\sum_{j=1}^n{\X{}{i,j}\Y{}{j,i}}}
	        = \sum_{j=1}^n {\sum_{i=1}^n{\Y{}{j,i}\X{}{i,j}}}
	        = \tr(YX)$.
	        \item Using Schur decomposition, we decompose $A = V U V^{*}$, where $V \in \CC^{n\times n}$ is a unitary matrix and $U\in \CC^{n\times n}$ is an upper triangular matrix. Observe that
	        \[\tr(A^{2m}) = \tr((V U V^{*})^{2m}) = \tr(V U^{2m} V^{*}) = \tr(V^{*} V U^{2m} ) = \tr(U^{2m}) = \sum_{i=1}^n{\U{}{i,i}^{2m}}. \]
	        
	        Let $S = U U^*$, then $\Smat{}{i,i} = \sum_{j=1}^n{|\U{}{i,j}|^2} \ge |\U{}{i,i}|^2$, for every $i$. Moreover, $S$ is PSD Hermetian matrix so $S$ can be decomposed to $S = R D R^*$, where $R$ is a unitary matrix and $D$ is diagonal with non-negative real entries. Since $R$ is unitary, it holds that $\sum_{j=1}^n|\R{}{i,j}|^2 = 1$, for every $i$. By convexity, we get
	        \begin{align*}
	            \Matrix{S^m}{i,i} &= \Matrix{R D^m R^*}{i,i} = \sum_{j=1}^n{|\R{}{i,j}|^2 \D{}{j,j}^m} \ge
    	        \left( \sum_{j=1}^n{|\R{}{i,j}|^2 \D{}{j,j}} \right)^m \\&=
    	        (\Smat{}{i,i})^m \ge |\U{}{i,i}|^{2m}.
	        \end{align*}
	        
	        Therefore 
	        \begin{align*}
	             \tr((A \cdot A')^m) &= \tr((V U U^* V^*)^m) = \tr(V (U U^*)^m V^*) = \tr((U U^*)^m)\\
	             &= \tr(S^m) = \sum_{i=1}^n \Matrix{S^m}{i,i} \ge \sum_{i=1}^n (\U{}{i,i})^{2m} 
	             = \tr(A^{2m}).
	        \end{align*}
	    \end{enumerate}
	\end{proof}
    \begin{lemma}[In the proof of~\texorpdfstring{\cite[Theorem \uppercase\expandafter{\romannumeral 9\relax}.3.5]{bhatia2013matrix}}{[Bha13, Theorem IX.3.5]}]
    \label{lemma:book_inequality}
    	Let $X,Y \in \RR^{n \times n}$ be symmetric matrices. Then for any positive integer $k$,  $\tr((XY)^{2^k}) \le \tr(X^{2^k}Y^{2^k})$.
	\end{lemma}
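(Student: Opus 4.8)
The plan is to prove the inequality by induction on $k$, using throughout that $X,Y$ symmetric forces $(XY)' = YX$, together with the cyclic invariance of the trace.

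For the base case $k=1$ I would use the skew-symmetry trick. The matrix $XY-YX$ is skew-symmetric, so $0\le \tr\big((XY-YX)(XY-YX)'\big) = -\tr\big((XY-YX)^2\big)$. Expanding $(XY-YX)^2$ and using $\tr(XY^2X)=\tr(YX^2Y)=\tr(X^2Y^2)$ and $\tr(YXYX)=\tr((XY)^2)$, the inequality $\tr\big((XY-YX)^2\big)\le 0$ becomes $2\tr((XY)^2)-2\tr(X^2Y^2)\le 0$, i.e. $\tr((XY)^2)\le\tr(X^2Y^2)$.

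For the inductive step, suppose the statement holds for $k$ and all symmetric matrices, and let $M=(XY)^{2^k}$, whose transpose is $(YX)^{2^k}$. The elementary inequality $\tr(M^2)\le\tr(MM')$ — which is just $\sum_{i,j}M_{ij}M_{ji}\le\sum_{i,j}M_{ij}^2$ — gives
\[
\tr\big((XY)^{2^{k+1}}\big)=\tr(M^2)\le\tr(MM')=\tr\big((XY)^{2^k}(YX)^{2^k}\big).
\]
Writing $R=(XY)^{2^{k-1}}$ (so $R'=(YX)^{2^{k-1}}$ and $(XY)^{2^k}=R^2$, $(YX)^{2^k}=(R')^2$), cyclicity of the trace rewrites the right-hand side as
\[
\tr\big(R^2(R')^2\big)=\tr\big((RR')(R'R)\big)=\tr(AB),\qquad A:=RR',\ \ B:=R'R,
\]
with $A,B$ positive semidefinite and $\tr(A^2)=\tr(B^2)$ (again by cyclicity). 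By Cauchy--Schwarz for the Frobenius inner product, $\tr(AB)\le\sqrt{\tr(A^2)\tr(B^2)}=\tr(A^2)$. It remains to bound $\tr(A^2)$, where $A=(XY)^{2^{k-1}}(YX)^{2^{k-1}}$: I would reorganize the word $(RR')^2$ under the trace, splitting at its doubled-letter positions $X^2$ and $Y^2$ into palindromic (hence positive-semidefinite) blocks, so as to peel it down to an instance governed by the induction hypothesis — ultimately the hypothesis for the symmetric matrices $X^2$ and $Y^2$, which gives $\tr\big((X^2Y^2)^{2^k}\big)\le\tr\big((X^2)^{2^k}(Y^2)^{2^k}\big)=\tr\big(X^{2^{k+1}}Y^{2^{k+1}}\big)$. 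As a sanity check, for $k=1$ one has $A=XY^2X$ and $\tr(A^2)=\tr(X^2Y^2X^2Y^2)=\tr((X^2Y^2)^2)$ exactly, so the induction closes immediately at the first step.

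The step I expect to be the real obstacle is precisely this last one: reducing $\tr(A^2)$ to a genuinely smaller instance of the statement. It is purely a matter of rearranging words in $X$ and $Y$ under the trace and recognizing the correct positive-semidefinite groupings, but that bookkeeping is delicate for general $k$ — for $k\ge 2$ the word $(RR')^2$ is no longer cyclically equal to $(X^2Y^2)^{2^k}$, so one must iterate the same splitting and Cauchy--Schwarz idea rather than simply identify the traces. By contrast, the skew trick, the inequality $\tr(M^2)\le\tr(MM')$, and the single Cauchy--Schwarz application are all immediate. Everything above is for real symmetric $X,Y$; the Hermitian case is identical with the transpose replaced by the conjugate transpose.
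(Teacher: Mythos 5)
Your base case is correct, and your overall architecture (induct on $k$ by reducing the claim for $X,Y$ to the claim for the symmetric matrices $X^2,Y^2$) matches the paper's. The gap is in the inductive step, exactly where you flag it, and it is not just bookkeeping. After applying $\tr(M^2)\le\tr(MM')$ with $M=(XY)^{2^k}$ you hold $\tr\bigl((XY)^{2^k}(YX)^{2^k}\bigr)$, i.e.\ the trace of the word $A^{m}(A')^{m}$ with $A=XY$, $m=2^k$; what the induction actually needs is the trace of the word $(AA')^{m}=(XY^2X)^{m}$, which is cyclically equal to $(X^2Y^2)^{m}$ and hence falls to the hypothesis for $X^2,Y^2$. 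For $m\ge2$ these two words are not cyclically equivalent, and the bridging inequality $\tr\bigl(A^{m}(A')^{m}\bigr)\le\tr\bigl((AA')^{m}\bigr)$ — equivalently $\sum_i\sigma_i(A^{m})^2\le\sum_i\sigma_i(A)^{2m}$ — is a genuine singular-value (Weyl log-majorization) fact, not a consequence of cyclicity plus another Cauchy--Schwarz. Your proposed iterated splitting does not close: for instance at $k=2$, after your Cauchy--Schwarz step you must bound $\tr\bigl((RR')^2\bigr)$ with $RR'=(XYX)Y^2(XYX)=QY^2Q$, $Q=XYX$ symmetric; cyclicity and the base case give $\tr\bigl((Q^2Y^2)^2\bigr)\le\tr(Q^4Y^4)$, but $\tr(Q^4Y^4)=\tr\bigl(XYX^2YX^2YX^2YXY^4\bigr)$ is neither $\tr(X^8Y^8)$ nor an instance of the induction hypothesis, so the argument stalls.

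The paper avoids this by first proving the stronger inequality $\tr(A^{2m})\le\tr\bigl((AA')^{m}\bigr)$ for \emph{every} $m$ (Fact~\ref{fact:book_trace_identities}(2)), via the Schur decomposition $A=VUV^*$: $\tr(A^{2m})=\sum_i U_{ii}^{2m}$, while $(UU^*)_{ii}\ge|U_{ii}|^2$ and convexity yield $\tr\bigl((AA')^m\bigr)=\tr\bigl((UU^*)^m\bigr)\ge\sum_i|U_{ii}|^{2m}$. Applied with $A=XY$ and $m=2^{k-1}$ this gives $\tr\bigl((XY)^{2^k}\bigr)\le\tr\bigl((XY^2X)^{2^{k-1}}\bigr)=\tr\bigl((X^2Y^2)^{2^{k-1}}\bigr)$ in one step, and the induction closes immediately. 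The missing ingredient in your write-up is precisely this generalization of your $\tr(M^2)\le\tr(MM')$ from $m=1$ to arbitrary $m$; with it proved (or cited), your argument becomes the paper's.
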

	\begin{proof}
	    By Fact~\ref{fact:book_trace_identities}, we get
	    \begin{align*}
            \tr((XY)^{2^k}) \le \tr(((XY)(XY)')^{2^{k-1}}) = 
            \tr((XY^2 X)^{2^{k-1}}) = \tr((X^2 Y^2)^{2^{k-1}}).
	    \end{align*}
	    The result follows by induction.
	\end{proof}
    \begin{theorem} [Symmetric Rearrangement; \texorpdfstring{\cite[Theorem A.2]{orecchia2008partitioning}}{[OSVV08, Theorem A.2]}]
	\label{theorem:symmetric_rearrangement}
	    Let $X, Y\in \RR^{n\times n}$ be symmetric matrices. Then for any positive integer $k$,
	    \begin{align*}
	        \tr\left((XYX)^{2^k}\right) \le \tr\left(X^{2^k}Y^{2^k}X^{2^k}  \right)\ .
	    \end{align*}
	\end{theorem}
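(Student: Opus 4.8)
The plan is to reduce the statement to Lemma~\ref{lemma:book_inequality} by exploiting the cyclic invariance of the trace. First I would record the elementary algebraic identity
\[
(XYX)^{n} \;=\; XY\,(X^{2}Y)^{n-1}\,X \qquad\text{for every positive integer } n,
\]
which follows by a one-line induction: the base case $n=1$ is immediate, and in the inductive step $(XYX)^{n+1} = XY(X^{2}Y)^{n-1}X\cdot XYX = XY(X^{2}Y)^{n-1}X^{2}YX = XY(X^{2}Y)^{n}X$. Applying this with $n=2^{k}$ and then moving the trailing factor $X$ to the front of the product via $\tr(AB)=\tr(BA)$ (Fact~\ref{fact:book_trace_identities}) gives
\[
\tr\!\big((XYX)^{2^{k}}\big) \;=\; \tr\!\big(XY\,(X^{2}Y)^{2^{k}-1}\,X\big) \;=\; \tr\!\big(X^{2}Y\,(X^{2}Y)^{2^{k}-1}\big) \;=\; \tr\!\big((X^{2}Y)^{2^{k}}\big).
\]

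Next I would apply Lemma~\ref{lemma:book_inequality} to the pair $X^{2}$ and $Y$. This is legitimate because $X$ is symmetric, hence $X^{2}$ is symmetric, so the hypotheses of the lemma are met. This yields
\[
\tr\!\big((X^{2}Y)^{2^{k}}\big) \;\le\; \tr\!\big((X^{2})^{2^{k}}\,Y^{2^{k}}\big) \;=\; \tr\!\big(X^{2^{k+1}}\,Y^{2^{k}}\big).
\]
Finally, one more use of cyclicity of the trace rewrites the right-hand side in the desired symmetric form, $\tr\!\big(X^{2^{k+1}}Y^{2^{k}}\big) = \tr\!\big(X^{2^{k}}\,Y^{2^{k}}\,X^{2^{k}}\big)$, and chaining the three displays completes the proof.

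There is no real obstacle here: all manipulations are finite-dimensional trace identities, so cyclicity is unconditional, and the entire content of the argument is the reduction $(XYX)^{2^{k}}\mapsto (X^{2}Y)^{2^{k}}$ inside the trace, after which Lemma~\ref{lemma:book_inequality} does the work. The only point worth flagging when writing it up carefully is to note explicitly that Lemma~\ref{lemma:book_inequality} is being invoked with a genuinely symmetric first argument $X^{2}$, since that lemma's statement requires both matrices to be symmetric.
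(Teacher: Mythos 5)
Your proof is correct and follows essentially the same route as the paper: rewrite $\tr\bigl((XYX)^{2^k}\bigr)$ as $\tr\bigl((X^2Y)^{2^k}\bigr)$ by cyclicity, apply Lemma~\ref{lemma:book_inequality} to the symmetric pair $X^2$ and $Y$, and cycle back to the symmetric form. The only difference is that you spell out the intermediate identity $(XYX)^{n}=XY(X^{2}Y)^{n-1}X$ explicitly, which the paper leaves implicit.
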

	\begin{proof}
	    By Lemma~\ref{lemma:book_inequality} and Fact~\ref{fact:book_trace_identities}
	    \begin{align*}
	        \tr\left((XYX)^{2^k}\right) = \tr((X^2Y)^{2^k}) \le 
	        \tr(X^{2^{k+1}}Y^{2^k}) = \tr(X^{2^k}Y^{2^k}X^{2^k})
	    \end{align*}
	\end{proof}
	
	The following lemma is standard.
	\begin{lemma}
	\label{lemma:laplacian_mult}
	    For every weighted graph $G = (V,E,w)$ and $v \in \mathbb{R}^{n}$, where $n=|V|$, it holds that $v^t \mathcal{L}(G) v = \sum_{(s,t)\in E}{w(s,t)\cdot (v_s - v_t)^2}$.
	\end{lemma}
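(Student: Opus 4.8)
The plan is to expand the quadratic form directly from the definition $\mathcal{L}(G) = D - A$, where $A$ is the weighted adjacency matrix of $G$, $d = A\1_n$, and $D = \diag(d)$. First I would write
\[
v' \mathcal{L}(G) v \;=\; v' D v - v' A v \;=\; \sum_{s} d_s v_s^2 \;-\; \sum_{s}\sum_{t} A_{st} v_s v_t ,
\]
and then substitute $d_s = \sum_t A_{st}$ to rewrite the first term, obtaining $\sum_s \sum_t A_{st} v_s^2 - \sum_s\sum_t A_{st} v_s v_t$, where all sums range over ordered pairs $(s,t) \in V\times V$.

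Next I would symmetrize the first term using that $A$ is symmetric: $\sum_s\sum_t A_{st} v_s^2 = \sum_s\sum_t A_{st} v_t^2$, so this term equals $\tfrac12 \sum_s\sum_t A_{st}(v_s^2 + v_t^2)$. Combining with the cross term gives
\[
v' \mathcal{L}(G) v \;=\; \tfrac12 \sum_s\sum_t A_{st}\bigl(v_s^2 - 2 v_s v_t + v_t^2\bigr) \;=\; \tfrac12 \sum_s\sum_t A_{st} (v_s - v_t)^2 .
\]

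Finally I would translate the double sum over ordered index pairs into a sum over the edge set: each undirected edge $\{s,t\}$ of $G$ with weight $w(s,t)$ accounts for the two ordered pairs $(s,t)$ and $(t,s)$, each contributing $A_{st} = A_{ts} = w(s,t)$ times the same value $(v_s - v_t)^2$, so the factor $\tfrac12$ is exactly absorbed; any self-loops contribute $0$ since $(v_s - v_s)^2 = 0$. This yields $v' \mathcal{L}(G) v = \sum_{(s,t)\in E} w(s,t)\,(v_s - v_t)^2$, as claimed. There is no substantive obstacle here, since this is the textbook identity for the Laplacian quadratic form; the only thing to be careful about is the bookkeeping between the double sum over ordered pairs and the sum over undirected edges, together with the harmless treatment of self-loops.
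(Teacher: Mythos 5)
Your proof is correct and is the canonical expansion of the Laplacian quadratic form; the paper itself states this lemma as ``standard'' and omits the proof entirely, so there is nothing to diverge from. The bookkeeping you flag --- the factor $\tfrac12$ being absorbed by counting each undirected edge as two ordered pairs, and self-loops contributing zero --- is handled correctly.
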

    
    \begin{lemma}
    \label{lemma:matching_laplacian}
        Let $G = (V=[n],E,w)$ be a weighted undirected graph. For every $A \in \RR^{n\times n}$, it holds that $\tr(A' \mathcal{L}(G) A) =\sum_{\{i,j\}\in E}{w(i,j) \cdot \norm{\A{}{i}-\A{}{j}}_2^2}$, where $\A{}{i}\in \RR^n$ is row $i$ of $A$.
	\end{lemma}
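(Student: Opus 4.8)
The plan is to reduce this matrix identity to the scalar quadratic-form identity of Lemma~\ref{lemma:laplacian_mult} by splitting the trace column-by-column. First I would note that, since $\mathcal{L}(G)$ is symmetric, the $k$-th diagonal entry of $A'\mathcal{L}(G)A$ is exactly $(\A{}{,k})'\,\mathcal{L}(G)\,\A{}{,k}$, where $\A{}{,k}\in\RR^n$ denotes the $k$-th column of $A$; hence, as the trace is the sum of diagonal entries,
\[
\tr\!\left(A'\mathcal{L}(G)A\right) = \sum_{k=1}^{n} (\A{}{,k})'\,\mathcal{L}(G)\,\A{}{,k}.
\]

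Next, for each fixed $k$ I would apply Lemma~\ref{lemma:laplacian_mult} with the vector $v = \A{}{,k}$, which gives $(\A{}{,k})'\,\mathcal{L}(G)\,\A{}{,k} = \sum_{\{i,j\}\in E} w(i,j)\,(\A{}{i,k} - \A{}{j,k})^2$. Substituting this into the previous display and swapping the (finite) order of summation yields
\[
\tr\!\left(A'\mathcal{L}(G)A\right) = \sum_{\{i,j\}\in E} w(i,j) \sum_{k=1}^{n} (\A{}{i,k} - \A{}{j,k})^2.
\]

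Finally I would identify the inner sum: since $\A{}{i}$ and $\A{}{j}$ are the $i$-th and $j$-th rows of $A$ regarded as column vectors in $\RR^n$, the $k$-th coordinate of $\A{}{i}-\A{}{j}$ is $\A{}{i,k}-\A{}{j,k}$, so $\sum_{k=1}^n (\A{}{i,k}-\A{}{j,k})^2 = \norm{\A{}{i}-\A{}{j}}_2^2$, and the claimed identity follows. Every step here is a routine manipulation, so there is no genuine obstacle; the only point requiring a bit of care is keeping the row/column conventions of the $\A{}{\cdot}$ notation straight — $\A{}{,k}$ is the $k$-th column while $\A{}{i}$ is the $i$-th row — so that the coordinatewise identification in the last step is applied to the right objects.
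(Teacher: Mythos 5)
Your proof is correct and follows exactly the paper's argument: decompose the trace into the diagonal entries $(\A{}{,k})'\mathcal{L}(G)\A{}{,k}$, apply Lemma~\ref{lemma:laplacian_mult} to each column, swap the order of summation, and recognize the inner sum over $k$ as $\norm{\A{}{i}-\A{}{j}}_2^2$. No issues.
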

    \begin{proof}
	    The $i$'th column of $A$ is $A'(i)\in\RR^n$. By Lemma~\ref{lemma:laplacian_mult}
	    \begin{align*}
	        \tr(A' \mathcal{L}(G) A) &= \sum_{k=1}^{n}{\left(A'(k)\right)' \mathcal{L}(G) A'(k)} 
    	    = \sum_{k=1}^{n}{\sum_{\{i,j\}\in E}{w(i,j)\cdot(\A{}{i,k}-\A{}{j,k})^2}}\\
    	    &=  \sum_{\{i,j\}\in E}{\sum_{k=1}^{n}{w(i,j)\cdot(\A{}{i,k}-\A{}{j,k})^2}}
    	    = \sum_{\{i,j\}\in E}{w(i,j)\cdot\norm{\A{}{i}-\A{}{j}}_2^2}
	    \end{align*}
	\end{proof}
    \begin{lemma}
    \label{lemma:normalized-laplacian-eigenvalues}
    	Let $A \in \RR^{n \times n}$ be a symmetric $\dGG$-stochastic $\ter$-blocked matrix, then the eigenvalues of the normalized Laplacian $\mathcal{N}(A) =I_{\ter} - \Dminushalf A \Dminushalf$ are in the range $[0,2]$. 
	\end{lemma}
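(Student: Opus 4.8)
The plan is to show the operator sandwich $0 \preceq \mathcal{N}(A) \preceq 2I$, which is equivalent to the claimed eigenvalue bound since $\mathcal{N}(A)$ is symmetric (it is $\Dminushalf \mathcal{L}(A) \Dminushalf$ with $\mathcal{L}(A)=D-A$ symmetric). First I would record two structural facts. Since $A$ is $\muG$-stochastic, its row sums are $\muG$, so the degree matrix in Definition~\ref{def:laplacian_normalized_laplacian} is exactly $D = \diag(A\1_n) = \diag(\muG) = U$; hence $\Dminushalf = \Uminushalf$ (pseudo-inverse), $\Dminushalf D \Dminushalf = I_{\ter}$, and $\mathcal{N}(A) = \Dminushalf \mathcal{L}(A) \Dminushalf = I_{\ter} - \Dminushalf A \Dminushalf$, consistent with the statement. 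Second, in every application of this lemma in the paper (e.g.\ $\F{t}$ and $\M{t}$) the matrix $A$ is entrywise nonnegative, so we may regard it as the weighted adjacency matrix, with nonnegative weights and possibly self-loops, of a graph $H$ on $V$.

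For the lower bound I would use the standard Laplacian quadratic form. For any $v \in \RR^n$ set $y = \Dminushalf v$; since $\Dminushalf$ is symmetric, $v'\mathcal{N}(A)v = (\Dminushalf v)'\mathcal{L}(A)(\Dminushalf v) = y'\mathcal{L}(A)y$, and by Lemma~\ref{lemma:laplacian_mult} applied to $H$ this equals $\sum_{\{s,t\}\in E(H)} A(s,t)(y_s - y_t)^2 \ge 0$ because the weights $A(s,t)$ are nonnegative. Hence $\mathcal{N}(A) \succeq 0$, so all its eigenvalues are $\ge 0$.

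For the upper bound I would control $2I_{\ter} - \mathcal{N}(A)$. Using $I_{\ter} = \Dminushalf D \Dminushalf$ gives $2I_{\ter} - \mathcal{N}(A) = I_{\ter} + \Dminushalf A \Dminushalf = \Dminushalf (D+A) \Dminushalf$, so it suffices to show the signless Laplacian $D+A$ is PSD. With the same substitution $y = \Dminushalf v$, and using $D_{ss} = \sum_t A_{st}$ together with the symmetry of $A$, a one-line computation gives
\[
y'(D+A)y \;=\; \sum_{s,t} A_{st}\,(y_s^2 + y_s y_t) \;=\; \tfrac{1}{2}\sum_{s,t} A_{st}\,(y_s + y_t)^2 \;\ge\; 0,
\]
again because every $A_{st} \ge 0$. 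Thus $2I_{\ter} - \mathcal{N}(A) \succeq 0$, i.e.\ $\mathcal{N}(A) \preceq 2I_{\ter} \preceq 2I$; combined with $\mathcal{N}(A) \succeq 0$, the Rayleigh-quotient characterization yields that every eigenvalue of $\mathcal{N}(A)$ lies in $[0,2]$.

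There is no serious technical obstacle here — this is the classical fact that a normalized Laplacian has spectrum in $[0,2]$, and the argument is routine once the setup is right. The only points needing care are bookkeeping: recognizing that $\muG$-stochasticity forces $D = \diag(\muG)$ so that the $\Uminushalf$-conjugation behaves correctly (the coordinates outside $\ter$, where $\Dminushalf$ vanishes, contribute only the eigenvalue $0$, which is in range), and observing that both halves of the argument use that $A$ is entrywise nonnegative — without this neither $D-A$ nor $D+A$ need be PSD, and indeed the statement would fail.
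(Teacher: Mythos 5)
Your proof is correct and follows essentially the same route as the paper's: positivity of $\mathcal{L}(A)=D-A$ for the lower bound, and positivity of the signless Laplacian $D+A=U+A$ (via the quadratic form $\sum_{s,t}A_{st}(y_s+y_t)^2\ge 0$) for the upper bound. Your explicit remarks that $D=U$ by $\muG$-stochasticity and that entrywise nonnegativity of $A$ is essential are welcome bits of bookkeeping that the paper leaves implicit.
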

    \begin{proof}
	    Since $\mathcal{L}(A) \ge 0$ we get that $\mathcal{N}(A) = \Dminushalf \mathcal{L}(A) \Dminushalf \ge 0$. Therefore, it is left to show that the eigenvalues of $\Dminushalf A \Dminushalf$ are at least $-1$. Let $X = U + A$. Note that $X \ge 0$. Indeed, similarly to Lemma~\ref{lemma:laplacian_mult} ,for any $v\in \RR^n$
	    \[v'Xv = \sum_{i=1}^n\sum_{j=1}^n \A{}{i,j} (v_i+v_j)^2 \ge 0.\]
	    Hence, $I_{\ter} + \Dminushalf A \Dminushalf = \Dminushalf X \Dminushalf \ge 0$ which means that the eigenvalues of $\Dminushalf A \Dminushalf$ are at least $-1$.
	\end{proof}
    \begin{lemma}
    \label{lemma:cs_median}
        Let $\{v_i\}_{i=1}^k$ be a set of $k$ vectors in $\RR^n$. Then,
        \begin{align*}
            k\norm{\frac{\sum_{i=1}^k{v_i}}{k}}_2^2 \le \sum_{i=1}^k{\norm{v_i}_2^2}
        \end{align*}
    \end{lemma}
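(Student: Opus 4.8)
The plan is to reduce the inequality to a coordinatewise application of the Cauchy--Schwarz inequality. First I would rewrite the left-hand side: since $\norm{\frac{1}{k}\sum_{i=1}^k v_i}_2^2 = \frac{1}{k^2}\norm{\sum_{i=1}^k v_i}_2^2$, multiplying by $k$ shows that the claim is equivalent to $\norm{\sum_{i=1}^k v_i}_2^2 \le k \sum_{i=1}^k \norm{v_i}_2^2$. So it suffices to prove this latter form.

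Next, for each coordinate $\ell \in [n]$, set $a_\ell = \left((v_1)_\ell, \dots, (v_k)_\ell\right) \in \RR^k$ and apply the Cauchy--Schwarz inequality against the all-ones vector $\1_k$: $\left(\sum_{i=1}^k (v_i)_\ell\right)^2 = \langle a_\ell, \1_k\rangle^2 \le \norm{\1_k}_2^2 \,\norm{a_\ell}_2^2 = k\sum_{i=1}^k (v_i)_\ell^2$. Summing this over all coordinates $\ell \in [n]$ gives $\norm{\sum_{i=1}^k v_i}_2^2 = \sum_{\ell}\left(\sum_i (v_i)_\ell\right)^2 \le k \sum_{\ell}\sum_i (v_i)_\ell^2 = k\sum_i \norm{v_i}_2^2$, which is exactly the equivalent form from the first step, completing the proof after dividing by $k$.

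As an alternative, the same bound follows in one line from convexity of the map $x \mapsto \norm{x}_2^2$ via Jensen's inequality applied to the uniform average of the $v_i$, or by expanding $\norm{\sum_i v_i}_2^2 = \sum_{i,j}\langle v_i, v_j\rangle$ and bounding each term by $\langle v_i, v_j\rangle \le \frac12\left(\norm{v_i}_2^2 + \norm{v_j}_2^2\right)$, which after summing over the $k^2$ pairs again yields the factor $k$. There is no genuine obstacle here; the only point requiring care is bookkeeping the powers of $k$ correctly when passing between the $\frac{1}{k}\sum$ and $\sum$ forms of the statement.
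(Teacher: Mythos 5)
Your proof is correct and matches the paper's argument: the paper also applies the Cauchy--Schwarz inequality coordinatewise (the bound $\left(\sum_i v_{i,j}\right)^2 \le k\sum_i v_{i,j}^2$ is exactly your pairing with the all-ones vector) and then sums over coordinates. The only difference is presentational — you make the reduction to the $\norm{\sum_i v_i}_2^2 \le k\sum_i\norm{v_i}_2^2$ form explicit first — so this is essentially the same proof.
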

    \begin{proof}
        By the Cauchy-Schwartz inequality, for all $j\in [n]$, 
        \begin{align*}
            k\left({\frac{\sum_{i=1}^k{v_{i,j}}}{k}}\right)^2 = \frac{1}{k}\left({\sum_{i=1}^k{v_{i,j}}}\right)^2 \le \frac{1}{k}\cdot k\cdot \sum_{i=1}^k{v_{i,j}^2} = \sum_{i=1}^k{v_{i,j}^2}
        \end{align*}
        Therefore,
        \begin{align*}
            k\norm{\frac{\sum_{i=1}^k{v_i}}{k}}_2^2 = \sum_{j=1}^n{k\left({\frac{\sum_{i=1}^k{v_{i,j}}}{k}}\right)^2} \le \sum_{j=1}^n{\sum_{i=1}^k{v_{i,j}^2}} = \sum_{i=1}^k{\sum_{j=1}^n{v_{i,j}^2}} = \sum_{i=1}^k{\norm{v_i}_2^2}
        \end{align*}
    \end{proof}
    
\section{Projection Lemmas}
\label{appendix:projection}
	The following fact is quite standard in the analysis of algorithms in the cut-matching framework. 
    \begin{lemma} [Gaussian Behavior of Projections; Variation of~\texorpdfstring{\cite[Lemma 3.5]{khandekar2009graph}}{[KRV09, Lemma 3.5]}]
    \label{lemma:projection_simple}
	    Let $\{v_i\}_{i=1}^k$ be a set of $k\le n$ vectors in $\RR^n$. For $i\in [k]$, let $u_i=\langle v_i,r\rangle$ be the projection of $v_i$ onto a random unit vector $r\in \mathbb{S}^{n-1}\subseteq \RR^n$. Then:
	    \begin{enumerate}
	        \item $\EE[u_i^2]=\frac{1}{n}\norm{v_i}_2^2$ for all $i$.
	        \item 
	        \[
    		        u_i^2 \le \frac{\alpha\log n}{n}\norm{v_i}_2^2
            \]
    	    holds for all $i$ with probability of at least $1-\frac{1}{n^{\alpha/8}}$, for any $\alpha\ge8$.
	    \end{enumerate}
	\end{lemma}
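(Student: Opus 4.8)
The plan is to reduce both statements, via the rotational invariance of the uniform measure on $\mathbb{S}^{n-1}$, to a bound on a single coordinate of a random unit vector. Fix an index $i$ and put $v=v_i$, $u=u_i$. Since the law of $r$ is invariant under orthogonal transformations, we may assume $v=\norm{v}_2\,e_1$, so that $u=\norm{v}_2\,r_1$, where $r_1$ denotes the first coordinate of $r$. For part (1), by symmetry $\EE[r_j^2]$ is independent of $j$, while $\sum_{j=1}^n r_j^2=1$ holds identically; hence $\EE[r_1^2]=1/n$ and $\EE[u^2]=\norm{v}_2^2\,\EE[r_1^2]=\tfrac1n\norm{v}_2^2$.

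For part (2), I would use the representation $r=g/\norm{g}_2$ with $g\sim N(0,I_n)$. Writing $\hat v_i=v_i/\norm{v_i}_2$, we have $u_i^2=\norm{v_i}_2^2\,\langle\hat v_i,g\rangle^2/\norm{g}_2^2$, and $\langle\hat v_i,g\rangle\sim N(0,1)$ for each $i$ (the variables across different $i$ need not be independent, but only a union bound will be used). Now bound the numerator by the standard one-dimensional Gaussian tail $\Pr\!\big[\langle\hat v_i,g\rangle^2\ge s\big]\le e^{-s/2}$ and the denominator by a chi-square lower-tail estimate such as $\Pr\!\big[\norm{g}_2^2\le n/2\big]\le e^{-n/16}$ (Laurent--Massart). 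Choosing $s=\Theta(\alpha\log n)$ and taking a union bound over the $k\le n$ vectors, one gets that with probability at least $1-n^{-\alpha/8}$ the events $\langle\hat v_i,g\rangle^2\le s$ (for all $i$) and $\norm{g}_2^2\ge n/2$ hold simultaneously; on this event $u_i^2\le \norm{v_i}_2^2\cdot\frac{s}{n/2}\le \frac{\alpha\log n}{n}\norm{v_i}_2^2$ for every $i$, after choosing the constant in $s$ appropriately.

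I expect the only delicate point to be the bookkeeping of constants: one must split the failure probability between the $k\le n$ coordinate-tail events (total probability $\approx n\cdot e^{-s/2}$) and the single chi-square event ($\approx e^{-n/16}$), and verify that the resulting sum is $\le n^{-\alpha/8}$ for every $\alpha\ge 8$, which forces the choice $s\approx\tfrac{\alpha}{2}\log n$ and uses the hypothesis $\alpha\ge 8$ to make the term $n^{1-\alpha/4}$ dominate. Everything else is immediate from rotational symmetry together with textbook Gaussian concentration, so the argument is routine.
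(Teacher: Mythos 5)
Your argument is correct, but it is not the route the paper takes: the paper does not actually prove this lemma from first principles. It treats the single-vector tail bound as a black box imported from Lemma 3.5 of Khandekar--Rao--Vazirani, and Remark~C.2 only performs the union-bound bookkeeping (failure probability $n^{-\alpha/4}$ per vector, hence $n^{1-\alpha/4}$ over $k\le n$ vectors, which is at most $n^{-\alpha/8}$ once $\alpha\ge 8$). You instead re-derive the tail bound itself via the representation $r=g/\norm{g}_2$ with $g\sim N(0,I_n)$, splitting into a one-dimensional Gaussian tail for $\langle\hat v_i,g\rangle^2$ and a chi-square lower tail for $\norm{g}_2^2$; your part (1) by exchangeability of the coordinates of $r$ is also the standard argument and is fine. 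What your approach buys is self-containedness, at the cost of one extra failure event. The only caveat — which you yourself flag — is that your total failure probability is $n^{1-\alpha/4}+e^{-n/16}$ rather than $n^{1-\alpha/4}$, and at the boundary $\alpha=8$ the first term already equals $n^{-\alpha/8}$ exactly, so the stated bound fails by the additive $e^{-n/16}$ term. This is harmless (use a marginally sharper Gaussian tail such as $\Pr[Z\ge t]\le\tfrac{1}{t\sqrt{2\pi}}e^{-t^2/2}$, or restate for $\alpha\ge 9$, or absorb the term for $n$ large), but since the lemma as written claims the bound for \emph{any} $\alpha\ge 8$, you should close that last bit of slack explicitly rather than leave it as "bookkeeping".
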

	\begin{remark}\label{remark:projection}
	    Lemma~\ref{lemma:projection_simple} follows from~\cite[Lemma 3.5]{khandekar2009graph} as follows: For a single projection, condition (2) fails to hold with probability at most $e^{-\alpha \log n/4}=n^{-\alpha/4}$. By the union bound it fails for some vector with probability $n^{1-\alpha/4}$. For this failure probability to be small we need $\alpha$ to be sufficiently large; If $\alpha \ge 8$ then the failure probability is at most $n^{-\alpha/8}$.
	\end{remark}

    \begin{corollary}
	\label{lemma:projection_pairs}
	    Let $\{v_i\}_{i=1}^k$ be a set of $k\le n$ vectors in $\RR^n$. For $i\in [k]$, let $u_i=\langle v_i,r\rangle$ be the projection of $v_i$ onto a random unit vector $r\in \mathbb{S}^{n-1}\subseteq \RR^n$. Then:
	    \begin{enumerate}
	        \item $\EE[u_i^2]=\frac{1}{n}\norm{v_i}_2^2$ for all $i$, and $\EE[(u_i - u_j)^2]=\frac{1}{n}\norm{v_i - v_j}_2^2$ for all pairs $(i,j)$.
	        \item 
	        \begin{align*}
	            u_i^2 &\le \frac{\alpha\log n}{n}\norm{v_i}_2^2
    		    \\
    		    (u_i - u_j)^2 &\le \frac{\alpha\log n}{n}\norm{v_i - v_j}_2^2
	        \end{align*}
    	    holds for all indices $i$ and pairs $(i,j)$ with probability of at least $1-\frac{1}{n^{\alpha/8}}$, for some constant $\alpha\ge16$.
	    \end{enumerate}
	\end{corollary}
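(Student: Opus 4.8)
The plan is to derive this corollary directly from Lemma~\ref{lemma:projection_simple} by applying that lemma to an enlarged collection of vectors. First I would form the set $\mathcal{V} = \{v_i\}_{i=1}^{k} \cup \{v_i - v_j : 1 \le i < j \le k\}$. This set has at most $k + \binom{k}{2} \le n^2$ vectors (using $k \le n$). Note that projection is linear in the vector being projected, so $\langle v_i - v_j, r\rangle = u_i - u_j$; thus the projection of the difference vector $v_i - v_j$ onto $r$ is exactly $u_i - u_j$. Part (1) of the corollary is then immediate from part (1) of Lemma~\ref{lemma:projection_simple}: $\EE[u_i^2] = \frac{1}{n}\|v_i\|_2^2$ for the original vectors, and $\EE[(u_i-u_j)^2] = \EE[\langle v_i - v_j, r\rangle^2] = \frac{1}{n}\|v_i - v_j\|_2^2$ for the difference vectors.

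For part (2), I would apply part (2) of Lemma~\ref{lemma:projection_simple} to the set $\mathcal{V}$, but I need to account for the fact that $\mathcal{V}$ can have up to $n^2$ elements rather than $n$. Re-examining the proof of Lemma~\ref{lemma:projection_simple} (via Remark~\ref{remark:projection}): a single projection violates the bound $u^2 \le \frac{\beta \log N}{N}\|v\|^2$ with probability at most $N^{-\beta/4}$, where $N$ is the ambient dimension. Here the ambient dimension is still $n$, but the union bound runs over $|\mathcal{V}| \le n^2$ vectors, so the failure probability is at most $n^2 \cdot n^{-\beta/4} = n^{2 - \beta/4}$. Taking $\beta = \alpha$ with $\alpha \ge 16$ gives failure probability at most $n^{2 - \alpha/4} \le n^{-\alpha/8}$ (since $\alpha/4 - 2 \ge \alpha/8$ is equivalent to $\alpha/8 \ge 2$, i.e. $\alpha \ge 16$). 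Hence, with probability at least $1 - n^{-\alpha/8}$, simultaneously $u_i^2 \le \frac{\alpha \log n}{n}\|v_i\|_2^2$ for all $i$ and $(u_i - u_j)^2 \le \frac{\alpha \log n}{n}\|v_i - v_j\|_2^2$ for all pairs, which is exactly the claim (the constant $\alpha \ge 16$ is what appears in the statement).

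The only mild subtlety — and the one point where I would be careful — is that Lemma~\ref{lemma:projection_simple} as literally stated assumes $k \le n$ vectors, whereas here I am feeding it up to $n^2$ vectors; so I cannot invoke it verbatim but must instead re-run its one-line union-bound argument with the larger count, as above. Everything else is routine: linearity of the inner product to handle the differences, and tracking the adjusted constant in the exponent. I expect no real obstacle — this is a packaging step that sets up the ``projection of pairs'' form used in the proof of Lemma~\ref{lemma:our_potential_step_2}.
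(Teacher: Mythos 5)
Your proposal is correct and matches the paper's proof, which is exactly the one-line application of Lemma~\ref{lemma:projection_simple} and Remark~\ref{remark:projection} to the enlarged set $\{v_i\}\cup\{v_i-v_j\}$. You are in fact slightly more careful than the paper in spelling out why the union bound over up to $n^2$ vectors forces the constant up from $\alpha\ge 8$ to $\alpha\ge 16$, which is the right justification for the exponent $n^{-\alpha/8}$ in the statement.
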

	\begin{proof}
	    Apply Lemma~\ref{lemma:projection_simple} and Remark~\ref{remark:projection} on the set of vectors
	    \begin{align*}
	        \{v_i : i\in [k]\}\cup \{v_i - v_j : i,j\in[k]\}
	    \end{align*}
	    
	\end{proof}

\end{document}